\def\confversion{0}
\def\ifconf{\ifnum\confversion=1}
\def\ifnotconf{\ifnum\confversion=0}
\def\showauthornotes{1}
\def\showkeys{0}
\def\showdraftbox{1}
\definecolor{darkred}{rgb}{0.5,0,0}
\definecolor{darkgreen}{rgb}{0,0.35,0}
\definecolor{darkblue}{rgb}{0,0,0.55}
\DeclareMathAlphabet{\mathpazocal}{OMS}{zplm}{m}{n}
\DeclareRobustCommand*{\mathcal}[1]{\mathpazocal{#1}}
\DeclareMathAlphabet{\symsf}{OT1}{LibertinusSans-LF}{m}{n}
\newcommand{\Authornote}[2]{{\sf\small\color{red}{[#1: #2]}}}
\newcommand{\Authorcomment}[2]{{\sf \small\color{gray}{[#1: #2]}}}
\newcommand{\Authorfnote}[2]{\footnote{\color{red}{#1: #2}}}
\newcommand{\Authornote}[2]{}
\newcommand{\Authorcomment}[2]{}
\newcommand{\Authorfnote}[2]{}
\newtheorem{theorem}{Theorem}[section]
\newtheorem{lemma}[theorem]{Lemma}
\newtheorem{corollary}[theorem]{Corollary}
\newtheorem{claim}[theorem]{Claim}
\newtheorem{observation}[theorem]{Observation}
\theoremstyle{definition}
\newtheorem{definition}[theorem]{Definition}
\newtheorem{fact}[theorem]{Fact}
\newtheorem{example}[theorem]{Example}
\theoremstyle{remark}
\newtheorem{remark}[theorem]{Remark}
\newtheorem{proposition}[theorem]{Proposition}
\theoremstyle{remark}
\def\FullBox{\hbox{\vrule width 6pt height 6pt depth 0pt}}
\def\qed{\ifmmode\qquad\FullBox\else{\unskip\nobreak\hfil
\penalty50\hskip1em\null\nobreak\hfil\FullBox
\parfillskip=0pt\finalhyphendemerits=0\endgraf}\fi}
\def\qedsketch{\ifmmode\Box\else{\unskip\nobreak\hfil
\penalty50\hskip1em\null\nobreak\hfil$\Box$
\parfillskip=0pt\finalhyphendemerits=0\endgraf}\fi}
\renewenvironment{proof}{\begin{trivlist} \item {\bf Proof:~~}}
   {\qed\end{trivlist}}
\newenvironment{proofsketch}{\begin{trivlist} \item {\bf
Proof Sketch:~~}}
  {\qedsketch\end{trivlist}}
\newcommand{\oset}[3][0ex]{%
  \mathrel{\mathop{#3}\limits^{
    \vbox to#1{\kern-2\ex@
    \hbox{$\scriptstyle#2$}\vss}}}}
\newcommand{\stackalign}[1]{
	\vcenter{
		\Let@ \restore@math@cr \default@tag
		\baselineskip\fontdimen10 \scriptfont\tw@
		\advance\baselineskip\fontdimen12 \scriptfont\tw@
		\lineskip\thr@@\fontdimen8 \scriptfont\thr@@
		\lineskiplimit\lineskip
		\ialign{\hfil$\m@th\scriptstyle##$&$\m@th\scriptstyle{}##$\crcr
			#1\crcr
		}
	}
}
\def\matr#1{\symsf{#1}}
\newcommand{\sW}{\mathrm{s}\mathcal{W}}
\DeclarePairedDelimiter\set{\lbrace}{\rbrace}
\newcommand{\bias}{\mathsf{bias}}
\let\latexcirc=\circ
\newcommand{\ccirc}{\mathbin{\mathchoice
  {\xcirc\scriptstyle}
  {\xcirc\scriptstyle}
  {\xcirc\scriptscriptstyle}
  {\xcirc\scriptscriptstyle}
}}
\newcommand{\xcirc}[1]{\vcenter{\hbox{$#1\latexcirc$}}}\let\circ\ccirc
\def\AA{\oset{\circ}{\matr  A}}
\def\BB{\oset{\circ}{\matr  B}}
\def\EE{\oset{\circ}{\matr  E}}
\def\PP{\oset{\circ}{\matr  P}}
\def\XX{\oset{\circ}{\matr  X}}
\def\JJ{\oset{\circ}{\matr  J}}
\def\to{\rightarrow}
\def\epsilon{\varepsilon}
\def\phi{\varphi}
\def\cal{\mathcal}
\newcommand{\ie}{i.\@e.\@,\xspace}
\newcommand{\eg}{e.\@g.\@,\xspace}
\newcommand{\etal}{et al.\@\xspace}
\newcommand{\cf}{{\it cf.\@,}}
\newcommand{\mper}{\,.}
\newcommand{\mcom}{\,,}
\newcommand{\reg}{\mathrm{reg}}
\newcommand{\E}{{\mathbb E}}
\newcommand{\C}{{\mathbb C}}
\newcommand{\N}{{\mathbb{N}}}
\newcommand{\Z}{{\mathbb Z}}
\newcommand{\F}{{\mathbb F}}
\newcommand{\cH}{\mathcal{H}}
\newcommand{\cD}{\mathcal{D}}
\newcommand{\cL}{\mathcal{L}}
\newcommand{\cW}{\mathcal{W}}
\newcommand{\cX}{\mathcal{X}}
\newcommand{\vsX}{\cX_\cH}
\newcommand{\vsXY}{\mathcal{XY}_\cH}
\newcommand{\abs}[1]{\ensuremath{\left\lvert #1 \right\rvert}}
\newcommand{\norm}[1]{\ensuremath{\left\lVert #1 \right\rVert}}
\newcommand{\opnorm}[1]{\norm{#1}_{\textup{op}}}
\newcommand{\ip}[2] {\ensuremath{\left\langle #1 , #2 \right\rangle}}
\newcommand{\Esymb}{\mathbb{E}}
\def\Ex#1{%
    \ProbabilityRender{\Esymb}{#1}%
}
\def\ProbabilityRender#1#2{%fancy probability command
  \@ifnextchar\bgroup%
  {\renderwithdist{#1}{#2}}
   {\singlervrender{#1}{#2}}
}
\def\singlervrender#1#2{%
   \ensuremath{\mathchoice
       {{#1}\left[ #2 \right]}
       {{#1}[ #2 ]}
       {{#1}[ #2 ]}
       {{#1}[ #2 ]}
   }
}
\def\renderwithdist#1#2#3{%
   \@ifnextchar\bgroup
   {\superfancyrender{#1}{#2}{#3}}
   {\ensuremath{\mathchoice
      {\underset{#2}{#1}\left[ #3 \right]}
      {{#1}_{#2}[ #3 ]}
      {{#1}_{#2}[ #3 ]}
      {{#1}_{#2}[ #3 ]}
     }
   }
}
\def\superfancyrender#1#2#3#4#5{
   \ensuremath{\mathchoice
      {\underset{#1}{{#1}}\left#4 #3 \right#5}
      {{#1}_{#2}#4 #3 #5}
      {{#1}_{#2}#4 #3 #5}
      {{#1}_{#2}#4 #3 #5}
   }
}
\newcommand{\calH}{{\cal H}}
\newcommand{\poly}{{\mathrm{poly}}}
\newcommand{\polylog}{{\mathrm{polylog}}}
\newcommand{\vecn}[1]{\frac{1}{\abs{V_Y} } \vec{#1}}
\newcommand{\tr}{\mathrm{tr}}
\newcommand{\fksl}{\mathfrak{sl} }
\DeclareMathOperator{\Cay}{Cay}
\DeclareMathOperator{\spec}{Spec}
\DeclareMathOperator{\Tr}{Tr}
\newcommand{\slin}{\textrm{\textup{SL}}}
\newcommand{\expan}{\lambda}
\newcommand{\pif}{{\matr \Pi}_{\matr f} }
\newcommand{\pifw}{ \oset{\circ}{\matr \Pi}_{\matr f} }
\newcommand{\projh}{\matr P_\cH}
\newcommand{\lifth}{\matr L_\cH}
\newcommand{\permc}{J}
\newcommand{\bv}{v}
\newcommand{\bw}{{w}}
\newcommand{\bz}{{z}}
 \newcommand{\edits}[1]{{#1}}
\begin{document}

\title{Almost Ramanujan Expanders from Arbitrary Expanders via Operator Amplification}

\author{Fernando Granha Jeronimo\thanks{University of Illinois, Urbana-Champaign. This material is based upon work supported by the NSF grant CCF-1900460.}
                      \and
       Tushant Mittal \thanks{Stanford University.} \and
       Sourya Roy \thanks{University of Iowa.}    \and
       Avi Wigderson\thanks{Institute for Advanced Study, Princeton. This work was partially supported by NSF grant CCF-1900460.\\ Any opinions, findings and conclusions or recommendations expressed in this material are those of
                                       the author(s) and do not necessarily reflect the views of the NSF.}
}
\date{\today}

\date{}

\maketitle
%\vspace{-1.35cm}
%\draftbox
%\vspace{-0.5cm}
%% \thispagestyle{empty}

We give an efficient algorithm that transforms any bounded degree
expander graph into another that achieves almost optimal (namely,
near-quadratic, $d \leq 1/\lambda^{2+o(1)}$) trade-off between (any
desired) spectral expansion $\lambda$ and degree $d$. Furthermore, the
algorithm is \emph{local}: every vertex in the new graph can compute its new neighbors
as a subset of its original neighborhood of radius
$O(\log(1/\lambda))$. The optimal quadratic trade-off is known as the
Ramanujan bound, so our construction gives almost Ramanujan expanders
from arbitrary expanders.

The locality of the transformation preserves structural properties of
the original graph, and thus has many consequences. Applied to Cayley
graphs, our transformation shows that \emph{any} expanding finite
group has almost Ramanujan expanding generators. Similarly, one can
obtain almost optimal explicit constructions of quantum expanders,
dimension expanders, monotone expanders, etc., from existing
(suboptimal) constructions of such objects. Another consequence is a
"derandomized" random walk on the original (suboptimal) expander with
almost optimal convergence rate. Our transformation also applies when
the degree is not bounded or the expansion is not constant.

We obtain our results by a generalization of Ta-Shma's technique in
his breakthrough paper [STOC 2017], used to obtain explicit almost
optimal binary codes. Specifically, our spectral amplification extends
Ta-Shma's analysis of bias amplification from scalars to matrices of
arbitrary dimension in a very natural way. Curiously, while Ta-Shma's
explicit bias amplification derandomizes a well-known probabilistic
argument (underlying the Gilbert--Varshamov bound), there seems to be
no known probabilistic (or other existential) way of achieving our
explicit {operator-valued} spectral amplification.

\newpage

\ifnotconf
\pagenumbering{roman}
\tableofcontents
%\newpage\todototoc\listoftodos
\clearpage
\fi
%\endgroup
 
 %\newpage

\pagenumbering{arabic}
\setcounter{page}{1}

\section{Introduction}

\subsection{Background}

Expander graphs are fundamental objects in computer science and
mathematics, possessing a variety of applications in both
fields~\cite{HooryLW06,L12}. Indeed, expanders (and expansion) play a
central role in numerous algorithmic advances, cryptographic schemes,
circuit and proof complexity lower bounds, derandomization and
pseudorandom generators, error correcting codes, ... and are central to
structural results in group theory, algebra, number theory, geometry,
combinatorics. 

A central \emph{quality} measure of expansion of an infinite family of
$d$-regular graphs $\set{X_i}_{i \in \mathbb{N}}$ is the
second largest singular value of its normalized adjacency matrix,
which we denote by $\lambda(X_i) \in [0,1]$. We say that a family
$\set{X_i}_{i \in \mathbb{N}}$ is $\lambda$-expanding, for some fixed
$\lambda < 1$, if $\lambda(X_i) \le \lambda$ for every member $X_i$ of
the family. The smaller is the expansion parameter $\lambda$, the more
spectrally expanding is the family. (For simplicity, we will sometimes
discuss single graphs rather than families, and say that $X$ is a
$(d,\lambda)$-expander if it is $d$-regular and satisfies $\lambda(X)
\leq \lambda$.)

A random $d$-regular graph with $d\geq 3$ is easily shown
\cite{Pinsker73} to be $.99$-expanding with high probability, giving
rise to the existence of expanding families. The quest to explicitly\footnote{{See~\cref{def:explicit} for a formal definition.}}
construct bounded degree expanders started with Margulis' paper
\cite{Mar73}, and has been an extremely active research area in the
past half-century. Today, we have a large arsenal of constructions and
tools to establish expansion which are quite different in nature---algebraic, analytic, combinatorial, and mixtures of these (for a short
survey of this wealth see~\cite[Sec 8.7]{Wigderson18})---and we will
discuss a few of them below.

All the different constructions above
yield $d$-regular $\lambda$-expanding families with {\em some}
specific constants $d$ and $\lambda$.  Now, a large variety of
structural and algorithmic applications call for optimizing both
parameters, and understanding the best trade-off between them. One
example which is directly related to this paper is the study of random
walks on expanders, sometimes used for randomness-efficient
error-reduction of probabilistic algorithms, and also in the
construction of randomness extractors. The surprising {\em expander
  Chernoff bound} of Gillman \cite{G98} informally says that a
sequence of {\em highly correlated} $k$ vertices along a random walk
in a $(d,\lambda)$-expander, is almost as good a sampler as a sequence
of $k$ {\em independent} vertices, {with respect to empirical sums}. Saving randomness calls for
minimizing the degree $d$ while improving the quality of the sample requires minimizing the expansion parameter $\lambda$.

However, for any choice of degree $d$, the spectral expansion
$\lambda$ cannot be made arbitrarily small. The Alon--Boppana bound
\cite{Nil91} shows that $\lambda(X_i) \ge 2\sqrt{d-1}/d - {o_n}(1)$. It
intuitively says that the {\em infinite} $d$-regular tree is the best
possible spectral expander, raising the challenge of achieving it by
{\em finite graphs}. This challenge was first met by the
(independent) seminal papers of \cite{LPS88, Margulis88}; they
constructed optimal spectrally expanding families, dubbed {\em
  Ramanujan graphs}, satisfying the (Ramanujan bound) $\lambda(X_i)
\le 2\sqrt{d-1}/d$. The investigation of expanding families near or
achieving the optimal Ramanujan bound has received much
attention. However, since then, only one essentially different
construction of Ramanujan graphs was found 30 years later by
\cite{MSS15}.

%\tnote{Need to correct this}
%
%\edits{
%%The quest towards almost optimal trade-offs can be summarized as a
%%sharpening of our original major question above:
%%\begin{center}
%%  \textit{
%%  Which graphs are (almost) Ramanujan expanders?
%%  }
%%\end{center}
%\begin{center}
%  \textit{
%  Give efficient and deterministic constructions of (almost) Ramanujan expanders.
%  }
%\end{center}
%}
A study of \textit{almost Ramanujan expanders}, in which the bound above is
nearly matched, has received much attention as
well. Friedman~\cite{Friedman03} greatly strengthened Pinsker's bound
above \cite{Pinsker73}, showing that with high probability, a random
$d$-regular graph $X$ satisfies $\lambda(X) \le 2\sqrt{d-1}/d +
{o_n}(1)$. For explicit constructions, one {successful} approach
 was the {\em lifting
  method} of Bilu--Linial \cite{BL06}, which achieves $d \leq
\widetilde{O}(1/\lambda^2)$, and famously led to the (exact) Ramanujan
expanders of \cite{MSS15} mentioned above.

%\tnote{Pushed the Bilu--Linial up and the zig-zag in a separate to highlight it as it is important}

{
A different paradigm to explicitly construct almost optimal expanders, which is central for this paper, is to start with a weak expander and amplify it to a strong one via combinatorially defined graph products.} {This originated with the} {\em zig-zag product} of \cite{RVW00}. They showed that their
basic zig-zag construction achieves an explicit family of expanders
with $d \leq 1/\lambda^4$. They further derandomize the basic zig-zag
product to achieve $d \leq 1/\lambda^3$. Ben-Aroya and Ta-Shma~\cite{BT08} in their ingenious ``s-wide
zig-zag product", nearly matched the optimal quadratic
bound\footnote{We call any such bound near-optimal or almost
  Ramanujan. Of course, reducing the $o(1)$ slack in the exponent is
  clearly of much interest.}, achieving $d \leq
1/\lambda^{2+{o_\lambda}(1)}$. Their ``higher-order'' version of
zig-zag~\cite{BT08} will be central in our work. Note that these results yield specific constructions but do not provide a generic technique to amplify arbitrary expanders.

While for some applications and
structural results, {one is free to start with} {\em any} family of expanders, for
many others, the {initial} graphs are externally given to us (as e.g.\ is the
case for understanding the expansion of Cayley graphs of
groups). {So it is highly desirable to have generic algorithms for expander construction that can operate within these constraints.} Moreover, seeking different constructions and analysis tools
has led to surprising applications beyond those intended (\eg the
resolution of the Kadison--Singer conjecture by \cite{MSS14} and the
proof of $\mathrm{SL}=\mathrm{L}$ by Reingold~\cite{R05}). {Motivated by this, we study the following question:}
{
\begin{center}
  \textit{When can a family of weak expanders be amplified to a strong (almost-Ramanujan) one?}
\end{center}
}
We show that this is always possible: \emph{any} expander family can
be \emph{locally and efficiently} converted into an almost Ramanujan
family. More precisely, starting from any family of bounded degree
expanders, it is possible to obtain, for any desired target expansion
$\expan > 0$, a new family of $\expan$-expanders close to the
Ramanujan bound.

\subsection{Main Results}

%See \cref{sec:prelim} for precise definitions of relevant terminology. 
Our main result for general families of expander graphs is as follows. 
\begin{restatable}[Main I - Informal]{theorem}{MainPermAmpInformal}\label{theo:main_1_informal}
  Let $\set{X_i}_{i \in \mathbb{N}}$ be a family of $(d_0,\lambda_0)$-expanders where $\lambda_0 < 1$ is a constant.
  For any (target) $\lambda \in (0,1)$ and $X_i$, we can explicitly\footnote{See~\cref{def:explicit}} construct a $(d,\lambda)$-expander, $X_i'$,
  on the same vertex set, where $d = O(d_0/\lambda^{2+\edits{o_\lambda}(1)})$, 
  %\edits{where the $o(1)$ is as $\abs{X_i}\to\infty$.}
  %
  Moreover, the construction is local in the sense that edges in $X_i'$ correspond to short walks in $X_i$.
\end{restatable}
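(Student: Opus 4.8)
The plan is to realize $X_i'$ as the graph whose edges are length-$t$ walks in $X_i$ generated by the $s$-wide replacement product of $X_i$ with a small explicit auxiliary expander, and to control its spectral expansion by lifting Ta-Shma's bias amplification from $\pm 1$ scalars to the \emph{rotation operators} of $X_i$. First represent $X_i$ by a rotation map, so that taking the $a$-th step ($a\in[d_0]$) acts as a permutation matrix $\sigma_a\colon\R^{V}\to\R^{V}$ with $\tfrac1{d_0}\sum_a\sigma_a=A_{X_i}$, hence $\|\tfrac1{d_0}\sum_a\sigma_a|_{\mathbf 1^\perp}\|=\lambda_0$ (this may require replacing $X_i$ by a constant-size blow-up, and — a separate harmless preprocessing — by a constant power $X_i^{k_0}$ so that $\lambda_0$ lies below whatever absolute constant threshold the amplification needs; this is the only place a $\poly(d_0)$ factor enters). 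Then fix an \emph{explicit} near-Ramanujan graph $H$ on vertex set $[d_0]^{s}$ of degree $d_H=O(1/\lambda_H^{2})$ with $\lambda(H)\le\lambda_H$, where both $s\to\infty$ and $\lambda_H\to 0$ will be chosen as slowly-varying functions of the target $\lambda$.

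Next, define the walk. A state is a pair $(v,h)\in V\times[d_0]^{s}$; one step (i) reads the coordinate $h_{(j\bmod s)}$ of the current $H$-register and applies $\sigma_{h_{(j\bmod s)}}$ to $v$, and (ii) takes one of the $d_H$ steps of $H$ from $h$. Running this $t$ times from $(v_0,h_0)$ yields a length-$t$ walk $v_0,v_1,\dots,v_t$ in $X_i$, and we take $X_i'$ to be the graph on $V$ connecting $v_0$ to $v_t$ along each such walk, unioning over $h_0\in[d_0]^{s}$ and over the $d_H^{t}$ choices of $H$-steps. Its normalized adjacency operator is the average $A_{X_i'}=\E_{h_0}\E_{b_1,\dots,b_t}\big[\sigma_{a_t}\circ\cdots\circ\sigma_{a_1}\big]$ where $a_j$ is the color read at step $j$. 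By construction $X_i'$ has degree $d=O(d_0^{s}\,d_H^{t})$, its edges are length-$t$ walks in $X_i$ — hence within distance $O(t)=O(\log(1/\lambda))$ in $X_i$ after undoing the $k_0$ rescaling — and each vertex lists its new neighbors by exploring its radius-$O(\log(1/\lambda))$ ball in $X_i$, giving the stated locality and the $n\cdot\poly(1/\lambda)$ running time. (The register $[d_0]^{s}$ would enlarge the vertex set by the constant factor $d_0^{s}$; since $s=o(\log(1/\lambda))$ this factor is $\lambda^{-o(1)}$, and summing it out as above is exactly what places $X_i'$ on $V(X_i)$ itself.)

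The heart of the matter is the spectral bound $\lambda(X_i')=\|A_{X_i'}|_{\mathbf 1^\perp}\|\le\lambda$, and this is where Ta-Shma's technique is generalized ``from scalars to matrices.'' In his setting one amplifies a bias of the form $\bias(\dsum_W(x))=\langle\mathbf 1,\;D_x P D_x P\cdots D_x\,\mathbf 1\rangle/|V|$ with diagonal $\pm 1$ matrices $D_x$; here the role of the signs is played by the (non-diagonal, arbitrary-dimensional) permutations $\sigma_a$, and instead of a scalar bias we must bound an operator norm. The plan is to run Ta-Shma's block-by-block recursion on $A_{X_i'}$ acting on $\R^{V}\otimes\R^{V(H)}$: each block of $s$ consecutive steps reads all $s$ coordinates of a single $H$-register, which — because $H$ expands well — jointly behave almost like $s$ independent uniform colors, so the corresponding average of the $\sigma_a$'s behaves almost like $A_{X_i}$ and contracts the relevant vector on $\mathbf 1^\perp$; accumulating over $t/s$ blocks gives a bound of the form $\lambda_H^{(1-o(1))t}$, with the $o(1)$ arising from block-boundary effects and $H$'s imperfect expansion and vanishing as $s\to\infty$, $\lambda_H\to0$. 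Granting this, pick $t\approx\frac{\log(1/\lambda)}{(1-o(1))\log(1/\lambda_H)}$ so that $\lambda_H^{(1-o(1))t}\le\lambda$; then $d=O(d_0^{s}d_H^{t})=\lambda^{-o(1)}\cdot\lambda_H^{-2t}=\lambda^{-2-o(1)}$ using $d_0^{s}=\lambda^{-o(1)}$ and $d_H=O(\lambda_H^{-2})$, which together with the $\poly(d_0)$ from preprocessing yields $d=O(d_0/\lambda^{2+o(1)})$.

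\textbf{Main obstacle.} Everything hinges on the operator-valued amplification lemma of the previous paragraph. Ta-Shma's scalar analysis exploits features special to $\{\pm1\}$ (commutativity, and the clean decomposition $D_x=\bias(x)\,I+(\text{a unitary orthogonal to }I)$), whereas the extension must track a product of \emph{non-commuting}, norm-one operators whose smallness only materializes after averaging over the colors prescribed by the $H$-walk, all while respecting the $\R^{V}\otimes\R^{V(H)}$ tensor structure and the projection onto the $H$-invariant subspace. Making the block recursion close with the right constants — so the degree exponent is genuinely $2+o(1)$ rather than $2+\Omega(1)$ — is the delicate part, and, as the authors note, there seems to be no existential shortcut, so the derandomized argument must be carried out in full.
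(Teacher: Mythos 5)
Your spectral analysis has a genuine gap, rooted in underestimating what the $s$-wide amplification requires of the initial bias. The hypothesis of the paper's key technical result, \cref{theo:ta-shma_main}, is not that $\lambda_0$ lies below an absolute constant — it is $\lambda_0 \le \lambda(Y)^2 - 2\lambda(X)$, where both $\lambda(Y)$ and $\lambda(X)$ tend to $0$ with the target (in the paper's instantiation $\lambda(Y)=\lambda_2 \approx s^{-2s}$, $\lambda(X)=\lambda_2^2/10$, $s\to\infty$). It is only under this hypothesis that the per-$s$-block contraction is $\approx\lambda(Y)^{s-O(1)}$, which is what produces the $2+o(1)$ exponent. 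Your ``constant power $X_i^{k_0}$'' only pushes $\lambda_0$ down to another constant, which eventually exceeds $\lambda(Y)^2\to0$; consequently your per-block contraction — driven, as you describe, by the near-uniform colors making the block behave like $A_{X_i}^s$, so $\approx\lambda_0^s$ — is bottlenecked by the constant $\lambda_0$, not by $\lambda_H$. Combined with the degree factor $d_H^t\approx\lambda_H^{-2t}$, this recovers only the $d=\Theta(1/\lambda^{4+o(1)})$ trade-off of \cref{sec:simple_amp}, not the claimed $d = O(d_0/\lambda^{2+o(1)})$. The paper handles this with a separate $\lambda$-dependent \emph{initial boost} (using the expander-walk amplification of \cref{sec:simple_amp}) that first drives the bias of the K\"{o}nig permutation set down to $\approx\lambda_2^2/3$, enlarging it by a factor $d_2^5=\lambda^{-o(1)}$ — a polynomial boost, not a constant power — and only then runs the $s$-wide machinery.

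There is also a structural mismatch. In the paper's proof of \cref{theo:main_1}, the \emph{outer} graph of the $s$-wide replacement product is a fresh explicit Cayley expander on $\slin_2(p)$ (\cref{cor:small_expander}) whose vertex set is identified with the boosted set of K\"{o}nig permutations; the function $\matr f$ in \cref{theo:ta-shma_main} is defined on the vertices of this auxiliary graph (each vertex is a permutation, mapped to its defining representation), and the theorem controls the operator norm of $\E[\matr f(x_t)\cdots\matr f(x_0)]$, a product over \emph{visited vertices}. Your construction instead uses $X_i$ itself as the outer graph and averages $\sigma_{a_t}\cdots\sigma_{a_1}$ where the $a_j$ are \emph{edge labels} read off the $H$-register; this quantity is not $\E[\matr f(v_t)\cdots\matr f(v_0)]$ for any vertex function $\matr f$ on $V(X_i)$, because $\sigma_{a_j}$ depends on the label $a_j$ and not on the visited vertex $v_j$. (You also omit the local-inversion $\phi$ update to the register and do not ask $H$ to be a Cayley graph, both of which the paper's compatibility and simulation lemmas rely on.) As written, the collection of walks you propose is not covered by \cref{theo:ta-shma_main}, and your block-by-block recursion would need to be re-derived under assumptions your preprocessing does not supply.
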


%\tnote{moved this paragraph as it appears too late and the connection is lost.}

The key insight is to use a result of K\"{o}nig that says that the
adjacency matrix, say $A_X$, of an arbitrary regular graph, can be written as a sum
of permutation matrices. {Thus we have, $A_X = \Ex{s\sim S}{\rho_{\mathrm{def}} (s)}$ where $S$ is a set of permutations, and $\rho_{\mathrm{def}}$ is the \textit{defining representation} that maps
a permutation to the matrix defining it. The task is to construct a set $S'$, with  $\opnorm{\Ex{s\sim S'}{\rho_{\mathrm{def}}(s)}} \leq \lambda$. We study a general version of this question with an arbitrary group representation of a finite group (See~\cref{def:rep}) . 
This is directly connected to the expansion of Cayley graphs. 
 } 
%\snote{Change looks good. Approved.}

%This gives us a set,  for every irreducible representation $\rho$ and, in particular, for the irreducible component of the defining representation
%We obtain our results by considering the seemingly more specialized
%case of Cayley expanders, which are based on group theory and
%represent a prominent way of constructing expanders. We can now move from Cayley graphs back to general graphs and answer
%	our original question.   Thus, the graph $X'$ with $A_{X'}= \Ex{s\sim S'}{\rho_{\mathrm{def}} (s)}$, is an almost-optimal expander yielding~\cref{theo:main_1_informal}.

\paragraph{Cayley Graphs}
A
Cayley graph $\Cay(G,S)$ on a finite group $G$ is specified by a
symmetric set of generators $S \subseteq G$, where vertices are
elements of $G$ and $g,h \in G$ are adjacent if and only if
$hg^{-1}$ belongs to $S$.

While many groups admit Cayley expanders, most of these are far from
the Ramanujan bound.  This is true, in particular, in the case of
non-Abelian finite simple groups which includes the symmetric group.
Breuillard and Lubotzky~\cite{BL18} ask whether it is possible to have
near-Ramanujan expanders for all families of finite simple
groups. More generally,

\begin{center}
  \textit{
  Which (family of) groups admit expanding Cayley graphs close to the Ramanujan bound?}
\end{center}

Our key result is that any group that admits a Cayley expander also
admits one that is almost Ramanujan.

\begin{theorem}[Main II]\label{theo:main_2}
  Let $G$ be a finite group and $S$ be such that $\Cay(G,S)$ is a $\expan_0$-expander,
  for some constant $\expan_0 \in (0,1)$.
  For every $\expan \in (0,1)$, there exists $S'$ such that
  \begin{enumerate}[topsep=2pt,itemsep=1pt,label=$\cdot$]
    \item $\Cay(G,S')$ is a $\expan$-expander.
    
    \item $\abs{S'} = O\left (\abs{S}/\expan^{2+{o_{\lambda}}(1)}\right )$, and
      
    \item $S'$ can be computed deterministically in $\poly(\abs{S}/\expan)$-time assuming an oracle for group operations.
  \end{enumerate}
  Furthermore, if $\Cay(G,S)$ is strongly explicit\footnote{Neighbors
    of a vertex can be computed in time polynomial in the {\em description
  length} of a vertex.}, then so is $\Cay(G,S')$.
\end{theorem}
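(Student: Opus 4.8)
The plan is to prove \cref{theo:main_2} by lifting Ta-Shma's $s$-wide zig-zag bias amplification from scalars ($\F_2$-valued characters) to operators (matrix-valued irreducible representations of $G$), and then choosing parameters as Ta-Shma does. \textbf{Step 1 (reformulation).} By the decomposition of the regular representation of $G$, the normalized adjacency operator of $\Cay(G,S)$ is block-diagonal over the irreps $\rho$ of $G$, acting on the $\rho$-block as $M_\rho := \E_{s\sim S}[\rho(s)]$; hence $\lambda(\Cay(G,S)) = \max_{\rho\neq\mathrm{triv}}\opnorm{M_\rho}$, which is exactly the quantity \cref{def:bias_group} calls the bias. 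So it suffices to produce $S'\subseteq G$ with $\opnorm{M'_\rho}\le\expan$ for \emph{every} nontrivial $\rho$ at once, where $M'_\rho := \E_{g\sim S'}[\rho(g)]$, while controlling $\abs{S'}$ and explicitness.

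\textbf{Step 2 (the construction).} Fix a strongly explicit expander $H$ on (a rounding of) $\abs{S}$ vertices, $d_2$-regular with $\lambda(H)=:\lambda_2$ a small parameter, and form Ta-Shma's $s$-wide replacement product of $\Cay(G,S)$ with $H$, identifying $V(H)$ with the edge-colors of $\Cay(G,S)$. A length-$t$ walk in this product reads off a sequence $s_1,\dots,s_t\in S$ — the $i$-th move is a $\Cay(G,S)$-step dictated by the $(i\bmod s)$-th partial rotation map applied to the current $H$-coordinate, and every move also advances $H$ — and we let $S'$ be the multiset of products $s_1s_2\cdots s_t$ over all walks. Then $S'\subseteq G$, each element of $S'$ is a product of $t$ elements of $S$ (the locality / "short walk" property), and $\abs{S'}\le\abs{S}\cdot d_2^{\,t}$.

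\textbf{Step 3 (operator amplification — the crux).} Fix a nontrivial unitary irrep $\rho:G\to\mathrm{U}(V)$. Since $\rho$ is a homomorphism, $M'_\rho=\E[\rho(s_1)\cdots\rho(s_t)]$ is an \emph{ordered} product of $t$ contractions, and the whole walk is captured by one operator on $V\otimes\C^{V(H)}$ built from $t$ alternating factors: $H$-step operators (identity on $V$, contracting the part of $\C^{V(H)}$ orthogonal to the uniform vector by $\lambda_2$) and $\Cay(G,S)$-step operators, whose compression onto the $H$-uniform direction is $M_\rho$ — this is the only place the $\lambda_0$-bias of $S$ enters, and it enters \emph{per irrep}. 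One then replays Ta-Shma's norm estimate of the $s$-wide walk operator, with the scalar Fourier coefficient $\widehat{S}(\chi)$ everywhere replaced by the matrix $M_\rho$: split each $\Cay(G,S)$-step operator as (uniform part) $+$ (error part), expand the length-$t$ product, and bound each resulting term by submultiplicativity and the triangle inequality for $\opnorm{\cdot}$. Every inequality Ta-Shma uses on scalars in $[-1,1]$ holds for operators of norm $\le 1$, and non-commutativity is harmless because all products keep the fixed walk order; the outcome is $\opnorm{M'_\rho}\le\big(\lambda_2+b(s,\lambda_0)\big)^{\Omega(t)}$ with $b(s,\lambda_0)\to 0$ as $s\to\infty$, \emph{uniform over all $\rho$}. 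I expect this to be the main obstacle — confirming that Ta-Shma's telescoping/counting argument, written for scalar Fourier coefficients, survives verbatim the passage to matrices of arbitrary dimension. This is the conceptual heart of the result: as the introduction observes, there is no known existential route to such an $S'$, since one cannot control $\opnorm{M'_\rho}$ simultaneously over all the differently-sized $\rho$ by a counting argument.

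\textbf{Step 4 (optimize and conclude).} Take $\lambda_2$ small and $t,s$ as in Ta-Shma's optimization — in particular $t=O(\log(1/\expan))$ — so that the bound of Step 3 falls below $\expan$ while $d_2^{\,t}=O(1/\expan^{2+o(1)})$; then $\abs{S'}\le\abs{S}\cdot d_2^{\,t}=O\big(\abs{S}/\expan^{2+o(1)}\big)$, which with Step 1 proves items (1) and (2). For (3), enumerating all length-$t$ walks and forming the products is deterministic and runs in $\poly(\abs{S}/\expan)$ time given a group-operation oracle. Finally, if $\Cay(G,S)$ is strongly explicit, then a neighbor of $g\in G$ in $\Cay(G,S')$ is $g\cdot(s_1\cdots s_t)$ where the walk data has description length $\poly(\log\abs{G},\log(1/\expan))$ and each $s_i$ comes from one strongly-explicit neighbor query in $\Cay(G,S)$, so $\Cay(G,S')$ is strongly explicit as well. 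Combining Steps 1--4 yields \cref{theo:main_2}.
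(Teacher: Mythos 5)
Your Step~1 reformulation and your Step~3 insight --- that Ta-Shma's scalar analysis survives replacing $\widehat{S}(\chi)$ by the matrix $M_\rho$ because each estimate uses only submultiplicativity and the triangle inequality for the operator norm --- are exactly the paper's key observations, and the dimension-independence you emphasize is indeed the heart of the result. The gap is in Step~2: you form the $s$-wide replacement product of $\Cay(G,S)$ with an inner graph $H$, but $\Cay(G,S)$ cannot be the outer graph. In the $s$-wide construction the inner graph lives on $[d_1]^s$ vertices, where $d_1$ is the \emph{outer} graph's degree; if $\Cay(G,S)$ is outer then $d_1=\abs{S}$, so the inner graph has $\abs{S}^s$ vertices and the walk count is $\Theta(\abs{S}^s d_2^{\,t})$ (up to the starting $G$-vertex, which is irrelevant by Cayley symmetry), not the $\abs{S}\cdot d_2^{\,t}$ you claim. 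Since $s$ must grow with $1/\lambda$ this destroys the target bound. The paper instead introduces a \emph{separate} auxiliary outer expander $X$ whose vertex set is identified with $S$ (not with $G$) and whose degree $d_1$ is a function of $s$ alone (they take $d_1 = d_2^5$); the inner graph $Y$ then has $d_1^s$ vertices --- a quantity independent of $\abs{S}$. Correspondingly, the walk operator lives on the \emph{three}-fold tensor $\cH\otimes\C[V_X]\otimes\C[V_Y]$, whereas your Step~3 has only the two factors $V\otimes\C^{V(H)}$; the missing $\C[V_Y]$ factor is precisely what carries the $s$-wide derandomization. Two further structural requirements you omit: the outer graph must admit a locally invertible rotation map (the paper uses Cayley graphs on $\SL_2(p)$ of the right size, produced via their small-prime construction), and $Y$ must be a Cayley graph on a product group of the form $H_0^s$ so that the ``compatibility'' lemma transfers the randomness of $Y$'s start vertex into a genuine random walk on $X$.

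There is also a precondition your Step~3 glosses over. The operator $s$-wide amplification theorem (the paper's Theorem~4.1, mirroring Ta-Shma's Theorem~24) requires $\lambda_0 \le \lambda(Y)^2 - 2\lambda(X)$, which forces $\lambda_0$ to be \emph{tiny} (roughly $1/d_2$) since $\lambda(Y)\approx 1/\sqrt{d_2}$. Your claimed bound $\opnorm{M'_\rho}\le(\lambda_2 + b(s,\lambda_0))^{\Omega(t)}$ with $b(s,\lambda_0)\to 0$ as $s\to\infty$ implicitly treats the amplification as working for any constant $\lambda_0<1$, but it does not: the cross-term bound in the $s$-step decay is derived from the two-step lemma and needs the precondition. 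The paper fixes this by a preliminary \emph{boost} phase (Section~3: the $1/\lambda^{4+o(1)}$ expander-walk amplification) that first reduces the bias from $\lambda_0$ to $O(1/d_2)$, and only then runs the $s$-wide construction. With the outer graph corrected to a small auxiliary $X$ on vertex set $S$, the inner graph $Y$ on $[d_1]^s$ made explicit, and the initial boost added, the remainder of your plan does follow the paper's proof.
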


Since expanding families of Cayley graphs are known for non-Abelian
finite simple groups~\cite[Theorem 3.1]{BL18}, this result makes
substantial progress towards the question asked therein (the $o(1)$
term needs to be removed to resolve it completely). Moreover, these
are strongly explicit (except for the Suzuki group).  Thus, our result
yields strongly explicit almost Ramanujan Cayley graphs for these
these groups, which notably includes the symmetric group!

\begin{corollary}[Explicit almost Ramanujan Cayley Expanders]\label{cor:simple_group_exp}
  For every non-Abelian finite \emph{simple}\footnote{This holds for other groups as well, as long as they have expanding generators.
    One non-simple example is the Cayley expanders of Rozenman, Shalev and Wigderson~\cite{RSW06}.}
  group $G$ and $\expan > 0$, we can explicitly
  construct almost-Ramanujan $(d,\lambda)$-Cayley multigraphs on $G$ where $d \le O(1/\expan^{2+{o_{\lambda}}(1)})$.
\end{corollary}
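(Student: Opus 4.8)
The plan is to obtain the corollary as a direct instantiation of \cref{theo:main_2}, using a known family of bounded-degree Cayley expanders as the starting point. The key external input is the theorem of Breuillard and Lubotzky~\cite[Theorem 3.1]{BL18}: every non-Abelian finite simple group $G$ admits a symmetric generating set $S = S(G)$ of size bounded by an absolute constant $d_0$ such that $\Cay(G,S)$ is a $\expan_0$-expander for a universal constant $\expan_0 \in (0,1)$, with both $d_0$ and $\expan_0$ independent of $G$. Moreover, for all such $G$ except the Suzuki groups, this generating set is strongly explicit, i.e.\ a list of the $d_0$ generators can be produced in time polynomial in the description length of an element of $G$.

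First I would feed this $(d_0,\expan_0)$-Cayley expander into \cref{theo:main_2} with the desired target expansion $\expan > 0$. The theorem returns a symmetric (multi-)set $S' \subseteq G$ with $\Cay(G,S')$ a $\expan$-expander and $\abs{S'} = O(\abs{S}/\expan^{2+o(1)})$. Since $\abs{S} = d_0$ is an absolute constant, it is absorbed into the asymptotic notation and we obtain degree $d = \abs{S'} = O(1/\expan^{2+o(1)})$, which is the claimed almost-Ramanujan bound (recall the Alon--Boppana bound forces $d = \Omega(1/\expan^2)$, so this is optimal up to the $o(1)$ in the exponent). The multigraph caveat in the statement simply reflects that $S'$ may be a multiset, which is harmless.

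Next I would address explicitness. Part (3) of \cref{theo:main_2} gives a deterministic $\poly(\abs{S}/\expan)$-time algorithm computing $S'$ from $S$ assuming a group-operations oracle; composing this with the strongly explicit description of $S$ from~\cite{BL18} and the ``Furthermore'' clause of \cref{theo:main_2} (strong explicitness of $\Cay(G,S)$ implies strong explicitness of $\Cay(G,S')$), we conclude that $\Cay(G,S')$ is strongly explicit for every non-Abelian finite simple $G$ other than the Suzuki groups. For the Suzuki groups the construction of $S$, and hence of $S'$, is still explicit, which suffices for the corollary as stated.

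The ``hard part'' is essentially non-existent here: all the substantive work lies in \cref{theo:main_2} (the operator/bias amplification) and in the pre-existing expander construction of~\cite{BL18}. The only thing to verify is that the parameters compose as claimed --- namely that a constant input degree $d_0$ contributes only a constant factor to the output degree, and that explicitness is preserved end-to-end --- both of which are immediate from the quoted results.
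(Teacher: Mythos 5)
Your proposal is correct and follows exactly the route the paper takes: quote \cite[Theorem 3.1]{BL18} for a bounded-degree Cayley expander on each non-Abelian finite simple group with universal constants, then apply \cref{theo:main_2} to amplify, noting that the constant input degree is absorbed into the $O(\cdot)$ and that (strong) explicitness is inherited via the ``Furthermore'' clause. The paper presents the same argument informally in the paragraph preceding the corollary.
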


\begin{remark}[Connection with Codes]{A linear $\lambda_0$-balanced code over $\F_2^{n_0}$ of dimension
  $k$ is equivalent to a Cayley $\lambda_0$-expander over $G=\F_2^k$ of degree $n_0$. Let $S \subseteq G$ be the rows of a generator matrix of a
  good $\lambda_0$-balanced code (good means $k/n_0$ and $\lambda_0 < 1$ are constants).
Thus, the breakthrough construction of explicit almost optimal binary codes of Ta-Shma~\cite{Ta-Shma17} close to the Gilbert--Varshamov~\cite{G52,V57} bound can be viewed as a particular case of~\cref{theo:main_2} applied to the group $G=\F_2^k$.   %
  Applying~\cref{theo:main_2} above to $S$ with final expansion
  parameter $\lambda > 0$, we obtain a generating set $S' \subseteq G$
  of a Cayley $\lambda$-expander with degree $O(k/\lambda^{2+\edits{o_\lambda}(1)})$,
  or equivalently, we obtain a $\lambda$-balanced code of rate
  $\Theta(\lambda^{2+\edits{o_\lambda}(1)})$.}
\end{remark}

%\tnote{I shifted this paragraph below theorem as it was weirdly placed and breaking flow. }
%\snote{Looks good to me.}

%\{\rho_{\mathrm{std}}(\sigma)\mid \sigma \in S\}
%\footnote{The \emph{defining representation} - ($\rho_{\text{def}}(\sigma),\C^n$) for $\mathrm{Sym}_n$ is defined as the representation that maps a permutation to the matrix defining it.}. 
\subsection{Applications}

We will now discuss some applications of this operator amplification
technique which allows us to improve other pseudorandom objects. All
the "pseudorandom" objects below are expanders (with various {structural} properties), and for each of these, we amplify their spectral bound to almost
Ramanujan. We stress that our amplification preserves the underlying
structure and therefore, produces another object with the same properties.
Precise definitions of these objects will be given in~\cref{sec:appl}.

\paragraph*{Quantum Expanders}
Roughly speaking, a quantum expander is an operator defined by $d$
complex matrices, whose (linear) action on quantum states has a
constant spectral gap.
Quantum expanders were defined in \cite{AS04, BAST08, Hastings07b},
and Hastings~\cite{Hastings07} showed that the Ramanujan bound also
applies to them.  Existing explicit constructions are far from the
Ramanujan bound. In~\cite{Harrow07}, Harrow gave a generic
construction using expanding Cayley graphs which is explicit if the
group has a large irreducible representation and admits efficient
Quantum Fourier Transform (QFT).  Both these conditions are satisfied
by the symmetric group $\mathsf{Sym}_n$ using the generating family by
Kassabov~\cite{Kas07} and the QFT algorithm by Beals~\cite{Beals97}.

By amplifying the expansion of the generators of~\cite{Kas07}, we give
the first explicit family of almost Ramanujan quantum expanders.

\begin{restatable}[Explicit Almost Ramanujan Quantum Expanders]{corollary}{TheoQuanExp}
  For every $\lambda \in (0,1)$, there is an explicit infinite family
  of (efficient) $(O(1/\expan^{2+\edits{o_\lambda}(1)}),\lambda)$-quantum expanders.
\end{restatable}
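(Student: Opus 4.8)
The plan is to instantiate \cref{theo:main_2} on the standard explicit constant-degree Cayley expanders of the symmetric groups and then feed the amplified generating set into Harrow's reduction from Cayley expanders to quantum expanders. Two ingredients (both cited in the introduction) make $\mathsf{Sym}_n$ the right base object: Kassabov's construction~\cite{Kas07} provides, for every $n$, an explicit symmetric set $S_n \subseteq \mathsf{Sym}_n$ of size $O(1)$ with $\Cay(\mathsf{Sym}_n, S_n)$ a $\lambda_0$-expander for a universal constant $\lambda_0 < 1$; and Beals' algorithm~\cite{Beals97} computes the quantum Fourier transform over $\mathsf{Sym}_n$ in time $\poly(n) = \polylog(\abs{\mathsf{Sym}_n})$. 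Moreover $\mathsf{Sym}_n$ has an irreducible representation $\rho_n$ of dimension $D_n \to \infty$, so letting $n$ vary produces an infinite family; the precise definitions of quantum expanders and of ``efficiency'' are recalled in \cref{sec:appl}.

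First I would apply \cref{theo:main_2} with $G = \mathsf{Sym}_n$, base set $S_n$, and target expansion $\lambda$. This yields a symmetric set $S_n' \subseteq \mathsf{Sym}_n$ with $\abs{S_n'} = O(\abs{S_n}/\lambda^{2+o(1)}) = O(1/\lambda^{2+o(1)})$ such that $\Cay(\mathsf{Sym}_n, S_n')$ is a $\lambda$-expander, such that $S_n'$ is computable in $\poly(1/\lambda)$ time given the (trivially efficient) group operations of $\mathsf{Sym}_n$, and — crucially — such that every element of $S_n'$ is a word of length $O(\log(1/\lambda))$ in $S_n$ by the locality of the transformation.

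Next I would invoke Harrow's reduction~\cite{Harrow07}. Since $\rho_n$ is irreducible, the superoperator $\Phi \defeq \tfrac{1}{\abs{S_n'}}\sum_{s \in S_n'} \rho_n(s)\otimes\overline{\rho_n(s)}$ decomposes, by Schur's lemma, as the identity on the single copy of the trivial representation inside $\rho_n \otimes \overline{\rho_n}$ (the maximally entangled state, i.e.\ the fixed point of the associated quantum channel) and, on every other isotypic component, as $\tfrac{1}{\abs{S_n'}}\sum_{s} \sigma(s)$ for various \emph{nontrivial} irreducibles $\sigma$ of $\mathsf{Sym}_n$. The statement that $\Cay(\mathsf{Sym}_n, S_n')$ is a $\lambda$-expander is precisely that $\norm{\tfrac{1}{\abs{S_n'}}\sum_{s} \sigma(s)}_{\mathrm{op}} \le \lambda$ for every nontrivial irreducible $\sigma$, so $\norm{\Phi}_{\mathrm{op}} \le \lambda$ on the orthogonal complement of the fixed point. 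Hence $\set{\rho_n(s) : s \in S_n'}$ is a $(D_n,\,O(1/\lambda^{2+o(1)}),\,\lambda)$-quantum expander, and taking $n \to \infty$ gives the claimed infinite family. Efficiency follows because each $\rho_n(s) = \prod_i \rho_n(s_i)$ is a product of $O(\log(1/\lambda))$ matrices $\rho_n(s_i)$ with $s_i \in S_n$, and each of these can be applied to a quantum state in $\poly(n)$ time using Beals' QFT over $\mathsf{Sym}_n$.

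The main obstacle is not any individual step — each is a black-box use of a prior result — but verifying that the hypotheses compose correctly: (i) that \cref{theo:main_2} suppresses \emph{all} nontrivial irreducibles of $\mathsf{Sym}_n$ to $\le \lambda$ (so that Harrow's decomposition argument applies to the large irrep $\rho_n$, not just to some auxiliary representation), which is exactly what the ``$\Cay(\mathsf{Sym}_n,S_n')$ is a $\lambda$-expander'' conclusion delivers; and (ii) that the strong explicitness and the $O(\log(1/\lambda))$ word-length bound of \cref{theo:main_2} are robust enough that the resulting Kraus operators are implementable by an efficient quantum circuit, which is precisely where the combination with Beals' QFT is used. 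Both are routine once the earlier theorem and the definitions of \cref{sec:appl} are in hand.
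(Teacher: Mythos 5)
Your proposal is correct and follows essentially the same route as the paper: amplify Kassabov's expanding generating set for $\mathsf{Sym}_n$ via \cref{theo:main_2}, then feed the result into Harrow's Cayley-graph-to-quantum-expander reduction, with explicitness coming from Beals' QFT for $\mathsf{Sym}_n$. You unfold the internal Schur's-lemma decomposition behind Harrow's theorem, which the paper treats as a black box; the one small hypothesis you should verify explicitly is that $\mathsf{Sym}_n$ has irreducibles with $\log D_n = \Omega(\log n!)$ (which it does), since Harrow's explicitness criterion requires $\log N = \Omega(\log\abs{G})$.
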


\paragraph*{Monotone Expanders}
Monotone expanders are expanders, whose edge set can be decomposed into a constant number of {\em monotone} partial maps on $[n]$.
Bourgain and Yehudayoff~\cite{BY13} gave the only known explicit construction of monotone expanders with \emph{constant} degree. There are two natural notions of degree for a monotone expander.
  The usual vertex degree and the number of monotone maps. Our almost
  Ramanujan trade-off is with respect to the vertex degree (and the
  monotone degree is polynomial in the vertex degree).

\begin{restatable}[Almost Ramanujan Monotone Expanders]{corollary}{TheoMonExp}
  For every $\expan > 0$, there is an explicit family $\set{X_i}_{i\in \mathbb{N}}$ of (vertex)
  $d$-regular $d^{O(1)}$-monotone expanders with $d = O(1/\expan^{2+\edits{o_\lambda}(1)})$ and $\lambda(X_i) \le \expan$.
\end{restatable}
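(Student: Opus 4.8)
The plan is to mimic exactly the route used to derive \cref{theo:main_1_informal} from \cref{theo:main_2}, but carrying the extra structure (monotonicity) along through every step. First I would recall the starting point: by Bourgain and Yehudayoff~\cite{BY13} there is an explicit family of constant-degree monotone expanders, i.e.\ a family $\set{Y_i}$ of $(d_0,\lambda_0)$-expanders on $[n_i]$ with $\lambda_0 < 1$ a constant, whose edge sets decompose into $O(1)$ monotone partial maps on $[n_i]$. A monotone partial map extends to a (not necessarily monotone) permutation, but the key point to check — and I expect this to be the cleanest part — is that the base generating set $S_i \subseteq \mathsf{Sym}_{n_i}$ one extracts from $Y_i$ can be taken to consist of a constant number of \emph{monotone} permutations (or a constant number of monotone partial maps, padded appropriately), so that $\Cay(\mathsf{Sym}_{n_i}, S_i)$ realizes (as a multigraph on $[n_i]$ via the defining action on the ground set, not the regular action on all of $\mathsf{Sym}_{n_i}$) the monotone expander $Y_i$, up to adjusting the normalization.

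Next I would apply \cref{theo:main_2} to the pair $(\mathsf{Sym}_{n_i}, S_i)$ with target expansion $\lambda$: this produces $S_i' \subseteq \mathsf{Sym}_{n_i}$ with $\Cay(\mathsf{Sym}_{n_i},S_i')$ a $\lambda$-expander and $\abs{S_i'} = O(\abs{S_i}/\lambda^{2+o(1)}) = O(1/\lambda^{2+o(1)})$, computable in time $\poly(\abs{S_i}/\lambda)$ with a group oracle (which here is trivially efficient since elements of $\mathsf{Sym}_{n_i}$ have description length $O(n_i \log n_i)$ and composition is polynomial-time), hence strongly explicit. The crucial structural claim — and this is where the real work lies — is that the elements of $S_i'$ are again monotone partial maps, or can each be written as a product of $d^{O(1)}$ monotone partial maps. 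This should follow from the \emph{locality} of the amplification: each generator in $S_i'$ is a word of length $\ell = O(\log(1/\lambda))$ in the original generators $S_i$ (this is what "edges correspond to short walks" means, and is the same mechanism behind \cref{theo:main_1_informal}). A composition of $\ell$ monotone partial maps is not itself monotone, but it \emph{is} a composition of $\ell$ monotone partial maps, and $\ell = O(\log(1/\lambda))$. To get the stated bound "$d^{O(1)}$-monotone" one must turn $\log(1/\lambda)$ into $\mathrm{poly}(d)$ with $d = O(1/\lambda^{2+o(1)})$; since $\log(1/\lambda) = O(\log d) \le d^{O(1)}$ this is immediate, and in fact summing over the $\abs{S_i'} = O(1/\lambda^{2+o(1)})$ generators the total number of monotone maps in the decomposition of the edge set of $X_i' := \Cay(\mathsf{Sym}_{n_i},S_i')$ is $O(1/\lambda^{2+o(1)}) \cdot O(\log(1/\lambda)) = d^{O(1)}$.

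Finally I would assemble: set $X_i' = \Cay(\mathsf{Sym}_{n_i}, S_i')$ viewed as a multigraph on $[n_i]$ through the natural action on the ground set. It is $d$-regular with $d = O(1/\lambda^{2+o(1)})$, it satisfies $\lambda(X_i') \le \lambda$ because expansion of the Cayley graph with respect to the \emph{standard} representation transfers to the action graph (this is the same subtlety flagged in the footnote to \cref{theo:main_1_informal}, and I would invoke the same argument used there), its edge set decomposes into $d^{O(1)}$ monotone partial maps by the previous paragraph, and it is explicit by \cref{theo:main_2}. The main obstacle, to reiterate, is the bookkeeping showing that the amplified generators decompose into few monotone maps: one needs that the word-of-length-$O(\log(1/\lambda))$ structure of $S_i'$ is genuinely monotone-map-respecting, i.e.\ that the amplification never "mixes" generators in a way that destroys the partial-map structure — but since amplification only ever takes products of base generators (and truncations/restrictions thereof), and monotone partial maps are closed under composition and restriction, this goes through. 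A secondary nuisance is handling partial vs.\ total maps cleanly (the Bourgain–Yehudayoff maps are partial), which I would deal with by the standard padding trick of completing each partial monotone map to a total monotone map on a slightly larger ground set, at the cost of an $O(1)$ factor in degree.
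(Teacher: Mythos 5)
Your proposal has a genuine gap right at the place you flag as "the real work." You propose to take the base generating set $S_i \subseteq \mathsf{Sym}_{n_i}$ to consist either of "monotone permutations" or of "monotone partial maps, padded appropriately" to total monotone maps on a slightly larger ground set. Neither of these exists in a useful form: a monotone \emph{bijection} of a finite totally ordered set to itself is necessarily the identity (and the same holds on any padded ground set), so there are no nontrivial monotone permutations to feed into the amplification machinery. This is precisely why the paper cannot simply "carry monotonicity along": the permutation amplification operates on permutations, and permutations are not monotone. Relatedly, your sentence "A composition of $\ell$ monotone partial maps is not itself monotone, but it is a composition of $\ell$ monotone partial maps" is incorrect and contradicts your own later (correct) observation that monotone partial maps are closed under composition; if compositions of monotone maps failed to be monotone the whole approach would be hopeless.

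What the paper actually does is the missing idea. For each Bourgain--Yehudayoff monotone partial map $f_i$ it forms the \emph{complement} $\overline{f}_i$: the unique monotone partial map on the complementary domain and range such that $f_i \cup \overline{f}_i$ is a total bijection. The resulting $\matr P_i = \matr M_{f_i} + \matr M_{\overline{f}_i}$ is a bona fide permutation matrix (generally \emph{not} monotone) expressible as a sum of exactly two monotone-map matrices. These $\matr P_i$ are the base permutations fed into \cref{cor:perm_amp}. After amplification, each amplified generator is a product $\matr P_{i_1}\cdots\matr P_{i_k}$ with $k=O(\log(1/\lambda))$, and expanding the product distributes into at most $2^k$ terms $\matr M_{g_{i_1}\circ\cdots\circ g_{i_k}}$ with $g_{i_j}\in\{f_{i_j},\overline{f}_{i_j}\}$, each a matrix of a \emph{single} monotone partial map (possibly empty), since monotone maps compose. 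The monotone degree is therefore $d\cdot 2^k = d^{O(1)}$, not the $d\cdot O(\log(1/\lambda))$ in your count; your count would follow only if the base generators themselves were monotone, which they cannot be. Your overall framework (permutation amplification via the standard representation of $\mathsf{Sym}_n$, locality giving short words, etc.) matches the paper, but without the complement-pairing decomposition the argument does not close.
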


The approach is similar to that used for \cref{theo:main_1_informal}; we
express it as a sum of permutation matrices and amplify their expansion obtaining the following result. It would be
  really interesting to obtain an almost Ramanujan trade-off with
  respect to the monotone degree.

%
%\begin{remark}\todo{Should we move it to before Cor 1.6 and as main text?}  \end{remark}

\paragraph*{Dimension Expanders} Loosely speaking, dimension
expanders (over any field $\F$) are a linear algebraic extension of
expanders: a collection of $d$ linear maps on $\F^n$, which
significantly \emph{expands} (the span of) any vector space of
dimension below $n/2$. They were defined by Barak \etal in~\cite{BISW01}. Over complex numbers, any quantum expander is a
dimension expander. More generally, Dvir and Shpilka~\cite{DS09}
proved that a monotone expander directly yields a dimension expander
over every field. We give spectral almost Ramanujan expanders that
have the additional property of being dimension expanders.
{Additionally, if the starting dimension is small enough then we
achieve almost doubling of the starting dimension. See~\cref{cor:dimension} for a precise statement.}

%% \paragraph*{Product Replacement Algorithm}
%% Sampling near-uniform random elements of groups which have expanding
%% generators is easy - a random walk on the associated Cayley graph will
%% converge in logarithmic time. Remarkably, the same bound can be
%% achieved for {\em any} group (even if its diameter given a generating
%% set is linear!). This is achieved through the magic of the {\em
%%   Product Replacement Algorithm}: an auxiliary graph, whose vertices
%% are tuples of group elements (see more in [PZ]). The logarithmic
%% convergence of this algorithm was only recently proved. Again, our
%% structure preserving amplification, this auxillary graph can too be
%% made nearly-Ramanujan.

%% Such a walk is used, for
%% example, to sample random group elements~\cite{PZ01}.

%% One applications use of constructing near-optimal Ramanujan Cayley
%% graphs is to sample random group elements efficiently.  Given a Cayley
%% graph, $\Cay(G,S)$, one can consider a random walk on $G$ which starts
%% at an arbitrary vertex $g$ and at each step moves to a random neighbor
%% $g \to sg$.  Spectral expansion guarantees that walks mix quickly, \ie
%% in at most $ O_{\lambda}(\log |G|)$ steps (see \cite{HooryLW06}).  The
%% amount of randomness used in each step is $\log d$ and since the
%% degree versus expansion trade-off is now optimal, we can achieve the
%% same convergence guarantee using a smaller degree and thus the random
%% walks is more efficient in terms of randomness.

\paragraph*{Kazhdan Constant}
We can also amplify operators in \emph{infinite dimensional} Hilbert
spaces. This allows us to obtain improved (average) Kazhdan constants
of groups with \enquote{Property (T)}, which is an analog of expansion for
discrete groups. This implies better bounds for the \emph{product
  replacement algorithm}~\cite{LP00} to sample group elements (See~\cref{sec:appl} for details).

\begin{restatable}[Amplifying Average Kazhdan Constant]{corollary}{TheoKahzdan}\label{theo:kazhdan}
  Let $G$ be a finitely generated group and $S$ a finite set of generators such that the average Kazhdan constant $\overline{\mathcal{K}}(G,S)$ is
  equal to $2\cdot (1-\expan_0)$ for some constant $\expan_0 \in (0,1)$.
  For every $\expan \in (0,1)$, there is a set $S' \subseteq G$ such that
  \begin{enumerate}[topsep=4pt, itemsep=1pt]
    \item $\overline{\mathcal{K}}(G,S') \ge 2\cdot (1-\expan)$,  and thus, $\mathcal{K}(G,S') \ge 2\cdot (1-\expan)$. 
    \item $\abs{S'} = O_{\expan_0}(\abs{S}/\expan^{2+\edits{o_\lambda}(1)})$, and
    \item $S'$ can be found in time $\poly(\abs{S}/\expan)$ assuming an oracle for group operations on $G$.
  \end{enumerate}  
\end{restatable}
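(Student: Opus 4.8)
The plan is to derive \cref{theo:kazhdan} from \cref{theo:main_2} by translating the (average) Kazhdan constant of a group with Property (T) into a spectral-expansion statement about a suitable operator, applying the operator amplification of \cref{theo:main_2}, and translating back. First, I would recall the standard dictionary: for a discrete group $G$ with finite generating set $S = \{s_1,\dots,s_k\}$ and a unitary representation $\pi$ on a Hilbert space $\mathcal H$ with no invariant vectors, the Kazhdan constant $\mathcal K(G,S)$ measures how far the averaging operator $A_\pi = \frac{1}{|S|}\sum_{s\in S}\pi(s)$ can be from having $1$ in its spectrum on the orthocomplement of invariants; the \emph{average} Kazhdan constant $\overline{\mathcal K}(G,S)$ corresponds precisely (up to the normalization chosen in the statement, $\overline{\mathcal K}=2(1-\lambda)$) to a uniform spectral gap bound $\|A_\pi\| \le \lambda$ on invariant-free subspaces, ranging over all unitary representations $\pi$ of $G$. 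Thus the hypothesis $\overline{\mathcal K}(G,S) = 2(1-\lambda_0)$ says exactly that $S$ is a $\lambda_0$-biased distribution over $G$ in the operator sense of \cref{def:bias_group}, where now ``operator'' is interpreted in arbitrary (including infinite-dimensional) Hilbert spaces.

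Next, I would invoke \cref{theo:main_2} (in the infinite-dimensional regime noted in the Kazhdan paragraph — the amplification argument there, being built from Ta-Shma-style word-products of the $\pi(s)$'s, never uses finite dimensionality) to produce $S' \subseteq G$ with $|S'| = O(|S|/\lambda^{2+o(1)})$ that is a $\lambda$-biased distribution, i.e.\ $\|A_\pi'\|\le\lambda$ on invariant-free subspaces for \emph{every} unitary representation $\pi$ of $G$. Translating back through the same dictionary yields $\overline{\mathcal K}(G,S') \ge 2(1-\lambda)$. Item 2 is then immediate from the size bound, and item 3 follows from the efficiency clause of \cref{theo:main_2} (the construction of $S'$ only manipulates words in the abstract group, so an oracle for group operations suffices and the running time is $\poly(|S|/\lambda)$). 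The implication $\overline{\mathcal K}(G,S')\ge 2(1-\lambda) \Rightarrow \mathcal K(G,S')\ge 2(1-\lambda)$ in item 1 is the elementary fact that the (worst-case) Kazhdan constant dominates the average one; I would cite this directly.

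The step that requires genuine care — and which I expect to be the main obstacle — is making the equivalence ``average Kazhdan constant $\leftrightarrow$ operator bias'' precise with the correct normalization, and checking that the amplification of \cref{theo:main_2} indeed carries over verbatim to representations on infinite-dimensional Hilbert spaces. Concretely, one must verify: (i) that $\overline{\mathcal K}(G,S)$, defined via $\inf_{\pi}\inf_{v}\, \mathbb E_{s\in S}\|\pi(s)v - v\|$ over unit vectors $v$ orthogonal to invariants, equals $2(1-\lambda)$ iff $\sup_\pi \|A_\pi\|_{\text{inv-free}} = \lambda$ — this is a short computation using $\|\pi(s)v-v\|^2 = 2 - 2\,\mathrm{Re}\langle \pi(s)v,v\rangle$ together with convexity/averaging, but the passage from ``average over $s$ of $\|\pi(s)v-v\|$'' to ``$\|A_\pi v - v\|$'' versus ``$1 - \langle A_\pi v,v\rangle$'' is exactly where the normalization constant $2$ and a Jensen/triangle-inequality argument enter; and (ii) that the word-product operators appearing in Ta-Shma's amplification (cf.\ the scalar-to-matrix generalization emphasized in the abstract) are still bounded operators with the claimed spectral behavior when the underlying matrices are replaced by unitaries on an arbitrary Hilbert space — this is true because the entire argument is norm-based and uses only submultiplicativity of the operator norm and unitarity, never a trace or a finite spectral decomposition, but it should be stated as a remark so the reader sees that \cref{theo:main_2} was proved at this level of generality. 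Once these two points are pinned down, the corollary is a direct substitution.
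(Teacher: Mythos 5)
Your proposal matches the paper's proof: the paper likewise computes $\overline{\mathcal{K}}(G,S,\rho) = 2\bigl(1 - \opnorm{\Ex{g\sim S}{\rho(g)}}\bigr)$, notes that the operator amplification is dimension-independent and hence valid on arbitrary Hilbert spaces, applies \cref{theo:main_2}, and observes $\mathcal K \ge \overline{\mathcal K}$ since max dominates average. One small simplification over what you anticipate: the paper defines $\overline{\mathcal K}$ using the \emph{squared} norm $\norm{\rho(g)v-v}_2^2$, so the identity $\norm{\rho(g)v-v}_2^2 = 2 - 2\,\mathrm{Re}\ip{\rho(g)v}{v}$ makes the translation to the bias of $\Ex{g\sim S}{\rho(g)}$ an exact linear computation — no Jensen or triangle inequality is needed, which is the care you were (rightly) worried about.
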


\iffalse
\paragraph*{Randomness-efficient Walks}
An immediate consequence of being able to achieve an almost optimum
degree versus expansion trade-off in this generic way is that we
obtain randomness-efficient random walks.  \edits{Should we explain it or delete this paragraph?} \todo{Explaining this paragraph}
\fi

%To some extent . Recall that matrix Chernoff
%  bounds deteriorate with the dimension of the matrices, and we have
%  no fixed bound on their dimension here.}. Standard probabilistic
%techniques, such as the matrix Chernoff bound, have a forbidding dependence
%on the dimension of the matrices for this application.\todo{Maybe explain the bottleneck simply.}

%Switching to the bias distribution viewpoint, a subset $S\subseteq G$
%is said to be $\epsilon$-biased if it \emph{fools} all non-trivial 
%irreducible representations, \ie for every non-trivial irreducible
%\emph{representation}, $\rho$, of $G$, we have $\opnorm{\Ex{s \sim
%    S}{\rho(s)}} \leq \epsilon$.  Here, a representation of a
%group is an operator valued function, $\rho: G \to \matr
%M_\ell(\C)$, that is multiplicative, \ie for every two group elements
%$g_1,g_2$ we have $\rho(g_1g_2) = \rho(g_1)\rho(g_2)$. As mentioned
%earlier, $\Cay(G,S)$ is $\expan$-expanding if and only if $S$ is
%$\expan$-biased set. Thus, the problem of constructing optimal Cayley
%expander can be reformulated as construction of small biased
%distribution with optimal support size. In fact, we will see that the
%techniques work for general matrix value functions (not just
%representations).
\subsection{Techniques}

We first rephrase the technical result of our paper from the perspective of \enquote{bias amplification}. This makes it easy to compare to prior work, which we crucially build upon.  Let $f: S\to \matr M_\ell(\C)$ be a matrix-valued function. The quantity $\opnorm{\Ex{s \sim S}{\matr f(s)}}$ is known as the \textit{bias} of the operator-valued function $f$ with respect to $S$. The key idea in the prior works (and ours) is to establish an \enquote{bias amplification} result of the form,  
%\paragraph{Biased Sets and Expansion}

\begin{theorem}[Template Amplification Result]\label{thm:summary}
	Let $S$ be a finite set and $\expan_0 \in (0, 1)$ be a constant.  For every $\expan > 0$, there exists a deterministic polynomial time algorithm to construct
  $W \subseteq S^t$ of size $\abs{W} \leq \poly(|S|, 1/\lambda)$ such that 
  \begin{itemize}
  	\item \textbf{\textup{Scalar Amplification}} For every function
  $\matr f: S \to \C$ such that $\norm{f}_\infty \le 1$,\\ if  $\abs{\Ex{s \sim S}{\matr f(s)}} \leq  \expan_0$ then we have $\abs{\Ex{(s_1,\cdots, s_t) \sim W}{\matr f(s_t)\cdots f(s_1)}} \leq  \expan$.
  \item\textbf{\textup{Operator Amplification}} For every function
  $\matr f: S \to \matr M_\ell (\C)$ such that $\max_s \opnorm{\matr f(s)} \le 1$, if  $\opnorm{\Ex{s \sim S}{\matr f(s)}} \leq  \expan_0$ then we have $\opnorm{\Ex{(s_1,\cdots, s_t) \sim W}{\matr f(s_t)\cdots f(s_1)}} \leq  \expan$. \end{itemize}    
\end{theorem}

Such a result is closely related to expansion of Cayley graphs in the following way. For a group $G$, a set $S\subseteq G$ is called an \textit{$\epsilon$-biased set} if for every non-trivial irreducible \emph{representation}, $\rho$, of $G$, we have $\opnorm{\Ex{s \sim    S}{\rho(s)}} \leq \epsilon$. {Here, a group representation is a homomorphism from $G$ to the general linear group of operators, $\matr
M_\ell(\C)$}. The notion of $\epsilon$-biased set over $G=\Z^k_2$ was introduced in the pioneering work of Naor and Naor~\cite{NN90}. These small-bias sets have found numerous applications (\eg~\cite{ABNNR92, Vadhan12, Ta-Shma17}).

A symmetric subset $S\subseteq G$ is an $\epsilon$-biased set if and only if the $\Cay(G,S)$ is an $\epsilon$-expander. Therefore if $\Cay(G,S)$ is a given $\lambda_0$-expander, we can apply~\cref{thm:summary} for each irreducible representation $\rho$. This yields that $\Cay(G,S')$ is a $\lambda$-expander where $S' = \{s_t\cdots s_1  \mid (s_1,\cdots, s_t) \in W \}$. The degree of the new graph is $|S'| = |W|$. Note that over Abelian groups, the scalar amplification suffices as the irreducible representations of Abelian groups are $1$-dimensional, \ie scalar-valued functions called \emph{characters}. However, for general finite groups, one needs the amplification result for operator-valued functions.

\paragraph{A first approach} One can take $W =  S^t$ with $t
\approx \log_{\expan_0}(\expan)$. However, now the degree, $|S|^t =
O(1/\expan^{100})$, has also increased and the trade-off remains the
same.  Thus, we want to efficiently compute a sparse subset of $S^t$
that retains the expansion. Since we know what degree we are aiming
for, we could try to take a sparse random sample $S' \subseteq S^t$ of
size $d = O(1/\expan^{2.001})$ and hope that some form of matrix
concentration ensures that $\Cay(G,S')$ is $\expan'$-spectral expander
with $\expan' \approx \expan$. Unfortunately, it is not clear how to
show even the existence of a \emph{single} sparse subset $S'$ that
achieves the required expansion. 

%\edits{Switching to the bias distribution viewpoint, a subset $S\subseteq G$
%is said to be $\epsilon$-biased if it \emph{fools} all non-trivial 
%irreducible representations, \ie for every non-trivial irreducible
%\emph{representation}, $\rho$, of $G$, we have $\opnorm{\Ex{s \sim    S}{\rho(s)}} \leq \epsilon$.   As mentioned
%earlier, $\Cay(G,S)$ is $\expan$-expanding if and only if $S$ is
%$\expan$-biased set. Thus one may try to exploit this connection for expander construction.}
{For example, one could use matrix Chernoff {plus union bound} to select a subset $S'$ such that $\opnorm{\Ex{s\in S'}{\rho(s)}}$  is small for any irreducible representation $\rho$. However, this naive approach requires $|S'|\geq \Omega(\log \dim(\rho))$. For non-abelian groups, we have irreducible representations such that $ \dim(\rho) = \poly(|G|)$, and therefore, this cannot deduce the existence of constant-sized subsets that achieve expansion. This
  difficulty is also present in the proof of the Alon--Roichman
  theorem~\cite{AR94} and the reason why even for non-Abelian groups
  the only generic upper bound known uses $\Omega(\log(\abs{G}))$
  random generators to obtain an expander.} Nevertheless, we will show that, when applied correctly, the equivalence between expansion and small bias distribution can be used to build almost-Ramanujan expanders.

%Therefore,~\cref{thm:summary} directly yields our main result if $|W| \leq O\left(|S|/\lambda^{2+o(1)}\right)$ To see this, say we are given a $\lambda_0$-expander,
% $\Cay(G,S)$. . This yields that 

\subsubsection*{Prior Results on Bias Amplification}

\paragraph{Scalar amplification} Much of the earlier work has focused on the case of Abelian groups, especially $G = \Z_2^n$, as this has connections to other objects like error-correcting codes. 

{Rozenman and Wigderson introduced the approach of using expanders for ``scalar amplification'' via an iterated application of the expander mixing lemma (see~\cref{sec:eml} for details). Alon (in an unpublished email\footnote{We thank the anonymous reviewer for pointing out this reference.})} introduced the idea of using walks on an (auxiliary) expander graph $X$, whose vertices are identified with elements of $S$.
The set $W \subseteq S^t$ is chosen to be the collection of all walks of length $(t-1)$ on $X$. This technique gives a $\lambda$-biased set $W$, of size $|W| \leq O(|S|/\lambda^{4+o(1)})$
(\cf~\cite{Ta-Shma17}), which is quite good but still sub-optimal ($\lambda^{-4}$ instead of $\lambda^{-2}$ ).

Ta-Shma~\cite{Ta-Shma17} managed to close the gap almost optimally ($\lambda^{-2-o(1)}$) using the \emph{$s$-wide replacement product} to derandomize the above amplification. The $s$-wide
replacement product of Ben-Aroya and Ta-Shma~\cite{BT08} is a
higher-order version of the zig-zag product~\cite{RVW00}. Using the
collection of walks on the $s$-wide replacement product allows for a
much smaller collection $W \subseteq S^t$ with nearly optimal
size. This scalar technique was later applied to the more general case
of arbitrary Abelian groups by Jalan and Moshkowitz~\cite{JM21}.

\paragraph{Operator amplification}

To extend Ta-Shma's approach to non-Abelian groups, it is necessary to work with operator-valued functions, $\matr f \colon S \to \matr M_\ell(\C)$, as the irreducible representations are no longer of dimension one. To the best of our knowledge, only one general result was known for
general groups. Chen, Moore and Russell~\cite{CMR13} analyzed the
above expander walk construction using a matrix version of the
expander mixing lemma.  This gives an amplification procedure for
Cayley graphs of general groups, but the resulting degree
$O(\abs{S}/\lambda^{11})$ to achieve final expansion $\lambda$ is sub-optimal.

% \edits{ Analogous to the (folklore results of the) scalar case,
% achieving similar parameters to
%the expander walk approach.
%An alternative to the expander walk is the approach of iteratively using the \textit{expander mixing lemma}(EML). In the scalar case, it is well known that it yields similar bounds of $1/\lambda^{4+o(1)}$ ( see).  Chen,
%Moore and Russell~\cite{CMR13} prove an operator version of it and obtain a dependence factor $1/\lambda^{11}$ in the degree.
%}
\paragraph{Our Results}

We view our main
contribution as the identification of appropriate natural linear
algebraic extensions to Ta-Shma's amplification
framework~\cite{Ta-Shma17}. This gives an almost-optimal \textit{dimension independent} generalization of the scalar amplification result to operator-valued functions. Our result sharpens that of Chen, Moore and Russell~\cite{CMR13} be reducing the degree from $O(\abs{S}/\lambda^{11})$  to $O(\abs{S}/\lambda^{2 + o(1)})$ 

%Note that the definition of $\matr T_W$ extends naturally to a mapping from $\matr M_\ell(\C)^S$ to $\matr M_\ell(\C)^W$. 

\begin{theorem}[Operator Amplification (this work)]\label{theo:main_amp_informal}
  Let $S$ be a finite set and $\expan_0 \in (0, 1)$ be a constant.  For every $\expan > 0$, there exists a deterministic polynomial time algorithm to construct
  $W \subseteq S^t$ of size $\abs{W} \leq  O(\abs{S}/\expan^{2 + o(1)})$ such that for every function
  $\matr f: S \to \matr M_\ell (\C)$ with  $\opnorm{\Ex{s \sim S}{\matr f(s)}} \leq  \expan_0$ and
  $\max_s \opnorm{\matr f(s)} \le 1$, we have $\opnorm{\Ex{w \sim W}{\matr f(w)}} \leq  \expan$. 
\end{theorem}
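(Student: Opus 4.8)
The plan is to lift Ta-Shma's scalar bias-amplification argument~\cite{Ta-Shma17} to operators in the most direct possible way: keep Ta-Shma's collection $W$ unchanged, carry along an auxiliary $\C^\ell$ register on which the matrices $\matr f(s)$ act by multiplication, and observe that every estimate in the scalar analysis has a dimension-free operator-norm counterpart. The matrix Chernoff bound is useless here (its loss grows with $\ell$), so the fact that Ta-Shma's amplification is not a concentration argument but a linear-algebraic norm-tracking argument is exactly what makes this possible.

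\emph{The collection $W$.} I would take $W$ to be literally Ta-Shma's set of length-$t$ walks on the $s$-wide replacement product~\cite{BAST08,Ta-Shma17} of an explicit $d_1$-regular outer expander $G$ on the vertex set $S$ (with $\lambda(G)\le\lambda_1$) and a small $d_2$-regular inner expander $H$ (with $\lambda(H)\le\lambda_2$, found by brute force); a walk visits a sequence $(s_0,\dots,s_{t-1})\in S^t$ of outer vertices, which is our element of $W$. The counting of $|W|$, the choice of $s,t,\lambda_1,\lambda_2$, the resulting bound $|W| = O(|S|/\lambda^{2+o(1)})$, and the $\poly(|S|/\lambda)$-time explicitness of the construction are identical to the scalar case, and I would import them verbatim.

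\emph{The amplified operator as an operator corner.} Work in $\cH = \C^{V_G}\otimes\C^{V_H}\otimes\C^\ell$. Let $\matr D_{\matr f} = \bigoplus_{s\in S}\matr f(s)$ be the block-diagonal ``twist'' (acting on $\C^{V_G}\otimes\C^\ell$, trivially on $\C^{V_H}$), let $\bar{\matr A}_H$ be the inner step, and for each phase $i\in\{0,\dots,s-1\}$ let $\bar{\matr A}_G^{(i)}$ be the outer step that reads the $i$-th $V_H$-coordinate to select a $G$-edge; these last two do not touch the $\C^\ell$ register. Exactly as in the scalar analysis one checks that
\[
 \Ex{w\sim W}{\matr f(w)} \;=\; (\langle u|\otimes\matr I_\ell)\,\matr W\,(|u\rangle\otimes\matr I_\ell)\mcom
\]
where $u$ is the uniform unit vector on $V_G\times V_H$ and $\matr W$ is the length-$t$ alternating product of the $\bar{\matr A}_H$'s with the $\matr f$-twisted outer steps $\matr D_{\matr f}\bar{\matr A}_G^{(i)}$; the noncommutativity of the $\matr f(s)$ is harmless because the left-to-right order in $\matr f(s_0)\cdots\matr f(s_{t-1})$ is exactly the composition order of the steps. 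Hence $\opnorm{\Ex{w\sim W}{\matr f(w)}} = \opnorm{\matr P_\|\matr W\matr P_\|}$, where $\matr P_\| = |u\rangle\langle u|\otimes\matr I_\ell$ is the rank-$\ell$ projection onto the ``uniform'' subspace $(\mathrm{span}\,\matr{1}_{V_G\times V_H})\otimes\C^\ell$.

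\emph{Re-running the bound.} The three facts driving Ta-Shma's estimate all survive with $|\cdot|$ replaced by $\opnorm{\cdot}$ and with no dependence on $\ell$: (i) $\opnorm{\matr D_{\matr f}} = \max_s\opnorm{\matr f(s)}\le 1$; (ii) restricted to the uniform subspace, $\matr P_\|\matr D_{\matr f}\matr P_\|$ is precisely $\Ex{s\sim S}{\matr f(s)}$, so its norm is $\le\lambda_0$; (iii) $\bar{\matr A}_G^{(i)}$ and $\bar{\matr A}_H$ fix the relevant uniform directions and contract their orthogonal complements by $\lambda_1$ and $\lambda_2$ respectively. Writing each step as $\matr P_\|(\cdot)+\matr P_\perp(\cdot)$ and expanding $\matr P_\|\matr W\matr P_\|$, the only way a summand stays inside $\mathrm{range}(\matr P_\|)$ without paying an expander contraction is through a $\matr P_\|\matr D_{\matr f}\matr P_\|$ factor, which costs $\lambda_0$, while any excursion into $\matr P_\perp$ is contracted, and the $s$-wide structure guarantees that consecutive outer steps within a window behave ``independently'' so that such excursions cannot be avoided cheaply. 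This is exactly the combinatorics Ta-Shma carries out for scalars, so I would invoke his core amplification lemma --- which, read correctly, is a purely linear-algebraic statement about products of operators obeying (i)--(iii) --- to conclude $\opnorm{\matr P_\|\matr W\matr P_\|}\le\lambda$ for the same parameter choices, giving the theorem.

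\emph{The main obstacle.} The real work is not any single inequality but the audit that \emph{nowhere} in Ta-Shma's scalar argument are commutativity, self-adjointness of the twisted adjacency, or $\matr f(v)^2=\matr I$ used --- that $\opnorm{\matr f(v)}\le 1$ (not $=1$) together with the subspace bookkeeping is genuinely all it needs. Since the scalar theorem is already stated for complex $\matr f$ with $\|\matr f\|_\infty\le 1$, I expect this to go through, but the places demanding care are the unfolding of the $s$-wide replacement product (including the partial first and last windows when $s\nmid t$) and the bounding of the ``leakage'' terms between the parallel and perpendicular subspaces, where in the scalar proof some steps may superficially look special to scalars. A secondary point: in the $s$-wide bookkeeping the outer step at phase $i$ must attach the matrix $\matr f(s)$ of the \emph{selected generator} $s\in S$ only, which is consistent precisely because $\matr f$ is a function of $S$ alone.
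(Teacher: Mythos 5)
Your proposal matches the paper's approach essentially exactly. The key move — carrying along an auxiliary $\C^\ell$ register, defining the bias operator $\matr D_{\matr f}$ (the paper's $\pif$, resp.\ $\pifw$) on $\C^\ell\otimes\C[V_X]\otimes\C[V_Y]$, observing that $\matr P_\|\matr D_{\matr f}\matr P_\|$ restricted to the parallel subspace recovers $\Ex{s\sim S}{\matr f(s)}$ (the paper's \cref{claim:parallel_vector_bias}), and then re-running Ta-Shma's $s$-wide replacement-walk analysis with operator norms in place of absolute values — is precisely what the paper does in \cref{sec:simple_amp,sec:adv_ampl}, with the same parallel/perpendicular decomposition and the same parameter choices.
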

%\edits{\paragraph{EML approach} We show that the analysis in \cite{CMR13} of the amplification via
%(iterated applications of) expander mixing lemma can be improved to get $O(\abs{S}/\lambda^{4 + o(1)})$. This matches the parameters that are known in the scalar case, which is a folklore result (see for \eg~\cite{} )
%}

%\tnote{Do we need to justify so much? Remove this?}
%\snote{No need. Remove this. We already mentione this on page 3.}
%\edits{We consider the main contribution of this work to be the broad
%applicability of the near-optimal operator amplification to \emph{any}
%family of expanders. For instance, the existence of almost Ramanujan
%expanders for all expanding groups, including the symmetric group, is
%quite surprising to us.}

%-------------- How we generalize and prove

 The key extension is a simple and
yet extremely useful change in the bias operator ($\matr \Pi_{\matr f}$)
defined by Ta-Shma which is a central object in the analysis of both
\cite{Ta-Shma17} and \cite{JM21}. In both these cases, $\matr f$ is scalar,
and they define,
\begin{align*}
  \matr \pif :\C[S] \to \C[S] \;\text{ where} \; \matr \pif \cdot s = \matr f(s) \cdot s\mper
\end{align*}

However, this approach is not readily generalizable to operators and
the view we take is that if $\matr f : S \to \matr M_\ell(\C)$,
then, $\pif$ is actually an operator on $\C^\ell \otimes \C[S]$
defined as
\begin{align*}
  \matr \pif :\C^\ell \otimes \C[S] \to \C^\ell \otimes \C[S] \;\text{ where} \; \matr \pif \,(v \otimes  s) = \matr f(s)\, v \otimes s\mper
\end{align*}

Clearly, in the Abelian case, we have $\ell = 1$ and this is
isomorphic to the setup by Ta-Shma. This generalization is very
natural and we show that not only does the older machinery gel well
with this, but the proof remains intuitive with the different spaces
neatly delineated. 

More precisely, we first establish an
\emph{operator version} of the expander walk amplification,
and then we derandomize it using (a suitable version of) the $s$-wide
replacement product. Furthermore, since the result does not depend on
the dimension, $\ell$, we can use it even for functions $\matr f: S
\to \cL(\cH)$ where $\cL(\cH)$ is the space of bounded linear
operators on an arbitrary Hilbert space, $\cH$, possibly infinite
dimensional. This is useful if the underlying group is not finite but
finitely generated by $S$.

%This extension will be so natural that it may
%almost feel that we are replacing absolute values in the original
%scalar analysis~\cite{Ta-Shma17} by operator norms. \edits{I don't like this sentence -- However,
%appropriate generalizations and care are needed in such an extension
%to operators.}
%\snote{Agree. Also the line before is kinda strong; in particular the phrase "so natural". Maybe tone it down a bit.}

%%---------------------Problem and first approach----

%\todo{Either not use representation or rearrange the bias paragraph. }
%\snote{I modifed it.}

\subsection{Discussion}\label{subsec:discussion}

The results of this paper have some curious features, which we would
like to elaborate on. For most of them, we will use the following
"bare bones" description of our main spectral amplification result.
Namely, let $S$ be a finite set and $\calH$ a Hilbert space.
Let $\matr f$ be a function mapping elements of $S$ to operators on $\calH$
of unit norm, such that $\opnorm{ \E_{s \in S} [ \matr f(s) ]} \leq \lambda_0$.
For any $\lambda > 0$ take $t= c \log(1/\lambda)$ (for appropriate
$c$). We extend $\matr f$ from $S$ to $S^t$ by defining
$\matr f(s_1,\ldots,s_t) = \matr f(s_1) \cdots \matr f(s_t)$.  Clearly, $\opnorm{\E_{r
    \in S^t} [\matr f(r)]} \le (\opnorm{ \E_{s \in S} [\matr f(s) ]})^t \leq
\lambda$.
Our main result is an explicit construction of a (pseudorandom) subset
$S' \subseteq S^t$, of size only $|S'| = O(|S|/\lambda^{2+o(1)})$, with a
similar guarantee, namely $\opnorm{\E_{s' \in S'} [\matr f(s')]} \leq \lambda$.

\paragraph{Dimension Independence} Note that if the operators in
$S$ are 1-dimensional, namely scalars, then the {\em existence} of a
set $S'$ of this size (which is best possible even in this
1-dimensional case) follows directly from the Chernoff bound. Indeed,
Ta-Shma's construction~\cite{Ta-Shma17} may be viewed as derandomizing
this result, producing an explicit such $S'$.

One may try to do the same for operators in a higher dimension, say
$\ell$, by appealing to the Matrix Chernoff bounds of Ahlswede--Winter
\cite{AW02} (see also Tropp \cite{T15}). However, these concentration
inequalities pay a factor of $\ell$ in the tail bound, resulting in a
set $S'$ of size $\Omega(\log(\ell))$. As the dimension $\ell$ is
arbitrary (indeed, may be infinite), such a bound is useless.

Thus, our explicit construction has no known probabilistic (or other
existential) analog! What is curious is that our dimension-independent
analysis follows very closely that of Ta-Shma for 1-dimension, roughly
speaking, replacing scalar absolute values by the operator norm in any
dimension. We feel that it would be extremely interesting to find a
matrix concentration inequality for sampling product sets like $S^t$,
which is dimension independent.

\paragraph{Algebraic vs.\  Combinatorial Expander Constructions} Our
explicit construction of the pseudorandom set $S'$ above uses
expanders obtained from the $s$-wide zig-zag product of
\cite{BT08}. This is a combinatorial construction, a refinement of the
original zig-zag product construction of \cite{RVW00}. Nonetheless, it
has significant consequences to algebraic expander constructions which
use group theory, namely to the expansion of Cayley graphs {over the same
group.} We can take $S$ to be an expanding
generating set of a group and $\matr f$ to be some non-trivial
irreducible representation $\rho$.  Instead of using the $t$-product
of elements of $S^t$ to obtain a new amplified generating set $S'$, a
much sparser subset is chosen using the $s$-wide zig-zag construction.
The analysis of the norm amplification discussed above yields the
required expansion bound, in a way that has no dependence on the group
or the representation. The flexibility in mapping element of $S$ to
operators underlies the versatility of our spectral amplification. It
allows us to preserve some of the structure of the expanders whose
expansion are being amplified. In this case, both the starting
expander and the amplified expander are Cayley graph over the same
group.

It is interesting to note that this is a recurring phenomenon. In
\cite{ALW01}, it was discovered that the zig-zag product may be viewed
as a combinatorial generalization of the algebraic semi-direct product
of groups. This connection made possible the construction of new
expanding Cayley graphs in groups that are far from being simple, \eg
in \cite{MW04,RSW06}. It is rewarding to see again how new
combinatorial constructions, sometimes inferior in certain parameters to
some algebraic ones, yield new results in group theory.

\paragraph{Iterated Pseudorandomness} Another interesting aspect of
our result is the following. Recall that expanders are pseudorandom
objects for many purposes. One important purpose is sampling - rather
than sampling $t$ independent random elements in some set $S$, one may
sample $t$ points along a random walk on an expander on the vertex set
$S$ and a Chernoff type bound still holds (a nontrivial result of
\cite{G98})- this affords significant savings in the number of random
bits spent. For this result, any expander would do. What happens in
this paper is an iterated use of expanders as samplers as follows. We
first choose a sparse pseudorandom set of $t$-walks inside $S^t$ using
expanders walks. Then, we choose a yet sparser pseudorandom set inside
it, again using walks on an additional expander. This repeated use of
expanders improves the trade-off between the quality of spectral
amplification and the size of the final pseudorandom set to
near-optimal. The construction of this iterated selection of walks
seems critical, and (as in Ta-Shma's paper) is chosen to come from the
$s$-wide zig-zag product of two expanders~\cite{BT08}.

\paragraph{Groups and Expansion} For us, the most surprising consequence
of our results is that ``weak'' simple groups, especially the
symmetric group,\footnote{See also the groups in \cite{RSW06}, which
  are iterated wreath products of symmetric groups.} can have
near-Ramanujan generators. The question of which groups are expanding,
and just how expanding they are, is an old quest of group theory. One
dichotomy is whether {\em every} finite set of generators of the group
is expanding (these are ``strongly expanding'' groups), or if some are
and some aren't (these are ``weakly expanding'' groups). For the
symmetric group, many finite non-expanding generating sets of constant
size were long known, and Kassabov's breakthrough construction
\cite{Kas07} designed a constant size \emph{expanding} generating set.
The symmetric group is then a weakly expanding group, while, \eg
simple groups of Lie type (namely, matrix groups) are believed, and in
some cases known, to be strongly expanding. Nonetheless, our
construction works equally well for all, and we have almost Ramanujan
generators for all expanding groups.

\paragraph{Closeness to the Ramanujan Bound} As mentioned above, a family
of $d$-regular graphs is called Ramanujan if its spectral expansion
parameter $\lambda$ is at most $2\sqrt{d-1}/d$. This terminology was
introduced in the seminal work of Lubotzky, Phillips and
Sarnak~\cite{LPS88}, and it designates the optimal degree versus
expansion trade-off that a family of bounded degree expanders can
achieve. Several notions of closeness to the Ramanujan bound were
investigated, \eg $(2\sqrt{d-1}+\epsilon)/d$ (with $\epsilon > 0$
small or vanishing) in~\cite{Friedman03,MOP20,JMOPT22},
$O(1/\sqrt{d})$ in~\cite{ACKM19}, $\polylog(d)/\sqrt{d}$
in~\cite{BL06,JMOPT22} and more generally $d^{o_d(1)}/d^{1/2}$.

In this work, we obtain a bound of the form $\lambda \le
O(2^{\log(d)^c}/d^{1/2})$ for some constant $c \in (0,1)$, which we
refer to as an almost Ramanujan bound. Rephrasing in terms of the
expansion parameter, we achieve expansion $\lambda$ with degree
$O(1/\lambda^{2+\beta})$, where $\beta = O(1/\log(1/\lambda))^{c'}$
for some $c' \in (0,1)$. We stress that the nomenclatures almost
Ramanujan, near Ramanujan and etc, may vary depending on the author.
Improving the results in this work to achieve trade-offs even closer
to the Ramanujan bound (if possible) is of great interest. We suspect
that new ideas may be required to substantially improve the bound to,
say, expansion $\lambda$ versus degree
$O(\polylog(1/\lambda)/\lambda^2)$.

%\todo{Move this before and say this is what we mean by almost optimal?}

\subsection{Outline}

We start in \cref{sec:prelim} by summarizing basic definitions and the
notation used throughout the paper. In \cref{sec:simple_amp}, we
generalize the simpler construction of Ta-Shma based on expander
walks. Apart from serving as a nice warm-up to the more-involved
construction, it will be used as a bootstrap for the more involved
construction based on $s$-wide replacement product which is the
subject of \cref{sec:adv_ampl}. Here, we prove the main amplification
result (a formal version of~\cref{theo:main_amp_informal} above) and
instantiate using known constructions and those obtained from
\cref{sec:simple_amp} which establishes \cref{theo:main_2}.
\cref{sec:appl} discusses the permutation amplification trick and
formally completes the proof of \cref{theo:main_1_informal}. It also
discusses the other applications in more detail.
{Finally,~\cref{sec:eml} gives an operator version of the expander
mixing lemma which improves the analysis of~\cite{CMR13}.}

\section{Preliminaries}\label{sec:prelim}

Let $X=(V,E)$ be an $n$-vertex $d$-regular { undirected} multigraph for some $d \ge
1$. We denote by $\matr A_X$ the normalized adjacency matrix of $X$,
\ie the adjacency matrix divided by $d$.

\begin{definition}[$\expan$-spectral Expander]
  Let the eigenvalues of the {symmetric} matrix $\matr A_X$, denoted as $\spec(A_X)$, be $ \set{1=\lambda_1 \ge \cdots \ge \lambda_n}$ and  define $\lambda(X) = \max\set{\abs{\lambda_2},\abs{\lambda_n}}$.
  We say that $X$ is a $\expan$-spectral expander if $\lambda(X) \le \expan$.
\end{definition}

We denote by $G$ a finite group (except in \cref{subsec:kazhdan} where
we only need it to be finitely generated). For a multiset $S \subseteq
G$, $\Cay(G,S)$ denotes a multigraph with the
vertex set being $G$ and edges $\set{(g,sg)\;\vert\; g\in G,\; s \in S
}$. {We will assume throughout that $S = S^{-1}$ and therefore, the graph is undirected.}

{
\begin{definition}[Explicit graph]\label{def:explicit}
	A family of graphs $\{X_i\}_{i\in \N}$ is said to be \textit{explicit} if the adjacency matrix of $X_i$ can be computed deterministically in $\poly(|X_i|)$-time.  Moreover, it is said to be \textit{strongly explicit} if the list of its neighbors of any vertex in $X_i$ can be computed   $\poly(\log|X_i|)$-time.
\end{definition}
}
\paragraph{Group Representations}

In order to study the expansion of a Cayley graph, we will use the
notion of a group representation\footnote{Additional background on
  representation theory of finite groups can be found
  in~\cite{serre96}.}. \textit{Weyl's unitary trick}, says that for a
large family of groups (which includes all finite groups), every
representation can be made unitary and thus, we can restrict to
studying these.

\noindent Let $\cH$ be a complex Hilbert space and denote by
$\cL(\cH)$ the algebra of bounded linear operators\footnote{For most
  applications, one can think of $\cH = \C^n$ for some $n$, and
  $\cL(\cH) = \matr M_n(\C)$, the space of $n\times n$ complex
  matrices.  However, we will need the generality in
  \cref{subsec:kazhdan}. } on $\cH$. We denote by $\textup{U}_\cH$ the
unitary group of operators acting on $\cH$.

\begin{definition}[Group Representation]\label{def:rep}
  For a group $G$, a representation is a pair $(\rho, \cH)$
  where $\rho:G\to \textup{U}_\cH$ is a group homomorphism, \ie for
  every $g_1,g_2 \in G$, we have $\rho(g_1g_2) = \rho(g_1)\,\rho(g_2)$.
  A representation is \textit{irreducible} if the only subspaces of $\cH$ that are invariant
  under the action of $\rho(G)$ are the empty space, $\set{0}$, and the entire space, $\cH$. 
\end{definition}

\noindent Every group has two special representations,  which are,
\begin{enumerate}[topsep=-0.1em,leftmargin=*]
	\item (Trivial representation ) -- $(\rho, \C)$ where for every $g$,  $\rho(g) = 1$.
	\item ((left) Regular representation ) -- $(\rho_{\reg},  \mathcal{V}_{\reg})$ where, $ \mathcal{V}_{\reg}= \C[G]$ is
              a vector space with the elements of $G$ being an orthonormal basis,  and  $\rho_{\mathrm{reg}}(g):  h \mapsto g\cdot h$.  
\end{enumerate}

\begin{fact}\label{fact:regular}
	Let $G$ be a finite group and let $\mathcal{V}_{\reg}$ be the regular representation over $\mathbb{C}$. We have
	\begin{align*}
		\mathcal{V}_{\reg} \cong  \bigoplus_{ (\rho, V_\rho) \in \textup{Irrep}(G)} \textup{dim}(\rho) \cdot \mathcal{V}_{\rho},
	\end{align*}
	where $\textup{Irrep}(G)$ denotes the set of irreducible representations of $G$.
\end{fact}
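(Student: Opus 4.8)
The plan is to use character theory of finite groups. Since every finite-dimensional complex representation of the finite group $G$ is unitarizable (Weyl's unitary trick, noted above) and the orthogonal complement of an invariant subspace of a unitary representation is again invariant, $\mathcal{V}_{\reg}$ is completely reducible, so $\mathcal{V}_{\reg} \cong \bigoplus_{(\rho, V_\rho) \in \textup{Irrep}(G)} m_\rho \cdot \mathcal{V}_\rho$ for some non-negative integers $m_\rho$. It then suffices to show $m_\rho = \textup{dim}(\rho)$ for each irreducible $\rho$.

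First I would compute the character of the regular representation. In the orthonormal basis $\set{h : h \in G}$ of $\C[G]$, the operator $\rho_{\reg}(g)$ acts by the permutation $h \mapsto g\cdot h$, so $\Tr(\rho_{\reg}(g))$ is the number of $h \in G$ with $g h = h$. Since $gh = h$ forces $g$ to be the identity $e$, this trace equals $\abs{G}$ when $g = e$ and $0$ otherwise. Next, by the orthogonality relations for irreducible characters, the multiplicity of $\rho$ in the decomposition above is the inner product $m_\rho = \tfrac{1}{\abs{G}}\sum_{g \in G} \Tr(\rho_{\reg}(g))\,\overline{\Tr(\rho(g))}$; only the $g = e$ term survives, giving $m_\rho = \tfrac{1}{\abs{G}}\cdot\abs{G}\cdot\overline{\Tr(\rho(e))} = \Tr(\rho(e)) = \textup{dim}(\rho)$, which is exactly the claim.

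An alternative route that avoids characters is the Artin--Wedderburn structure theorem: complete reducibility makes $\C[G]$ a semisimple algebra, hence $\C[G] \cong \bigoplus_{\rho} M_{\textup{dim}(\rho)}(\C)$ as $\C$-algebras, the sum over irreducibles. Viewing each matrix block $M_{\textup{dim}(\rho)}(\C)$ as a left module over itself, it decomposes into $\textup{dim}(\rho)$ copies of its column space $\C^{\textup{dim}(\rho)}$, on which left multiplication realizes the irreducible representation $\rho$; taking the direct sum over all blocks recovers $\mathcal{V}_{\reg} = \C[G]$ as a $G$-representation and yields the stated decomposition.

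This is a classical fact, so there is no genuine obstacle; the only decision is which standard black box to invoke — the orthogonality relations for characters, or the Wedderburn decomposition of the group algebra — and either can be cited from a standard reference such as \cite{serre96}. I would present the character-theoretic version as the main proof since it is the shortest and keeps the computation entirely elementary.
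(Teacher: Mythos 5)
The paper states this as a \emph{Fact} and does not prove it; it relies on it as a standard result from the representation theory of finite groups (the footnote in \cref{sec:prelim} points the reader to \cite{serre96} for background). Your proof is correct and both routes you sketch --- the character-theoretic computation $m_\rho = \langle \chi_{\reg}, \chi_\rho\rangle = \dim(\rho)$, and the Artin--Wedderburn decomposition $\C[G] \cong \bigoplus_\rho M_{\dim(\rho)}(\C)$ viewed as a left module over itself --- are the two textbook arguments. Since the paper offers no proof of its own, there is nothing to diverge from; your character-theoretic version is the standard short proof and is exactly what a reference such as \cite{serre96} would give.
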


\paragraph{Expanders and Biased Distributions}

It follows from definitions that the normalized adjacency matrix of
$\Cay(G,S)$ is given by $\matr A = \Ex{s\sim S}{\rho_{\reg}(s)}$.
Moreover, the copy of the trivial representation is the space spanned
by the all-ones vector.  \cref{fact:regular} implies that this can be
block diagonalized and therefore,
\begin{align*}
  \spec(\matr A) &= \bigcup_{\rho \in \textup{Irrep}(G)} \spec(\Ex{s\sim S}{\rho(s)} ), \;\;\;\text{ and thus,  } \\
  \expan(\Cay(G,S)) &= \max_{ \stackalign{ \rho \in \textup{Irrep}(G)\\ \rho \text{ is non-trivial } } } \opnorm{ \Ex{s\sim S}{\rho(s)} }\mper
\end{align*}
For any bounded linear operator, $\matr T \colon \cH \to \cH'$, between  Hilbert spaces, we have
\[
\opnorm{\matr T} = \sup_{v \in \cH : \norm{v}= 1} \norm{\matr T v} = \sup_{v \in \cH, w \in \cH' : \norm{v}= \norm{w}=1} \abs{\ip{\matr T v}{w}} \mcom
\]
where $\norm{v} = \sqrt{\ip{v}{v}_\cH}$ and $\norm{w} = \sqrt{\ip{w}{w}_{\cH'}}$.
%(non-\edits{zero instead of non-empty})
Given this equivalence, we will find it convenient to work with the
operator norm version referred to as \emph{bias} in the
literature~\cite{CMR13}.

\begin{definition}[Biased Set on $G$]\label{def:bias_group}
  Let $\epsilon \in (0,1)$.
  We say that a multiset $S$ of elements of a group $G$ is
  $\epsilon$-biased if for every non-trivial irreducible
  representation $\rho$, we have $\opnorm{ \Ex{s\sim S}{\rho(s)} }
  \leq \epsilon$. We sometimes use the shorthand $\bias(S) \leq
  \epsilon$, where $\bias(S) = \lambda(\Cay(G,S))$.
\end{definition}

Irreducible representations of Abelian groups, called
\emph{characters}, have dimension $1$. Thus, this definition coincides
with the usual one of $\epsilon$-biased distribution \emph{fooling}
non-trivial characters~\cite{NN90,AGHP92}. These pseudorandom
distributions were introduced in the pioneering work of Naor and Naor
where several applications to derandomization were given~\cite{NN90}.

\subsection*{Notation}

Since we deal with various vector spaces and graphs, we will find it
useful to establish some convenient notation.  While we recall these
in the relevant section, the following is a summary for ready
reference.

\begin{enumerate}[leftmargin=*,label = $\cdot$]
  \item The main multigraphs we study will be $X$ and $Y$ with vertices $V_X, V_Y$ and normalized adjacency operators $\matr A_X,  \matr A_Y$.
  
  \item We denote vertices of $X,Y$ by $x,y$ and an ordered tuple of vertices by $\vec{x}= (x_0,\cdots, x_t)$.
    
  \item We use  $u,v,w$ to denote arbitrary vectors in $ \cH$ and $x,y$ for basis vectors of $\C[V_X], \C[V_Y]$ where $\C[V_X]$ is the complex vector
        space with the elements of $V_X$ being a orthonormal basis.
        
  \item The tensored vector spaces have an induced inner product. For $\vsX\coloneqq \cH\otimes \C[V_X]$, it is $\ip{v \otimes x}{w\otimes x'} = \ip{v}{w}_\cH \ip{x}{x'}$.
        Similarly,  we have one on $\vsXY\coloneqq \vsX \otimes \C[V_Y]$.
    
  \item Orthogonal decomposition: $\vsX = \vsX^\parallel \oplus \vsX^\perp$ where $\vsX^{\parallel} \coloneqq \textup{span}\set{ v \otimes \vec{1}  \mid v \in\cH}$.
        Here, $\vec{1}$ denotes the un-normalized all-ones vector.  
        Similarly,  $\vsXY = \vsXY^\parallel \oplus \vsXY^\perp$, where $\vsXY^{\parallel} \coloneqq \textup{span}\set{ z \otimes \vec{1}  \mid z \in\vsX}$.
        
  \item The operator $\smash{\AA} $ denotes the extension of operator $\matr A$ to a tensor product of spaces where it acts as identity on the other spaces.
        For example,  $\matr A_X$ acts on $\C[V_X]$ and its extension to $\vsX$ is $\AA_X = \matr I_\cH\otimes \matr A_X$. However, if we were working on $\vsXY$,
        it would be $\AA_X = \matr I_\cH\otimes \matr A_X\otimes \matr I_Y$ instead\footnote{The spaces will be self-evident and the use of the same notation
          should not be confusing.}.
        
  \item Given an operator valued function $\matr f : V_X \to \cL(\cH) $, the generalized \emph{bias operator} is defined on the basis as\footnote{An equivalent matrix definition is
        $\matr \Pi_{\matr f} \coloneqq \sum_{x\in V_X} \matr f (x)\, \otimes \matr E_{x,x}$ where $\matr E_{x,x} \in \mathbb{C}^{V_X\times V_X}$
        is the diagonal matrix with exactly one non-zero entry of value $1$ in the row and column indexed by the vertex $x$.},
    \[\pif \colon \vsX \to \vsX , \;\ v \otimes x\mapsto \matr f (x)\, v \otimes x .\]
\end{enumerate}

\section{Operator Bias Reduction via Expander Walks}\label{sec:simple_amp}

In this section, we establish a new \emph{operator} analog of the
expander walk--based bias amplification procedure for \emph{scalars}. An analysis of this scalar
amplification was given by Ta-Shma in~\cite{Ta-Shma17}. {We prove an operator analog that can amplify from any bias~(\cref{cor:ta-shma_simp_matrix_any_bias}) which implies the main result below~(\cref{cor:ta-shma_simp_matrix}), that amplifies from constant bias.
}

\begin{theorem}[Operator Amplification via Expander Walks]\label{cor:ta-shma_simp_matrix}
  Let $X$ be a $\lambda(X)$-spectral expander, and let $\cW_t$ be the collection of walks obtained from walks of length $t$ on $X$. Then for any operator valued function $\matr f$ such that $\opnorm{\Ex{x \in V_X}{ \matr f (x)\, } } \le \expan_0$
  and $\max_{x \in V_X} \opnorm{\matr f (x)\,} \le 1$, we have
  \begin{align*}
    \opnorm{\Ex{(x_0, \ldots x_t) \in \cW_t}{\matr f (x_t) \cdots \matr f (x_0)}}  ~\le~ \left(2\lambda(X) + \expan_0 \right)^{\lfloor t/2 \rfloor}\mper
  \end{align*}
\end{theorem}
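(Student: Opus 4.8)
The plan is to mimic Ta-Shma's scalar analysis of the Rozenman--Wigderson amplification, but carried out entirely in terms of the operator $\pif$ acting on $\vsX = \cH \otimes \C[V_X]$ and the extended walk operator $\AA_X = \matr I_\cH \otimes \matr A_X$. The key observation is that the quantity we want to bound,
\[
  \opnorm{\Ex{(s_0,\ldots,s_t) \in \cW_t}{\matr f(s_t)\cdots \matr f(s_0)}},
\]
can be realized as (a compression of) an operator norm of the product $(\pif \AA_X)^t \pif$ — or more precisely, sandwiched between projections onto $\vsX^{\parallel}$, since averaging over the walk distribution starting and ending at a uniformly random vertex corresponds to applying the rank-one projector $\frac{1}{|V_X|}\vec 1 \vec 1^{\,*}$ (tensored with $\matr I_\cH$) on both ends. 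So the first step is to write down this identity precisely: for unit vectors $v, w \in \cH$,
\[
  \ip{\Ex{w \in \cW_t}{\matr f(s_t)\cdots \matr f(s_0)}\, v}{w}
  = \ip{ (\pif \AA_X)^t \pif \,(v \otimes \tfrac{1}{\sqrt{|V_X|}}\vec 1)}{ w \otimes \tfrac{1}{\sqrt{|V_X|}}\vec 1},
\]
so it suffices to bound $\opnorm{\JJ_X (\pif \AA_X)^t \pif \JJ_X}$ where $\JJ_X = \matr I_\cH \otimes \tfrac{1}{|V_X|}\vec 1 \vec 1^{\,*}$ is the projection onto $\vsX^{\parallel}$.

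The second step is the core lemma: $\opnorm{\pif \AA_X \restriction_{\vsX^{\parallel}} \to \vsX^{\parallel}}$-type decay. I would decompose any vector into its $\vsX^{\parallel}$ and $\vsX^{\perp}$ components and track how $\pif$ and $\AA_X$ move mass between these two subspaces. The operator $\AA_X$ acts as identity on $\vsX^{\parallel}$ and has norm $\le \lambda(X)$ on $\vsX^{\perp}$. The operator $\pif$ has norm $\le 1$ everywhere (since $\max_x \opnorm{\matr f(x)} \le 1$), but crucially, when restricted to map $\vsX^{\parallel} \to \vsX^{\parallel}$ it behaves like $\E_x \matr f(x)$, hence has norm $\le \expan_0$ there: indeed $\JJ_X \pif \JJ_X = (\E_x \matr f(x)) \otimes \tfrac{1}{|V_X|}\vec 1\vec 1^{\,*}$. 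This is exactly the operator version of Ta-Shma's two-state (parallel/perpendicular) argument. One then shows, by the standard "every two consecutive steps you either pay a $\lambda(X)$ from being in the perpendicular part or pay an $\expan_0$ from projecting back to parallel, and the cross terms cost at most $2\lambda(X)$" bookkeeping, that $\opnorm{\JJ_X (\pif\AA_X)^{2} \JJ_X} \le 2\lambda(X) + \expan_0$ — or more carefully, set up a transfer-matrix / recursive bound on the $2\times 2$ system of norms $(\|\,\cdot\,\|_{\parallel}, \|\,\cdot\,\|_{\perp})$ and iterate $\lfloor t/2 \rfloor$ times to get the claimed $(2\lambda(X)+\expan_0)^{\lfloor t/2\rfloor}$.

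The main obstacle — and the place where care beyond "replace absolute value by operator norm" is needed — is that in the non-commutative setting one cannot simply multiply scalar bounds: $\opnorm{AB} \le \opnorm{A}\opnorm{B}$ is an inequality, not equality, and more importantly the "bias" of a product of operators is not controlled by the product of biases unless one is careful about which subspaces things land in. So the delicate part is justifying that the only way to accumulate a large norm is to repeatedly return to $\vsX^{\parallel}$ (paying $\expan_0$ each time) or stay in $\vsX^{\perp}$ (paying $\lambda(X)$ each step), and that mixed trajectories are dominated by the $(2\lambda(X)+\expan_0)$ bound per two steps. I expect this is handled exactly as in the scalar case by introducing the decomposition $\pif = \pif^{\parallel\parallel} + \pif^{\parallel\perp} + \pif^{\perp\parallel} + \pif^{\perp\perp}$ relative to $\vsX = \vsX^{\parallel}\oplus\vsX^{\perp}$, bounding $\opnorm{\pif^{\parallel\parallel}} \le \expan_0$ and the other three blocks by $1$, interleaving with $\AA_X$ (which is $\matr I$ on $\parallel$ and $\le\lambda(X)$ on $\perp$), and collecting terms; a triangle-inequality expansion over all $\parallel/\perp$ patterns of length $t$ then gives a geometric-type sum that resums to $(2\lambda(X)+\expan_0)^{\lfloor t/2\rfloor}$. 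The dimension-independence is automatic because $\cH$ only ever appears as a passive tensor factor and all bounds on $\pif, \AA_X$ are stated as operator norms.
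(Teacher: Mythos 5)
Your high-level plan—realize the walk average as the bilinear form of $(\pif\AA_X)^t\pif$ on $\vsX=\cH\otimes\C[V_X]$ applied to vectors in $\vsX^\parallel$, then exploit the $\vsX^\parallel\oplus\vsX^\perp$ decomposition and a two-step contraction—is exactly the paper's strategy, and your identification of $\JJ_X\pif\JJ_X$ with $(\E_x\matr f(x))\otimes\frac{1}{|V_X|}\vec1\vec1^{*}$ is the operator analogue of the paper's Claim 3.2. However, the specific quantity you propose to bound, $\opnorm{\JJ_X(\pif\AA_X)^t\pif\JJ_X}$, does not compose: inserting $\JJ_X+\JJ_X^\perp$ between two-step blocks spawns cross terms that the compressed two-step bound $\opnorm{\JJ_X(\pif\AA_X)^2\JJ_X}\le 2\lambda(X)+\expan_0$ alone does not control, so the iteration "$\lfloor t/2\rfloor$ times" as written has a gap. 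You sense this and offer the $2\times2$ transfer-matrix fallback, which would indeed work, but the paper avoids the whole issue by a cleaner choice of object: since $\opnorm{\pif}\le 1$ and $\opnorm{\lifth}\opnorm{\projh}=1$, it suffices to bound the \emph{full} (uncompressed) operator norm $\opnorm{(\AA_X\pif)^2}$. Because $\AA_X$ appears on both ends, it already acts as identity on $\vsX^\parallel$ and contracts $\vsX^\perp$ by $\lambda(X)$, so decomposing only the \emph{input} vector $z=z^\parallel+z^\perp$ (never the output) and applying the triangle inequality yields $\opnorm{\AA_X\pif\AA_X}\le 2\lambda(X)+\norm{(\pif z^\parallel)^\parallel}\le 2\lambda(X)+\expan_0$, and then submultiplicativity gives $\opnorm{\pif(\AA_X\pif)^t}\le(2\lambda(X)+\expan_0)^{\lfloor t/2\rfloor}$ with no transfer-matrix bookkeeping. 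So: right decomposition and right two-step target, but bound $(\AA_X\pif)^2$ rather than its $\JJ_X$-compression.
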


We remark that a precursor of these techniques, in the simpler setting
of Abelian groups appears in the pioneering work of Naor and Naor
introducing $\epsilon$-biased distributions over the group
$\Z_2^m$ using expanders~\cite{NN90}.

This simpler amplification of~\cref{cor:ta-shma_simp_matrix} will be
crucially used to bootstrap the almost-optimal amplification. Moreover, it yields a construction of expanding Cayley graphs {close to any desired size}, which will be
required later.

This bias reduction procedure uses walks on an auxiliary expander
graph. Here, we only use its expansion property (as opposed to later
when we rely on its structure for the $s$-wide construction). With
this it is already possible to obtain $1/\expan^{4+o(1)}$ dependence
on the final degree of an $\expan$-expander.

\begin{restatable}[]{theorem}{expwalk}\label{theo:exp_walk}
	  Let $S \subseteq G$ such that $\lambda(\Cay(G,S))=\expan_0 < 1$.
  For every $\expan \in (0,1)$ and constant $\beta \in (0,1)$, we can find $S' \subseteq G$ in time
  $\poly(\abs{S},1/\expan_0,1/\expan)$ such that $\lambda(\Cay(G,S')) \le \expan$ and $\abs{S'} = O_{\expan_0}(\abs{S}/\expan^{4+\beta})$.
\end{restatable}

%Towards this, we first formalize the connection between bias of a
%special subset of a group and the operator norm of a certain operator.
%The subset is obtained by taking random walks over an expander graph
%as mentioned above. We then proceed to bound this operator norm.
%Finally, we instantiate our construction with an explicit expander
%graph due to~\cite{Alon21}.

\paragraph*{Notation Recall} Let $S$ be any finite set and let $X$ be a
graph on the vertex set $V_x=S$ with
$\matr A_{X}$ being its normalized adjacency matrix.
Let $\calH$ be a complex Hilbert space
and $\mathcal{L}(\calH)$ be the (bounded) operators on $\calH$; an
important example will be $\mathcal{L}(\calH) = \matr M_\ell(\C)$.
For \emph{any} operator valued function, $\matr f \colon S \to \mathcal{L}(\calH)$, we define the
generalized bias operator as
\begin{align*}
  \pif: \cH\otimes \C[V_X] \mapsto \cH\otimes \C[V_X], \;\; \pif(v \otimes x) =  \matr f (x)\, v \otimes x\mper
\end{align*}
In the scalar case, since $\cH = \C$,  earlier works  \cite{Ta-Shma17,JM21} used the implicit
identification $\C \otimes \C[V_X]\cong \C[V_X]$ and defined $\pif$ as a diagonal matrix.  This
identification is no longer is suitable when $\matr f$ is operator
valued in dimension $> 1$.  However, a simple yet crucial observation is that merely
decoupling the spaces allows us to collect the terms as we proceed
along the walk.

\subsection{Operator Norm Decay}
\begin{lemma}\label{lem:connect} Let $\cW_t \subseteq V_X^{t+1}$ be the collection of all length
$t$ walks on the graph $X$ and we define $\smash{\AA_{X} =
\matr I_\cH \otimes \matr A_{X}}$. Then, we have
\[\opnorm{\Ex{(x_0, \ldots x_t) \in \cW_t}{\matr f (x_t) \cdots \matr f (x_0)}}  ~\leq~ \opnorm{\pif\left( \AA_{X} \pif\right)^t}\leq\opnorm{\left( \AA_{X} \pif\right)^{2}}^{\lfloor t/2\rfloor} \mper
\]
\end{lemma}
\begin{proof}
	\begin{equation}\label{eqn:connect}
 \pif\left( \AA_{X} \pif\right)^t\Ex{x\in V_X}{ v \otimes x }   ~=~ \Ex{(x_t, \cdots ,x_0) \in \cW_t}{\matr f (x_t) \cdots \matr f (x_0)} v \otimes x_t  \mper
\end{equation}
This can be shown easily via induction on $t$, and we refer to
\cref{lem:operator_walk} for a formal proof of a more general
statement.
%  A minor technicality is that the operators in the image of
%$\matr f$ act on $\cH$ whereas $\pif$ acts on the space $\vsX :=
%\cH\otimes \C[V_X]$. 
We use projection and lifting maps to move
between the spaces $\vsX$ and $\cH$. Define $\projh : \vsX \to \cH $
and $\lifth : \cH \to \vsX$, as,
\begin{align*}
  \projh (w \otimes x) = w,\; \; \lifth (v) = \Ex{x\in V_X}{ v \otimes x } = \frac{1}{\abs{V_X}}\, v\otimes \vec{1}  \mper
\end{align*}

From the definition, $\norm{\lifth (v)} = \norm{v}\frac{\norm{\vec{1}}}{\abs{V_X}} = \frac{\norm{v}}{\sqrt{\abs{V_X}}}$ and thus, $\smash{\opnorm{\lifth} =1/\sqrt{\abs{V_X}}}$. We can use Cauchy-Schwarz to get that
$\smash{\opnorm{\projh} = \sqrt{\abs{V_X}}}$. Now, we put this together to obtain a simple expression on the quantity we need to bound
\begin{align*}
    \opnorm{\Ex{(x_0, \ldots x_t) \in \cW_t}{\matr f (x_t) \cdots \matr f (x_0)}}       &~=~ \sup_{\norm{v}= 1} \norm{\Ex{(x_0, \cdots , x_t) \in \cW_t}{\matr f (x_t) \cdots \matr f (x_0)} v}_2\\
           &~=~ \sup_{\norm{v}= 1} \norm{\projh\,\left(\Ex{(x_0, \cdots ,s_t) \in \cW_t}{\matr f (x_t) \cdots \matr f (x_0)} v \otimes x_t \right)}_2\\
       &~=~ \sup_{\norm{v}= 1}  \norm{\projh \pif\left( \AA_{X} \pif\right)^t  \Ex{x\in V_X}{ v \otimes x}}_2\\
&~=~ \sup_{\norm{v}= 1}  \norm{\projh \pif\left( \AA_{X} \pif\right)^t  \lifth \, v  }_2\\
&~\le~ \opnorm{\pif\left( \AA_{X} \pif\right)^t}\opnorm{\projh}\opnorm{\lifth} \\
\label{eqn:product_pi}&~\le~ \opnorm{\pif\left( \AA_{X} \pif\right)^t}\mper
\end{align*}
The last inequality follows from submultiplicativity of the operator norm and the observation that $\opnorm{\pif} = \norm{f}_\infty \leq 1$.
\end{proof}

Now that we have reduced the problem to studying the operator norm, we
will study how the norm decays as we take walks.  We use the
decomposition, $\vsX = \vsX^\parallel \oplus \vsX^\perp$ where
$\vsX^{\parallel} \coloneqq \textup{span}\set{ v \otimes \vec{1} \mid
  v \in\cH}$.  The decay comes from two sources.  For $z \in
\vsX^\perp$, we get a decay by $\expan(X)$ by the definition of $X$
being an expander.  \cref{claim:parallel_vector_bias} shows that for
$z \in \vsX^\parallel$, we get a decay from $\pif$, equal to the
initial bias. We put this together in \cref{cor:ta-shma_simp_matrix}
to obtain the desired exponential decay.
\begin{claim}\label{claim:parallel_vector_bias}
  For $z \in \vsX^{\parallel}$, we have
  \begin{align*}
    \norm{\left(\pif\, z\right)^{\parallel}}_2 ~\le~ \opnorm{\Ex{x \in V_X}{ \matr f (x)\, }} \cdot \norm{z}_2\mper
  \end{align*}
\end{claim}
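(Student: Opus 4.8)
The plan is to unwind the definitions. A vector $z \in \vsX^{\parallel}$ is a single elementary tensor $v \otimes \vec{1}$; the bias operator $\pif$ spreads it into $\sum_{x} \matr f(x)\,v \otimes x$; and projecting that back onto $\vsX^{\parallel}$ simply averages the coefficients $\matr f(x)\,v$ over $x \in V_X$, producing $\bigl(\Ex{x \in V_X}{\matr f(x)}\,v\bigr)\otimes \vec{1}$. At that point the bound is immediate from the definition of the operator norm, and the operator $\Ex{x \in V_X}{\matr f(x)}$ on the right-hand side is exactly the one that falls out of the projection.

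First I would fix $z \in \vsX^{\parallel}$ and write $z = v \otimes \vec{1}$ for the unique $v \in \cH$ that $z$ determines, where $\vec{1} = \sum_{x \in V_X} x$ is the unnormalized all-ones vector. Since $\norm{\vec{1}}_2^2 = \abs{V_X}$, this records $\norm{z}_2 = \sqrt{\abs{V_X}}\cdot\norm{v}_{\cH}$. Applying the bias operator coordinate by coordinate then gives $\pif z = \sum_{x \in V_X} \bigl(\matr f(x)\,v\bigr)\otimes x$.

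Next I would extract the parallel component of $\pif z$. Writing $\vsX^{\parallel} = \cH \otimes \textup{span}\set{\vec{1}}$, the orthogonal projection onto it is $\matr I_{\cH}\otimes \matr P_{\vec{1}}$, where $\matr P_{\vec{1}}$ is the orthogonal projection onto $\textup{span}\set{\vec{1}}$ inside $\C[V_X]$; since $\ip{x}{\vec{1}} = 1$ for every basis vector $x$, this projection sends $\sum_{x} w_x \otimes x$ to $\bigl(\tfrac{1}{\abs{V_X}}\sum_x w_x\bigr)\otimes \vec{1}$. Taking $w_x = \matr f(x)\,v$ therefore yields
\[
  (\pif z)^{\parallel} ~=~ \Bigl( \Ex{x \in V_X}{\matr f(x)}\, v \Bigr)\otimes \vec{1} \mper
\]

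Finally I would take norms. Using $\norm{\vec{1}}_2 = \sqrt{\abs{V_X}}$, the definition of the operator norm applied to $\Ex{x \in V_X}{\matr f(x)}$, and the value of $\norm{z}_2$ from the first step,
\[
  \norm{(\pif z)^{\parallel}}_2 ~=~ \sqrt{\abs{V_X}}\cdot \norm{\Ex{x \in V_X}{\matr f(x)}\, v}_{\cH} ~\le~ \sqrt{\abs{V_X}}\cdot \opnorm{\Ex{x \in V_X}{\matr f(x)}}\cdot\norm{v}_{\cH} ~=~ \opnorm{\Ex{x \in V_X}{\matr f(x)}}\cdot\norm{z}_2 \mcom
\]
which is exactly the claim. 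I do not expect a genuine obstacle here: the whole content is the observation in the first paragraph, and the only thing to be careful about is the bookkeeping of the unnormalized vector $\vec{1}$ (equivalently, the $\abs{V_X}$ factors that appear and cancel), together with the conceptual point that it is precisely projection onto $\vsX^{\parallel}$ that turns the coordinatewise action $x \mapsto \matr f(x)\,v$ into the averaged action $v \mapsto \Ex{x \in V_X}{\matr f(x)}\,v$.
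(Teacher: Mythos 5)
Your proof is correct and matches the paper's in substance: both write $z=v\otimes\vec{1}$, apply $\pif$ coordinatewise, and identify the parallel component with $\bigl(\Ex{x\in V_X}{\matr f(x)}\,v\bigr)\otimes\vec{1}$ before applying the operator-norm bound. The only cosmetic difference is that you compute the orthogonal projection explicitly, while the paper reads off the norm of the parallel component via the variational characterization $\sup_{w}\abs{\ip{w\otimes\vec{1}}{\pif z}}$; these are two phrasings of the same calculation.
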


\begin{proof}
%  The equation trivially holds when $z = 0$, so assume $z \ne 0$ and scale it so that $\norm{z}_2=1$.
  From definition of $\vsX^{\parallel}$, we can assume that $ z = u \otimes \vec{1}$.  Computing we have,  
  \begin{align*}
    \norm{\left(\pif\left(u \otimes \vec{1} \right)\right)^{\parallel}}_2 ~&=~ \sup_{w \in \cH \colon \norm{ w\otimes \vec{1} }_2=1} \abs{\ip{ w  \otimes \vec{1}}{\pif\left( u \otimes \vec{1} \right)}}\\
    ~&=~ \sup_{w \in \cH \colon \norm{ w\otimes\vec{1}}_2=1} \abs{\ip{ w  \otimes \vec{1}}{\pif\left( u \otimes \sum_{x\in V_X} x \right)}}\\
    ~&=~ \sup_{w \in \cH \colon \norm{ w\otimes \vec{1}}_2=1} \abs{\ip{ w  \otimes \vec{1}}{\sum_{x\in V_X} \left(\matr f (x)\, u \otimes x\right)}}\\
    ~&=~ \sup_{ w  \in \cH \colon \norm{w \otimes\vec{1}}_2=1} \abs{\sum_{x \in V_X} \ip{ w }{   \matr f (x)\,  u} \ip{\vec{1}}{x} } \\
    ~&=~ \sup_{ w  \in \cH \colon \norm{w\otimes \vec{1}}_2=1} \abs{ \ip{ w }{ \abs{V_X}\left(\Ex{x \in V_X}{ \matr f (x)\, }\right)u}  }   \\
    &~\le~  \opnorm{\Ex{x \in V_X}{ \matr f (x)\, } } \abs{V_X} \norm{w}\norm{u}  = \opnorm{\Ex{x \in V_X}{ \matr f (x)\, } }\norm{z}_2 \mper
\qedhere
   \end{align*}
The last line follows as $\norm{z}_2 = \norm{u}_2 \sqrt{\abs{V_x}}$ and $\norm{w} = \frac{1}{\sqrt{\abs{V_x}}} $.
\end{proof}

We show that for every two\footnote{This is the source of loss of a
  factor of $2$ in the exponent (which leads to degree
  $O(\abs{S}/\lambda^{4+o(1)})$ rather than the desired degree of
  $O(\abs{S}/\lambda^{2+o(1)})$ we will later achieve. Note that the
  same loss occurs in the original zig-zag analysis of~\cite{RVW00},
  which was later remedied by the $s$-wise zig-zag of~\cite{BT08}.}
steps of the walk, the norm of the (associated) operator decays as
follows.

\begin{lemma}\label{lemma:two_step_analysis}
  Let $X$ be a $\lambda(X)$-spectral expander and let $\matr f$ be such that $\opnorm{\Ex{x \in V_X}{ \matr f (x)\, }} \le \expan_0$ and
  $\max_{x \in V_X} \opnorm{\matr f (x)\,} \le 1$. Then,
  \begin{align*}
    \opnorm{\left( \AA_{X} \pif  \right)^2} ~\le~ 1 - (1-\lambda(X))^2 (1-\expan_0)\mper
  \end{align*}
\end{lemma}

\begin{proof}
  Let $\matr A_J = \matr J/\abs{V(X)}$, where $\matr J$ is the $\abs{V(X)} \times \abs{V(X)}$ all ones matrix.
  We can write $\matr A_X = (1-\lambda)  \matr A_J + \lambda \matr E$, where $\lambda=\lambda(X)$ and
  $\opnorm{\matr E} \le 1$. Then
  \begin{align*}
    \opnorm{\AA_X \pif  \AA_X} ~\le~ & (1-\lambda)^2 \opnorm{\AA_J \pif \AA_J} ~+~ \lambda (1-\lambda) \opnorm{\EE \pif \AA_J} \\
    & +~ (1-\lambda)\lambda \opnorm{\AA_J \pif \EE} ~+~ \lambda^2 \opnorm{\EE \pif \EE}\mper
  \end{align*}
To analyze $\opnorm{{\AA_J \pif \AA_J}}$ , let $z \in \vsX$ be a unit vector which is decomposed as $z = z^{\parallel} + z^\perp$. We have,
   \begin{align*}
    \norm{ \left(\AA_{J}  \pif \AA_{J}  \right) \left(z^\perp + z^{\parallel}\right)}_2  &~=~   \norm{\left( \AA_{X}  \pif \AA_{X} \right) z^{\parallel}}_2 &&(\text{As } \lambda(\matr A_J) = 0)\\
    &~=~ \norm{\AA_{X} \left(\left(\pif z^{\parallel}\right)^\perp +  \left(\pif z^{\parallel}\right)^\parallel \right)}_2\\
    &~=~  \norm{\left(\pif z^{\parallel}\right)^\parallel}_2\\    
    &~\le~ \expan_0.  && \text{(By~\cref{claim:parallel_vector_bias})}
   \end{align*}
 Thus,  $   \opnorm{\AA_J \pif \AA_J} \le \expan_0 $. Recall that $\opnorm{\pif} \le 1$ since $\max_x \opnorm{\matr f (x)\,} \le 1$, and we also have $\opnorm{\matr E}, \opnorm{\matr A_J} \le 1$.
  Then,
  \begin{align*}
    \opnorm{\AA_X \pif  \AA_X} ~\le~ & (1-\lambda)^2 \expan_0 ~+~ 2\lambda (1-\lambda) ~+~ \lambda^2\\
                                ~=~ & (1-\lambda)^2 \expan_0 ~+~ 1  - (1-\lambda)^2,\\
                                ~=~ & 1 - (1-\lambda)^2 (1-\expan_0)\mcom
  \end{align*}
  concluding the proof.
\end{proof}

%\begin{lemma}\label{lemma:two_step_analysis}\todo{2 suggestions to remove this lemma}
%  Let $X$ be a $\lambda(X)$-spectral expander and let $\matr f$ be such that $\opnorm{\Ex{x \in V_X}{ \matr f (x)\, } } \le \expan_0$ and $\max_{x \in V_X} \opnorm{\matr f (x)\,} \le 1$.
%  Then,
%  \begin{align*}
%    \opnorm{\left( \AA_{X} \pif  \right)^2} ~\le~ 2\lambda(X) + \expan_0\mper
%  \end{align*}
%\end{lemma}
%
%\begin{proof}
%   Since $\opnorm{\pif} = \max_{x \in V_X} \opnorm{\matr f (x)\,} \le 1$, it is enough to bound $\smash{\opnorm{\AA_{X} \pif \AA_{X}  }}$.
%   Let $z \in \vsX$ be a unit vector which is decomposed as $z = z^{\parallel} + z^\perp$. We have
%   \begin{align*}
%    \norm{ \left(\AA_{X}  \pif \AA_{X}  \right) \left(z^\perp + z^{\parallel}\right)}_2  &~\le~ \lambda(X) +  \norm{\left( \AA_{X}  \pif \AA_{X} \right) z^{\parallel}}_2\\
%    &~\le~ \lambda(X) +  \norm{\AA_{X} \left(\left(\pif z^{\parallel}\right)^\perp +  \left(\pif z^{\parallel}\right)^\parallel \right)}_2\\
%    &~\le~ \lambda(X) + \norm{\AA_{X}\left(\pif z^{\parallel}\right)^\perp }_2 +  \norm{\left(\pif z^{\parallel}\right)^\parallel}_2\\    
%    &~\le~ 2 \lambda(X) +  \norm{\left(\pif z^{\parallel}\right)^\parallel}_2 \\
%    &~\le~ 2\lambda(X) + \expan_0.  && \text{(By~\cref{claim:parallel_vector_bias})}
%   \end{align*} \qedhere
%\end{proof}

{
The amplification guarantee of~\cref{cor:ta-shma_simp_matrix}
trivializes if $2\lambda(X) + \expan_0 \ge 1$. Nonetheless, we now
show that amplification does occur under much weaker conditions,
namely, whenever $\opnorm{\Ex{x \in V_X}{ \matr f (x)\, }} < 1$ and
the auxiliary graph $X$ has expansion $\lambda(X) < 1$. This regime of bias
amplification was instrumental in the breakthrough $\mathsf{SL}=\mathsf{L}$ result of Reingold~\cite{R05}.}

\begin{theorem}[Operator Amplification via Expander Walks (strengthening of~\cref{cor:ta-shma_simp_matrix})]\label{cor:ta-shma_simp_matrix_any_bias}
  Let $X$ be a $\lambda(X)$-spectral expander and let $\cW_t$ be the collection of walks obtained from walks of length $t$ on $X$.
  Then for any operator valued function $\matr f$ such that $\opnorm{\Ex{x \in V_X}{ \matr f (x)\, } } \le \expan_0$
  and $\max_{x \in V_X} \opnorm{\matr f (x)\,} \le 1$, we have
  \begin{align*}
    \opnorm{\Ex{(x_0, \ldots x_t) \in \cW_t}{\matr f (x_t) \cdots \matr f (x_0)}}  ~\le~ \left[1-(1-\lambda(X))^2 (1-\expan_0) \right]^{\lfloor t/2 \rfloor}\mper
  \end{align*}
\end{theorem}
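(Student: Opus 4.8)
The plan is to reuse the proof of \cref{cor:ta-shma_simp_matrix} verbatim up to the point where it invokes \cref{lemma:two_step_analysis}, and to replace that lemma by a sharper two-step estimate. Concretely, the reduction carried out there (the computation with the maps $\projh,\lifth$, followed by writing $t=2\lfloor t/2\rfloor+r$ with $r\in\{0,1\}$ and using $\opnorm{\pif}=\max_{x\in V_X}\opnorm{\matr f(x)}\le 1$, $\opnorm{\AA_X}\le 1$, and submultiplicativity of $\opnorm{\cdot}$) shows that the left-hand side of the theorem is at most $\opnorm{(\AA_X\pif)^2}^{\lfloor t/2\rfloor}$, and $\opnorm{(\AA_X\pif)^2}\le\opnorm{\AA_X\pif\AA_X}$ because $\opnorm{\pif}\le 1$. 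Hence it suffices to prove the strengthened two-step bound
\[
  \opnorm{\AA_X\pif\AA_X} ~\le~ 1-(1-\lambda(X))^2(1-\expan_0).
\]

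The idea for this is to \emph{decouple} the contraction $\pif$ provides on $\vsX^{\parallel}$ (\cref{claim:parallel_vector_bias}) from the contraction $X$ provides on $\vsX^{\perp}$, instead of adding the two losses by the triangle inequality as in \cref{lemma:two_step_analysis} (which is what costs the factor of $2$). Let $\matr P^{\parallel},\matr P^{\perp}$ be the orthogonal projections of $\vsX$ onto $\vsX^{\parallel},\vsX^{\perp}$ and set $\matr T:=\matr P^{\parallel}+\lambda(X)\,\matr P^{\perp}$. Since $X$ is regular, $\AA_X$ fixes $\vsX^{\parallel}$ and has norm $\le\lambda(X)$ on $\vsX^{\perp}$, so $\norm{\AA_X w}_2\le\norm{\matr T w}_2$ for every $w\in\vsX$; as $\AA_X$ and $\matr T$ are self-adjoint, this gives $\opnorm{\AA_X M}\le\opnorm{\matr T M}$ and $\opnorm{M\AA_X}\le\opnorm{M\matr T}$ for every bounded operator $M$, whence
\[
  \opnorm{\AA_X\pif\AA_X} ~\le~ \opnorm{\matr T\pif\AA_X} ~\le~ \opnorm{\matr T\pif\matr T}.
\]
Writing $\pif$ in its four blocks with respect to $\vsX^{\parallel}\oplus\vsX^{\perp}$ and noting that $\matr T$ scales the $\vsX^{\perp}$-coordinate by $\lambda(X)$ on each side, a one-line computation gives
\[
  \matr T\pif\matr T ~=~ \lambda(X)\,\pif ~+~ (1-\lambda(X))\Bigl(\matr P^{\parallel}\pif\matr P^{\parallel}-\lambda(X)\,\matr P^{\perp}\pif\matr P^{\perp}\Bigr),
\]
since the two off-diagonal blocks of $\matr T\pif\matr T$ already coincide with those of $\lambda(X)\pif$. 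The correction term in parentheses is block-diagonal, so its norm equals $\max\bigl(\opnorm{\matr P^{\parallel}\pif\matr P^{\parallel}},\,\lambda(X)\opnorm{\matr P^{\perp}\pif\matr P^{\perp}}\bigr)\le\max(\expan_0,\lambda(X))$, using that $\opnorm{\matr P^{\parallel}\pif\matr P^{\parallel}}\le\expan_0$ is exactly \cref{claim:parallel_vector_bias} and $\opnorm{\matr P^{\perp}\pif\matr P^{\perp}}\le\opnorm{\pif}\le 1$. Therefore $\opnorm{\matr T\pif\matr T}\le\lambda(X)+(1-\lambda(X))\max(\expan_0,\lambda(X))$, and a short elementary inequality — namely $1-\max(\expan_0,\lambda(X))\ge(1-\expan_0)(1-\lambda(X))$, equivalently $\min(\expan_0,\lambda(X))\ge\expan_0\lambda(X)$ — shows this is at most $1-(1-\lambda(X))^2(1-\expan_0)$, completing the proof.

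I expect the crux to be the reduction $\opnorm{\AA_X\pif\AA_X}\le\opnorm{\matr T\pif\matr T}$ together with the block-form identity: this is exactly what avoids the lossy splitting of \cref{lemma:two_step_analysis}. The remaining checks are routine — that the adjoint manipulations remain valid when $\cH$ is infinite-dimensional (all operators involved are bounded, and $\AA_X,\matr T$ are bounded self-adjoint), that the degenerate cases $\vsX^{\perp}=\{0\}$ or $\cH=\{0\}$ are subsumed (then $\matr T=\matr P^{\parallel}$ and the estimate reads $\opnorm{\pif}\le\expan_0$), and the final scalar inequality.
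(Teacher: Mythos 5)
Your proof is correct, but it takes a genuinely different route from the paper's. The paper proves the key two-step bound $\opnorm{\AA_X\pif\AA_X}\le 1-(1-\lambda(X))^2(1-\lambda_0)$ by writing $\matr A_X=(1-\lambda)\matr A_J+\lambda\matr E$ with $\opnorm{\matr E}\le 1$ (a convex decomposition of the walk \emph{operator} into the complete graph plus an error), expanding $\AA_X\pif\AA_X$ into four terms, invoking \cref{lemma:two_step_analysis} with $\lambda(\matr A_J)=0$ to bound $\opnorm{\AA_J\pif\AA_J}\le\lambda_0$, and collecting. You instead keep $\matr A_X$ intact and decompose the \emph{space} $\vsX=\vsX^{\parallel}\oplus\vsX^{\perp}$, dominate $\AA_X$ by the block-diagonal $\matr T=\matr P^{\parallel}+\lambda\matr P^{\perp}$, rewrite $\matr T\pif\matr T=\lambda\pif+(1-\lambda)(\matr P^{\parallel}\pif\matr P^{\parallel}-\lambda\matr P^{\perp}\pif\matr P^{\perp})$, and finish with the scalar inequality $\min(\lambda_0,\lambda)\ge\lambda_0\lambda$. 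I checked the details: the block identity is correct, the majorization $\opnorm{\AA_X M}\le\opnorm{\matr T M}$ and $\opnorm{M\AA_X}\le\opnorm{M\matr T}$ follow from $\norm{\AA_X w}\le\norm{\matr T w}$ and self-adjointness of both, and the norm of the block-diagonal correction is indeed the max of the two block norms, bounded by $\max(\lambda_0,\lambda(X))$ via \cref{claim:parallel_vector_bias} and $\opnorm{\pif}\le 1$. Both proofs ultimately rest on \cref{claim:parallel_vector_bias} and the triangle inequality and give the identical bound; the paper's algebra is a touch more mechanical (no case analysis), while yours has a nice majorization/block-matrix flavor and makes the role of the $\parallel/\perp$ decomposition more explicit.
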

\begin{proof}
	Follows by combining~\cref{lem:connect} and~\cref{lemma:two_step_analysis}.
\end{proof}
%\cref{cor:ta-shma_simp_matrix} now follows from the lemma above and
%the submultiplicativity of the operator norm.

%
%\todo{Need to move these paragraphs appropriately}
%\snote{I don't understand the point of Theorem 3.1; we should remove the entire discussion around and just do the proof of Thm 3.6}

{ This establishes that expander walks can be used to derandomize powers
of an operator, itself given by an average of bounded operators, in
the general case.
In this derandomization, we still have an
exponential norm decay, but we only ``pay additional randomness''
proportional to the degree of the auxiliary expander regardless of the
number of operators.}

%\edits{The above amplification follows from the following improved version
%of~\cref{lemma:two_step_analysis}. The proof explores the structural
%\emph{syntactic} similarity between the operator amplification and
%known zig-zag analysis~\cite{RVW02,R05,TD18}. 
%}

\subsection{Cayley Graphs and the Construction of Amplified Biased Sets}

In this section, we will prove the following,

\expwalk*

Towards this, we first formalize the connection between bias of a
special subset of a group and the operator norm of a certain operator.
The subset is obtained by taking random walks over an expander graph
as mentioned above. We then proceed to bound this operator norm.
Finally, we instantiate our construction with an explicit expander
graph due to~\cite{Alon21}.

The particular case of $S \subseteq G$ (for some group $G$) and the
function $\matr f$ being a representation $\rho$ on $\cH$
leads to the amplification of biased sets. We will construct a new
multiset $S' \subseteq G$ such if $\opnorm{\Ex{s\sim S}{\rho(s)}} \leq
\expan_0$, then we have $\opnorm{\Ex{s\sim S'}{ \rho(s)}} \leq \expan
\ll \expan_0$. Note here that the construction of $S'$ is agnostic to
$\rho$, and thus we can reduce the bias of all irreducible
representations simultaneously!
Assume that we have a graph $X$ on the vertex set $S$. For $s \in S$,
we have $\matr f (s) = \rho(s)$ in this case. Let
\begin{align*}
  S' ~=~ \{  s_t s_{t-1} \cdots s_0 \; |\; (s_0, s_1, \cdots s_t) \in \cW_t  \} \mcom
\end{align*}
which will be our new amplified biased set. Using the homomorphism
property of $\rho$, we have the following simplification
\begin{align}
 \Ex{w=(s_0, \ldots s_t) \in \cW_t}{\matr f(s_t) \cdots \matr f (s_0)} ~&=~ \Ex{(s_0, \ldots, s_t) \in \cW_t}{\rho(s_t) \cdots \rho(s_0)} ~=~ \Ex{s' \in S'}{\rho(s')}\mcom\\
\label{eqn:homo} \text{and thus, }\; \mathrm{bias}(S')  
~&\leq~ \opnorm{\pif\left( \AA_{X} \pif\right)^t}
\end{align}
where $S'$ is the new biased multiset of the construction and the
second inequality follows from the preceding calculation when $\cW_t$
is a collection of walks on $X$.

\subsubsection{Instantiating the Construction}

To construct $S'$, our construction requires an auxiliary expander
graph $X$ to perform walks on. One convenient source (among several)
is a recent construction of Alon.

\begin{theorem}[Corollary of {\cite[Thm.  1.3]{Alon21}} ]\label{thm:alon_exp}
  For every $\expan \in (0,1)$, there exists a positive integer  $m_\lambda$ {such that for every $n \in \N$,
  there is an} explicit construction of a graph $X$ on $m_\expan n$ vertices
  with degree at most $9/\expan^2$ and $\lambda(X) \leq \expan$.
\end{theorem}

\noindent We now establish the key amplification lemma.

\begin{lemma}\label{lem:simple_amplification}
  Let $S \subseteq G$ such that $\bias(S)=\expan_0 < 1$ and let  $\epsilon_0$ be a constant such that $(1+2\epsilon_0)\lambda_0 < 1$.   Then, for any $\lambda  > 0$,  we can explicitly compute $S'$
  such that $\bias(S') \leq\lambda$ and $|S'| = O_{\epsilon_0\lambda_0} \left( \frac{|S|}{\lambda^{4+4\delta(\lambda_0, \epsilon_0)}} \right)$ {where} $\delta(\lambda_0, \epsilon_0) = \frac{ \log (3 +6\epsilon_0) + \log (1/\epsilon_0) }{\log (1/\lambda_0)}$.
\end{lemma}

\begin{proof}
  Pick a constant $\epsilon_0$ such that $\lambda_1 \coloneqq
  (1+2\epsilon_0)\lambda_0 < 1$ and use~\cref{thm:alon_exp} to
  obtain an explicit $(m|S|,d,\epsilon_0 \expan_0)$-graph $X$ where $d \leq \frac{9}{\epsilon_0^2\lambda_0^2} $. Let
  $S_1$ be the multiset consisting of $m$ copies of $S$. The bias
  remains the same and now, $|V(X)| = |S_1|$.
  We construct $S'$ by multiplying elements of $t$-length walks on $X$ where
  $t = \lceil 2(1+\log_{\lambda_1} (\lambda)) \rceil$.
  The size of $S'$ is
  \begin{align*}
   |S'| = (m|S|)\cdot d^t ~&=~ m\cdot |S| \cdot \left (  \frac{3}{\epsilon_0\lambda_0} \right )^{4+4\log_{\lambda_1 } \lambda }\\
           ~&=~ m\left (  \frac{3}{\epsilon_0\lambda_0} \right )^4 |S| \cdot \lambda^{ \frac{ -4\log\left ( \frac{3}{\epsilon_0\lambda_0} \right )  }{\log (1/\lambda_1) }  }\\
 ~&\leq~ O_{\epsilon_0\lambda_0}(|S|) \cdot \lambda^{-4 \left(1 + \frac{ \log\left ( \frac{3 +6\epsilon_0}{\epsilon_0} \right ) }{\log (1/\lambda_0)}  \right )}\mper
  \end{align*}
%\edits{So here, $\delta(\lambda_0, \epsilon_0) = \frac{ \log\left ( \frac{3 +6\epsilon_0}{\epsilon_0} \right ) }{\log (1/\lambda_0)}$.} \todo{The delta definition}

  Let $\rho$ be any irreducible representation of $G$. From ~\cref{eqn:homo} and \cref{cor:ta-shma_simp_matrix},  we get,
  \begin{align*}
     \opnorm{\Ex{s_0\cdots s_t \in S'}{\rho(s_t \cdots s_0)}} \leq \left ( 2\lambda(X) + \bias(S)\right )^{ t/2 -1 } \leq \left ( \expan_1 \right )^{t/2-1} \leq  \lambda \mper \qedhere
  \end{align*}
\end{proof}

Using the amplification above, we now derive our first simplified
explicit construction.
\expwalk*
\begin{proof}  Pick a \emph{constant} $\expan' < \min \left (\frac{1}{2}, \left
  (\frac{3}{4} \right )^{4\beta} \right ) $. This choice ensures that $\delta(\expan', 1/2 ) \leq \beta $.  
    Use \cref{lem:simple_amplification} with target expansion $
  \expan'$ to obtain a set $S_1$ with size $|S_1| =
  O_{\lambda_0,\beta}(|S|)$ as $\expan'$ is a constant.
  Now use
  \cref{lem:simple_amplification} again with $S_1$ as the initial set, {$\epsilon_0 = \frac{1}{2}$ }and the final expansion as $\lambda$ to obtain $S'$.
  Thus, the final size is $|S'| \leq O_{\lambda'} \left(
  \frac{|S_1|}{\lambda^{4+\delta(\lambda',1/2)}} \right) \leq
  O_{\lambda_0, \beta} \left( \frac{|S|}{\lambda^{4+\beta }} \right)$.
\end{proof}

\subsection{Explicit Expanders close to any desired size}

As an application of \cref{theo:exp_walk}, we demonstrate an
explicit construction of Cayley expanders of size $n +o(n)$ vertices for every (large enough) $n$.  

Such a construction will be crucial for us to prove~\cref{theo:main_1}. We cannot use existing results like the recent work of Alon \cite{Alon21} or the construction in~\cite{Ta-Shma17}. This is because Alon's construction does not have a Cayley graph structure (which our proof utilizes).  On the other hand, the construction in~\cite{Ta-Shma17} is a Cayley graph based on~\cite{LPS88}, but it only guarantees a graph of size $O(n)$ rather than $n + o(n)$. 
%is very large\todo{Need to explain this largeness}.

Recall that $\slin_2(p)$ is the group of $2 \times 2$ invertible matrices over $\F_p$ with determinant $1$.  We obtain a base generating set for $\slin_2(p)$ via the following result.

\begin{theorem}[{\cite{Lub11}}]\label{thm:sl2}
  {There exists an absolute
  constant $\lambda_0 < 1$ such that for every $p > 17$, there exists an explicit generating set} $S$ ({of constant size independent of $p$}) for
  $\slin_2(p)$, such that $\lambda\left
  (Cay(\slin_2(p), S) \right) \leq \lambda_0$ .
\end{theorem}

\begin{theorem}[{\cite{Cheng10}}]\label{lem:primes}
  For every $n \geq 2^{3\cdot 2^{15}}$, there exists a prime in $[n, n+4n^{2/3}]$.
\end{theorem}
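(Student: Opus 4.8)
The plan is to derive this from an explicit form of the prime number theorem in short intervals, following the classical Hoheisel--Ingham method. Write $\psi(x) = \sum_{p^k \le x}\log p$ for the Chebyshev function and put $h = 4x^{2/3}$ with $x = n$. It suffices to prove $\psi(x+h) - \psi(x) > \tfrac12 h$ for every $x \ge 2^{3\cdot 2^{15}}$: the proper prime powers $p^k$, $k\ge 2$, in an interval of length $h$ number only $O(x^{1/6})$ (essentially the prime squares), and each contributes at most $\tfrac12\log(x+h)$ to $\psi$, so a lower bound of order $h$ forces an honest prime into $(x, x+h]\subseteq [n, n+4n^{2/3}]$.

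First I would invoke the truncated Riemann--von Mangoldt explicit formula
\[
  \psi(y) \;=\; y \;-\; \sum_{|\gamma|\le T}\frac{y^{\rho}}{\rho} \;+\; O\!\Big(\frac{y(\log yT)^2}{T} + \log y\Big),
\]
the sum running over nontrivial zeros $\rho = \beta + i\gamma$ of $\zeta$, and subtract the formulas at $y=x+h$ and $y=x$:
\[
  \psi(x+h)-\psi(x) \;=\; h \;-\; \sum_{|\gamma|\le T}\frac{(x+h)^\rho - x^\rho}{\rho} \;+\; O\!\Big(\frac{x(\log xT)^2}{T}+\log x\Big).
\]
The heart of the matter is to show the sum over zeros is $o(h)$, with $T$ chosen so that the explicit-formula error is also $o(h)$. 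For this I would combine the elementary estimate $|(x+h)^\rho - x^\rho| \le \min\{2x^\beta,\ |\rho|\,h\,x^{\beta-1}\}$ with three explicit analytic inputs: (i) an explicit zero-free region $\beta \le 1 - c/\log(|\gamma|+2)$; (ii) an explicit zero-density estimate of the form $N(\sigma,T) \le C\,T^{A(1-\sigma)}(\log T)^{B}$ valid for $\tfrac12\le\sigma\le1$, with density exponent $A$ strictly below $3$ (this is precisely what makes the target exponent $2/3$ feasible, since Ingham's method requires $\theta > 1 - 1/A$); and (iii) an explicit bound $N(U) \le \tfrac{U}{2\pi}\log\tfrac{U}{2\pi e} + O(\log U)$ for the zero count up to height $U$. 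Decomposing dyadically in $|\gamma|$, using the zero-free region to keep $\beta$ away from $1$ and (ii)--(iii) to control the number and location of zeros in each range, one bounds the zero sum by a small multiple of $h$; optimizing the truncation height $T$ against the density exponent then completes the estimate.

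The main obstacle is entirely quantitative: converting ``$x$ sufficiently large'' into the concrete threshold $x \ge 2^{3\cdot 2^{15}}$. Every implied constant above must become an explicit inequality, and the poor numerical values of the constant $c$ in the zero-free region and of $C,A,B$ in the density estimate force $\log x$ to be comparable to a large multiple of $\log\log x$ (propagated through the density exponent and the truncation) before the error terms are genuinely dominated by $h$ --- which is exactly what produces a tower-type threshold of size $\exp(\exp(O(1)))$. So the bulk of the work is a careful, bookkeeping-heavy optimization of the truncation height $T$, the dyadic decomposition in $\gamma$, and all the explicit constants in the zero-free region, the zero-density estimate, and the Riemann--von Mangoldt zero count; the skeleton (explicit formula $\Rightarrow$ bound the sum over zeros $\Rightarrow$ prime powers do not interfere) is classical, and the explicit optimization is what is carried out in~\cite{Cheng10}, whose statement we have quoted.
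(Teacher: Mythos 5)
This statement is cited from Cheng~\cite{Cheng10} and the paper gives no proof of its own; it is used as a black box in \cref{cor:small_expander} to locate a prime near $n^{1/3}$. Your sketch is a faithful high-level account of the Hoheisel--Ingham method that Cheng's paper instantiates explicitly: reduce to a lower bound on $\psi(x+h)-\psi(x)$, apply the truncated Riemann--von Mangoldt explicit formula, control the zero sum via an explicit zero-free region together with explicit Ingham-type zero-density and zero-counting estimates, and observe that prime powers $p^k$ with $k\ge 2$ contribute only $O(x^{1/6}\log x)$. The observation that the enormous threshold $2^{3\cdot 2^{15}}$ arises purely from propagating explicit constants (zero-free region constant, density-estimate constants, truncation height) is also correct in spirit. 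One small caveat worth keeping in mind: the statement ``density exponent $A$ strictly below $3$'' is a little loose --- Ingham's classical estimate gives exponent $3/(2-\sigma)$, which tends to $3$ as $\sigma\to 1$, so the exponent $2/3$ is the borderline case and the explicit zero-free region is what rescues the argument near $\sigma=1$; Cheng's treatment handles that range separately rather than via a single uniform $A<3$. As a historical aside (not affecting your sketch), gaps were later found in Cheng's original argument and the quantitative conclusion was subsequently repaired in the literature, so the precise constants $4$ and $2^{3\cdot2^{15}}$ should be treated as coming from the cited source rather than from first principles here.
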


\begin{corollary}\label{cor:small_expander}
  For any $n > 2^{9\cdot 2^{15}},  \lambda > 0$, there is a deterministic polynomial time algorithm to construct an
  $(n',d,\lambda)$-graph  $Cay(\slin_2(p),S)$, where $n' = n + O(n^{8/9})$ and $d = O(\lambda^{-4.1})$. 
\end{corollary}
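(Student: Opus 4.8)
The plan is to package three ingredients that precede the statement: the constant-size expanding generating set of $\slin_2(p)$ from \cref{thm:sl2}, the short-prime-gap estimate of \cref{lem:primes}, and the (suboptimal but sufficient) expander-walk amplification of \cref{theo:exp_walk}. The group $\slin_2(p)$ is used because it has a convenient Cayley structure and order $\abs{\slin_2(p)} = p(p^2-1) = p^3-p$, so tuning $p \approx n^{1/3}$ pins the vertex count near $n$.

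\emph{Choosing the prime.} Set $m \defeq \lceil n^{1/3} \rceil$ and apply \cref{lem:primes} to obtain a prime $p \in [m, m+4m^{2/3}]$. The hypothesis $n > 2^{9\cdot 2^{15}}$ gives $m \ge n^{1/3} > 2^{3\cdot 2^{15}}$, so \cref{lem:primes} applies, and clearly $p > 17$ so \cref{thm:sl2} applies. Such a $p$ is found deterministically by testing primality of each of the $O(m^{2/3}) = O(n^{2/9})$ integers in the interval, which costs $\poly(n)$ time.

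\emph{The vertex count.} Writing $\delta \defeq p - n^{1/3}$ we have $0 \le \delta \le 1 + 4m^{2/3} = O(n^{2/9})$, so expanding $p^3 = (n^{1/3}+\delta)^3 = n + 3n^{2/3}\delta + 3n^{1/3}\delta^2 + \delta^3$ and bounding the three correction terms by $O(n^{8/9})$, $O(n^{7/9})$ and $O(n^{2/3})$ respectively gives $p^3 = n + O(n^{8/9})$; subtracting $p = O(n^{1/3})$ leaves $n' \defeq \abs{\slin_2(p)} = n + O(n^{8/9})$.

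\emph{Expansion, degree, and running time.} Apply \cref{thm:sl2} to get a constant-size generating set $S_0$ of $\slin_2(p)$ with $\bias(S_0) \le \lambda_0$ for the absolute constant $\lambda_0 < 1$, and feed $S_0$ to \cref{theo:exp_walk} with target expansion $\lambda$ and $\beta = 1/10$. This outputs, in time $\poly(\abs{S_0}, 1/\lambda_0, 1/\lambda) = \poly(1/\lambda)$, a multiset $S \subseteq \slin_2(p)$ with $\lambda(\Cay(\slin_2(p), S)) \le \lambda$ and $\abs{S} = O_{\lambda_0}(\abs{S_0}/\lambda^{4+\beta}) = O(\lambda^{-4.1})$, since $\abs{S_0}$, $\lambda_0$, $\beta$ are all absolute constants. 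Combined with the prime search (and using that a product in $\slin_2(p)$ costs $\poly(\log p)$ time), the construction is deterministic and runs in $\poly(n, 1/\lambda)$ time, yielding the desired $(n', d, \lambda)$-graph $\Cay(\slin_2(p), S)$ with $d = \abs{S} = O(\lambda^{-4.1})$. There is no genuine obstacle here since this is an assembly of stated results; the only point needing care is the error bookkeeping in the vertex-count step, checking that cubing the $O(n^{2/9})$ prime-gap slack (together with the integer ceiling) contributes only $O(n^{8/9})$.
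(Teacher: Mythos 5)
Your proof is essentially the paper's proof: find a prime near $n^{1/3}$ via \cref{lem:primes}, take Lubotzky's constant-size expanding generators from \cref{thm:sl2}, and amplify with \cref{theo:exp_walk} at $\beta=1/10$. The arithmetic on the group order and the running-time bookkeeping match.

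One small point worth flagging: the paper searches for a prime in $[n^{1/3}+1,\, n^{1/3}+O(n^{2/9})]$ and asserts $n \leq |\slin_2(p)|$. Starting instead at $m = \lceil n^{1/3} \rceil$ does \emph{not} guarantee $|\slin_2(p)| = p^3 - p \geq n$: if $p = m$ happens to be prime and $n$ is close to $m^3$ (in particular if $n=m^3$), then $p^3 - p = n - m < n$. This still satisfies the corollary as stated, since $n' = n + O(n^{8/9})$ permits a small negative deviation, but the downstream use in the appendix (where $n'-n$ copies of the identity are appended to a generating set) requires $n' \ge n$. To match the paper exactly, search in $[\,\lfloor n^{1/3}\rfloor + 1,\ \cdot\,]$ and note that $p \ge n^{1/3}+1$ then forces $p^3 - p \ge n$.
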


\begin{proof}
  Find a prime $p \in [n^{1/3}+1 , n^{1/3}+ O(n^{2/9})]$, which exists due to \cref{lem:primes}, via brute-force search.
  Since, $\slin_2(p)$ is a group of order $(p^2-1)p$, we have $n \leq |\slin_2(p)| \leq n + O(n^{8/9}) $.
  We use the constant-sized generating set $S$ from \cref{thm:sl2} and amplify using \cref{theo:exp_walk}.
\end{proof}

\section{Operator Bias Reduction via the $s$-wide Replacement Walk}\label{sec:adv_ampl}

We have seen in \cref{sec:simple_amp} that bias reduction via random
walks on an expander $X$ is sub-optimal (by a factor of $2$ in the
exponent).  We will derandomize this random walk construction to
achieve an almost optimal bias reduction. The idea is to introduce a
new graph $Y$ which has a much smaller degree, and to ``simulate" a
random walk on $X$ via a walk on $Y$.  This is realized by a
higher-order version of the zig-zag product~\cite{RVW00} called the
\emph{$s$-wide replacement product} defined by Ben-Aroya and
Ta-Shma~\cite{BT08} (see~\cref{def:s_wide_replacement}).

This section establishes our key technical result which states that
given any initial \emph{operator valued function} of constant bias $<
1$, we amplify the bias in an almost optimal way. This generalizes the
analysis of Ta-Shma~\cite{Ta-Shma17} from \emph{scalar} valued
functions to \emph{operator} valued functions.

\begin{restatable}[Operator Generalization of Theorem 24~\cite{Ta-Shma17}]{theorem}{OpAmp}\label{theo:ta-shma_main}
  Fix integers $t \ge s \ge 1$.
  Let $X$ be any $d_1$-regular graph {(with $d_1$ a power of $2$)}, and $Y$ be any $d_2$-regular Cayley graph on $\mathbb{F}_2^{s\log d_1}$.
  Let $\sW_t$ be the collection of length $t$ walks on the $s$-wide replacement product of $X$ and $Y$.
  Let $\calH$ be a Hilbert space.
  For any operator valued function $\matr f \colon V_X \to \mathcal{L}(\cH)$, satisfying $\max_{x \in V_X} \opnorm{\matr f (x)\,} \le 1$
  and $ \opnorm{\Ex{x \in V_X}{ \matr f (x)\, } } := \lambda_0 \le~ \lambda(Y)^2 - 2 \lambda(X)$, we have
  \begin{align*}
    \opnorm{ \Ex{(x_0,\cdots, x_{t}) \in \sW_t}{\matr \matr f(x_t)\cdots \matr f(x_0)} } \le \left(\lambda(Y)^s + s \cdot \lambda(Y)^{s-1} + s^2 \cdot \lambda(Y)^{s-3} \right)^{ \lfloor t/s \rfloor } \le O_s\left(\lambda(Y) \right)^{(1-o_s(1))t}\mper
  \end{align*}
  Furthermore, the size of the collection is $|\sW_t| = |X| \cdot d_1^s \cdot d_2^{t}$.
\end{restatable}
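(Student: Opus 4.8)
The plan is to mirror the scalar analysis of Ta-Shma, but working throughout with the generalized bias operator $\pif$ acting on $\vsX = \cH \otimes \C[V_X]$ (tensored further with $\C[V_Y]$ for the walk on the $s$-wide replacement product), so that every estimate is a statement about operator norms rather than absolute values. First I would set up the analogue of \eqref{eqn:connect} for walks on the $s$-wide replacement product: the expectation $\Ex{(s_0,\dots,s_t)\in\cW_t}{\matr f(s_0)\cdots \matr f(s_t)}$ is realized, up to the projection/lifting maps $\projh,\lifth$, as a product of $t$ operators of the form $\pif$ interleaved with the step operators of the replacement walk. Concretely, a step of the $s$-wide replacement product applies $\AA_X$ only once every $s$ steps (when the $\C[V_Y]$-register cycles back), and otherwise applies only the $\C[V_Y]$-rotation $\AA_Y$ together with the coordinate-permutation induced by the $s$-wide structure; so a block of $s$ consecutive steps looks like $\pif$ composed with $s$ copies of $\AA_Y$ (on rotated coordinates) and one copy of $\AA_X$. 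Thus it suffices to bound the operator norm of one such length-$s$ block and then use submultiplicativity to get the $\lfloor t/s\rfloor$-th power.

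The core of the argument is therefore the single-block estimate: the norm of the length-$s$ block operator is at most $\lambda(Y)^s + s\lambda(Y)^{s-1} + s^2\lambda(Y)^{s-3}$. Here I would decompose the space $\vsXY$ according to whether the $\C[V_Y]$-component is parallel or perpendicular to $\vec 1$, exactly as in Ta-Shma's ``$W$-parallel / $W$-perpendicular'' dichotomy. Each application of $\AA_Y$ contracts the $Y$-perpendicular part by $\lambda(Y)$, which accounts for the leading $\lambda(Y)^s$ term when the vector stays $Y$-perpendicular throughout. The lower-order terms $s\lambda(Y)^{s-1}$ and $s^2\lambda(Y)^{s-3}$ come from the ``error'' events where, at one or two of the $s$ internal steps, the vector has a $Y$-parallel component; at such a step one instead invokes the expander-mixing behavior of $\AA_X$ together with $\pif$. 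For the $Y$-parallel component I would use the operator analogue of \cref{claim:parallel_vector_bias}: projecting $\pif$ onto the parallel space contributes a factor essentially bounded by $\lambda_0 = \opnorm{\Ex{x}{\matr f(x)}}$ — plus a cross term controlled by $\lambda(X)$ from $\AA_X$ restricted to the $X$-perpendicular space. Collecting, a single internal ``failure'' costs a factor like $\lambda_0 + 2\lambda(X)$ in place of $\lambda(Y)^2$, and the hypothesis $\lambda_0 \le \lambda(Y)^2 - 2\lambda(X)$ is precisely what makes each such substitution harmless; counting the $O(s)$ (resp. $O(s^2)$) ways the one (resp. two) failures can be placed among the $s$ steps yields the stated bound. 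Crucially, because the maps $\projh,\lifth,\pif,\AA_X,\AA_Y$ are all defined with an inert $\cH$-tensor factor, every inequality in this combinatorial bookkeeping is dimension-free, so $\ell=\dim\cH$ never enters.

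The main obstacle, and the step I expect to require the most care, is the bookkeeping of the $s$-wide replacement product itself: correctly tracking which coordinate of the $\C[V_Y]$-register is ``active'' at each step, how the cyclic rotation of these $s$ coordinates interacts with the single $\AA_X$-step, and verifying that $\pif$ (which acts on the $V_X$-register via $\matr f$) commutes appropriately past the $Y$-rotations so that the parallel/perpendicular decomposition can be applied cleanly at each of the $s$ internal steps. This is exactly where Ta-Shma's scalar argument is most delicate, and the operator lift must reproduce it faithfully; once the correct block operator is identified and the two auxiliary estimates (the operator expander-mixing step for $\AA_X$ and the operator version of \cref{claim:parallel_vector_bias} for $\pif$) are in hand, the remaining estimates are routine triangle-inequality and submultiplicativity manipulations, and the final size count $|\cW_t| = |X|\cdot d_1^s\cdot d_2^t$ is immediate from the description of a length-$t$ walk on the replacement product (a starting vertex of $X$, the initial $s$-tuple of $Y$-coordinates, and $t$ steps of the degree-$d_2$ graph $Y$). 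Finally, taking $s\to\infty$ slowly relative to $t$ turns the per-block bound into $O_s(\lambda(Y))^{(1-o_s(1))t}$, giving the claimed asymptotic form.
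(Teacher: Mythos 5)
Your high-level plan — reduce $\opnorm{\matr L_{s-1}\cdots\matr L_0}$ to an $s$-step block bound, raise to the $\lfloor t/s\rfloor$-th power, decompose $\vsXY$ into the $Y$-parallel and $Y$-perpendicular components, let the $Y$-perpendicular part give the leading $\lambda(Y)^s$, and charge the lower-order terms to ``parallel escapes'' — does match the paper's argument in \cref{subsec:norm_decay}. The dimension-free bookkeeping via $\pif$, $\projh$, $\lifth$ acting with an inert $\cH$-factor is also the paper's mechanism, and your size count is correct. But two concrete points are off and would block an actual write-up.

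First, your description of the walk operator is wrong. You write that a step of the $s$-wide replacement product ``applies $\AA_X$ only once every $s$ steps'' and that a block of $s$ steps looks like ``$s$ copies of $\AA_Y$ \dots and one copy of $\AA_X$.'' In fact each step of the $s$-wide walk is $\matr X_{i\bmod s}\,\AA_Y$: an intra-cloud $\AA_Y$ move followed by the \emph{deterministic} inter-cloud permutation $\matr X_i$, and an $X$-move (via the rotation map) happens at \emph{every} step. $\AA_X$ never appears in the walk operator; it only re-emerges after averaging over the random $Y$-seed. Starting from your reading, the decay accounting doesn't go through.

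Second, and this is the real gap: you treat the hard part as commuting $\pif$ past the $Y$-rotations and ``tracking the active coordinate,'' but that is not where the difficulty lies, and $\pif$ does not need to commute with anything. The crucial ingredient is the Simulation Lemma (the paper's \cref{lemma:simulation_matrix}): restricted to $Y$-parallel vectors, $\ell \le s$ steps of the product $\XX_i\AA_Y\pifw$ reproduce exactly $\ell$ steps of $\AA_X\pif$. This is a statistical statement — it relies on the \emph{compatibility} of $Y$ with $(X,\phi)$ (that sampling the $Y$-seed uniformly and following a fixed $d_2$-sequence induces the genuine $X$-random-walk distribution, \cref{lem:pseudorandom_cayley}), which is why $Y$ must be a Cayley graph over a product group. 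Only after this identification can the double-sum term be bounded by $\opnorm{(\AA_X\pif)^{i-r}}$ and hence by $(2\lambda(X)+\lambda_0)^{\lfloor(i-r)/2\rfloor}\le\lambda(Y)^{2\lfloor(i-r)/2\rfloor}$ via \cref{cor:ta-shma_simp_matrix}, which is precisely where the hypothesis $\lambda_0 \le \lambda(Y)^2 - 2\lambda(X)$ is consumed. Your sketch gestures at this endpoint (``a single internal failure costs a factor like $\lambda_0+2\lambda(X)$'') but without the simulation lemma there is no legitimate route to it, so the argument has a genuine hole.
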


\begin{remark}
  Note that there is an inherent trade-off between the spectral bound amplification (on the operator norm), and the degree bound (on the number of walks),
  which causes the suboptimality in how close this technique lets us approach the Ramanujan bound.
  As in~\cite{Ta-Shma17}, the $o_\lambda(1)$ term we obtain from the bound above is $(1/\log(1/\lambda))^c$ for some $c >0$
  (see~\cref{lemma:almost_ram_exp_1} for the precise computation).
\end{remark}

\paragraph{Section Outline} In
\cref{sec:s_wide_replacement_prod}, we recall the $s$-wide replacement
product and describe random walks on it. Then,
in~\cref{sec:simulation}, we formalize the distributions we work with
and reprove the result that if $Y$ is a Cayley graph over any product group of appropriate size, then it enables the transfer of pseudorandomness from $Y$ to $X$. The key generalization to operator valued functions
is established in \cref{lem:operator_walk} which is identical in
spirit to \cref{eqn:connect}. In~\cref{subsec:norm_decay}, we then
finish the amplification analysis in a manner similar
to~\cite{Ta-Shma17}.  In \cref{sec:parameters}, we provide details
about instantiating the setup by explicitly constructing the graphs we
need.
%\footnote{Any product group of the form $G^s$
%  with $\abs{G}=d_1$ can be used in the $s$-wide construction and it
%  satisfies this \emph{compatible} property. Note that
%  in~\cref{theo:ta-shma_main} we used $G=\F_2^{\log_2(d_1)}$.} 

\subsection{The $s$-wide Replacement Product and its Walks}\label{sec:s_wide_replacement_prod}

Let $X$ be a $d_1$-regular graph. For each $x \in V_X$ and $j \in [d_1]$, let $x[j]$ be the $j$-th neighbor of $x$.

\begin{definition}[Locally Invertible Rotation Map]
  $X$ admits a locally invertible rotation map if there exists a bijection $\phi \colon [d_1] \to [d_1]$
  such that for every $(x,j) \in V_X \times [d_1]$,
  \begin{align*}
    \text{if }\; x' = x[j],\;  \text{ then, }\; x'[\phi(j)] = x\mper
  \end{align*}
\end{definition}

\begin{example}[Cayley Graphs are Locally Invertible]\label{exam:cayleylocal}
  Let $G$ be a group and $A \subseteq G$ where the set $A$ is closed under inversion.
  Label the neighbors of vertices in  $\textup{Cay}(G,A)$, by elements of $A$ such that $g[a] = a \cdot g $.
  Then,  $\textup{Cay}(G,A)$ is locally invertible as the map $\phi \colon A \to A$ defined as
  $\phi(a) = a^{-1}$ clearly satisfies the criteria,   
  \begin{align*}
    \text{if }\; g' = g[a] = a \cdot g,\;  \text{ then, }\; g'[\phi(a)] = a^{-1} \cdot  g'= g\mcom
  \end{align*}
  for every $g \in G$, $a \in A$.
\end{example}

We now define the $s$-wide replacement product which is a generalization of standard
replacement product of graphs which can be seen as the special case when $s=1$. 

\begin{definition}[$s$-wide Replacement Product]\label{def:s_wide_replacement}
  Suppose we are given the following:
  \begin{itemize}
    \item A $d_1$-regular graph $X$ with a bijection
          $\phi:  [d_1] \to [d_1]$ which defines a locally invertible rotation map.
    \item A $d_2$-regular graph $Y$ on the vertex set $[d_1]^s$.
  \end{itemize}
  And we define:
  \begin{itemize}
    \item For $i \in \set{0,1,\dots,s-1}$, we define $\textup{Rot}_i \colon V_X \times V_Y \to V_X \times V_Y$ such that,
          \begin{align*}
            \textup{Rot}_i((x, (a_0,\dots,a_{s-1}))) \coloneqq (x[a_i], (a_0,\dots,a_{i-1}, \phi(a_i),a_{i+1},\dots,a_{s-1}))\mcom
          \end{align*}
          for every $x \in V_X $ and $(a_0,\dots,a_{s-1}) \in  V_Y = [d_1]^s$.         
    (Note that the $Y$ component of the rotation map depends only on a vertex's $Y$ component, not its $X$ component.)
          
    \item Denote by $\matr X_i$, the operator on $\C[V_X \times V_Y]$ which acts on the natural basis via the permutation
          $\textup{Rot}_i$, and let $\matr A_Y$ be the normalized random walk operator of $Y$.
  \end{itemize}

  Then $t$ steps of the $s$-wide replacement product are given by the operator
  \begin{align*}
    \matr X_{t-1 \bmod s} \AA_Y\, \cdots \, \matr X_{1 \bmod s} \AA_Y \matr X_{0 \bmod s} \AA_Y\mper
  \end{align*}
\end{definition}

Observe that a walk on the $s$-wide replacement product yields a walk
on the outer graph $X$ by recording the $X$ component after each step
of the walk.  Since a walk is completely determined by its intra-cloud
steps, the number of $t$-step walks on the $s$-wide replacement
product is,
\begin{align*}
  \abs{V_X} \cdot \abs{V_Y} \cdot d_2^{t} = n \cdot d_1^s \cdot d_2^{t} \ll n \cdot d_1^{t}\mcom
\end{align*}
which therefore gives us a very sparse subset of all $t$-walks on $X$.
Thus the $s$-wide replacement product will be used to simulate random
walks on $X$ while requiring a reduced amount of randomness (as we
shall see this simulation is only possible under special conditions,
namely, when we are uniformly distributed on each cloud).

\subsection{The Collection of Derandomized Walks}\label{sec:simulation}

We now describe the distribution obtained by the walks on the $s$-wide
replacement product using the language of operators.

%Recall that, in the expander walk case discussed
%in~\cref{sec:simple_amp}, we first relate the set of walks $W_t$ to
%the action of the t-step walk operator (\cref{eqn:connect}) and then
%obtain that the task of bounding the bias reduces to bounding the
%operator norm (\cref{eqn:product_pi}). Similarly for $s$-wide case,
%we express a $t$-step walk in terms of a $s$-wide operator that act on
%the extended space $\C[V_X] \otimes \C[V_Y]$. Then we prove a core
%lemma that intuitively says: the action of $t$-step $s$-wide operator
%is same as the action of the $t$-step random walk operator in an
%appropriate sense, whenever $t \leq s$. The scalar version of this
%lemma is present (Lemma 26) in~\cite{Ta-Shma17} and we generalize it
%for operator valued functions. This generalization requires some care
%and appropriate notational setup. Finally, we use this lemma to show
%bias decay for any number of steps ($t$).

\begin{definition}[Operators and Distributions]\label{def:op_dist}
  Given a tuple of random walk operators\footnote{Markov chain operators on $V_X \times V_Y$.}
  $\matr B = (\matr B_0, \cdots, \matr B_{t-1} )$ on $\C[V_X]\otimes\, \C[V_Y]$ and a starting
  vertex $x_0 \in V_X$, we can define a distribution induced by the walk using
  these operators.
  More precisely, $\cD(\matr B, x_0)$ is the distribution on
  $(V_X \times V_Y)^{t+1}$ such that for every $1 \leq \ell \leq
  t$,
  \begin{equation}\label{eqn:distrib}
    \left (  \matr B_{\ell -1} \cdots \matr B_0 \right ) \left (x_0\otimes \vecn{1}\right ) = \E_{ (\vec{x}, \vec{y} ) \sim \cD(\matr B,x_0) } x_\ell \otimes y_\ell .
   \end{equation}
   We typically suppress $x_0$ as it will not matter and denote $\cD_{X}(\matr B)$, and $\cD_{Y}(\matr B) $ to specify the projections to $V_X$, and  $V_Y$ respectively.
\end{definition}

The next lemma is a generalization of~\cref{eqn:connect} which we need
for the $s$-wide replacement walk. This can also be specialized to
prove~\cref{eqn:connect} by letting $Y$ be a graph with one vertex
(and thus $\vsX \cong \vsXY$). Recall that $\pifw \,(v\otimes x\otimes y)
= \matr f (x)\, v\otimes x\otimes y$.

\begin{lemma}[Operator Generalization]\label{lem:operator_walk}
  For any tuple of random walk operators $\matr B$, any operator valued $\matr f$, and any $v \in \cH$,  $x_0 \in V_X$, we have
  \begin{align*}
    \left ( \BB_{t-1} \pifw  \cdots \BB_0 \pifw \right ) \left (v \otimes x_0\otimes \vecn{1} \right )  = \Ex{ (\vec{x}, \vec{y} ) \sim \cD(\matr B) }{  \matr f (x_{t-1}) \cdots \matr f (x_0)\,  v  \otimes x_t \otimes y_t}\mper
  \end{align*}
\end{lemma}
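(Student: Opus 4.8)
The plan is to prove this by induction on $t$, exactly paralleling the proof sketch of~\cref{eqn:connect} but carefully tracking the three tensor factors $\cH$, $\C[V_X]$, and $\C[V_Y]$. For the base case $t = 0$, the left-hand side is just $v \otimes x_0 \otimes \vecn{1}$ (empty product of operators), and the right-hand side, with the convention that $\matr f(x_{-1}) \cdots \matr f(x_0)$ is the identity and $\cD(\matr B)$ on $(V_X \times V_Y)^1$ places all mass on... well, we should first unwind~\cref{eqn:distrib} at $\ell = 0$ to see that $(\vec x, \vec y)$ restricted to its first coordinate is deterministically $(x_0, \cdot)$ with $y_0$ uniform, giving $\Ex{}{v \otimes x_0 \otimes y_0} = v \otimes x_0 \otimes \vecn 1$, matching.

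For the inductive step, assume the identity for $t - 1$. First I would apply $\pifw$ to $\BB_{t-2}\pifw \cdots \BB_0 \pifw (v \otimes x_0 \otimes \vecn 1)$; by the inductive hypothesis and linearity of $\pifw$, and because $\pifw(w \otimes x \otimes y) = \matr f(x)\, w \otimes x \otimes y$ acts only on the first two factors via the $V_X$-label, this yields $\Ex{(\vec x, \vec y)\sim \cD(\matr B)}{\matr f(x_{t-1})\matr f(x_{t-2})\cdots \matr f(x_0)\, v \otimes x_{t-1} \otimes y_{t-1}}$. Then I would apply $\BB_{t-1}$, which acts as a random-walk (Markov) operator only on the $\C[V_X]\otimes \C[V_Y]$ factors and commutes with the scalar-in-$\cH$ part, i.e. $\BB_{t-1}(w \otimes x \otimes y) = w \otimes \matr B_{t-1}(x \otimes y)$ for the corresponding extension. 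The key point is that $\BB_{t-1}$ is linear and acts identically on each term of the expectation, so pushing it inside and invoking the defining relation~\cref{eqn:distrib} for $\cD(\matr B)$ at level $\ell = t$ converts $\matr B_{t-1}$ applied to $x_{t-1}\otimes y_{t-1}$ (averaged over $\cD(\matr B)$) into the average of $x_t \otimes y_t$ over $\cD(\matr B)$, while the $\cH$-factor coefficient $\matr f(x_{t-1})\cdots \matr f(x_0)\, v$ rides along unchanged because it depends only on $x_0, \ldots, x_{t-1}$, which are fixed before the $t$-th step. This gives the desired expression.

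The main obstacle — really a bookkeeping subtlety rather than a deep difficulty — is making precise the claim that the $\cH$-coefficient "rides along unchanged" when applying $\BB_{t-1}$. Formally one wants that the pair $(x_t, y_t)$ conditioned on the prefix $(x_0, y_0, \ldots, x_{t-1}, y_{t-1})$ is distributed as $\matr B_{t-1}$ applied to the basis vector $x_{t-1}\otimes y_{t-1}$, and that this conditional structure is exactly what~\cref{eqn:distrib} encodes when one compares the level-$(t-1)$ and level-$t$ instances of that equation. I would phrase this via the Markov property built into~\cref{def:op_dist}: write $\BB_{t-1}$ as $\matr I_\cH \otimes \matr B_{t-1}$, factor the expectation over $\cD(\matr B)$ as an iterated expectation over the prefix and then the last step, and note $\matr f(x_{t-1})\cdots \matr f(x_0)\,v$ is measurable with respect to the prefix while $\matr B_{t-1}$ acts on the last $\C[V_X]\otimes\C[V_Y]$ coordinate; then~\cref{eqn:distrib} is precisely the statement that $\matr B_{t-1}$ applied coordinatewise to $x_{t-1}\otimes y_{t-1}$ produces $x_t \otimes y_t$ in expectation. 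Since the identity~\cref{eqn:connect} is the special case $\abs{V_Y} = 1$ (where $\pifw$ degenerates to $\pif$, $\BB_i = \AA_X$, and $\cD_X(\matr B)$ is the walk distribution $\cW_t$), this also re-establishes~\cref{eqn:connect} as promised, so I would remark on that at the end rather than proving it separately.
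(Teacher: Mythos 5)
Your proposal is correct and takes essentially the same inductive approach as the paper: apply the inductive hypothesis, push $\pifw$ through the expectation using the fact that it acts diagonally on the $V_X$-index, then apply $\BB_{t-1}$ and invoke \cref{eqn:distrib} at level $t$. The only cosmetic differences are that the paper begins the induction at $t=1$ rather than $t=0$ (avoiding the need to interpret \cref{eqn:distrib} at $\ell=0$), and the paper simply cites \cref{eqn:distrib} where you spell out the Markov-property bookkeeping, but the substance is identical.
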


%\edits{Delete? \textit{Since we now work with the tensor
%products of three spaces (one for the graph $X$, one for the graph
%$Y$, and one for the operator valued function $f$), we formalize the
%computation more explicitly}}.

\begin{proof}
  We prove the computation via induction on $t$. The base case is when $t =1$
  \begin{align*}
    \left ( \BB_{0}\pifw \right ) \left (v \otimes x_0\otimes \vecn{1} \right )  ~&=~  \BB_{0} \left ( \matr f (x_0)\, v \otimes x_0\otimes \vecn{1} \right )\\
        ~&=~  \Ex{(\vec{x}, \vec{y} ) \sim \cD(\matr B)}{\matr f (x_0) \,v \otimes x_1\otimes y_1} && (\text{Using  \cref{eqn:distrib} for } \ell = 1)
  \end{align*}
 
  Let $y_0 =  \vecn{1}$ and assume the statement holds for $t-1$. Then,
  \begin{align*}
   \left ( \BB_{t-1} \pifw  \cdots \BB_0 \pifw \right )\left (v \otimes x_0\otimes  y_0 \right )  ~&=~
           \BB_{t-1} \pifw\cdot \prod_ {i= t-2}^{0} \left (\BB_{i}\pifw \right )\left (v \otimes x_0\otimes y_0 \right )  \\
         ~&=~\BB_{t-1}\pifw \Ex{ (\vec{x}, \vec{y} ) \sim \cD(\matr B) }{ \matr f (x_{t-2}) \cdots \matr f (x_0)\,v  \otimes x_{t-1} \otimes y_{t-1}}  \\
         ~&=~\BB_{t-1} \Ex{ (\vec{x}, \vec{y} ) \sim \cD(\matr B) }{ \matr f (x_{t-1}) \matr f (x_{t-2}) \cdots \matr f (x_0)\, v  \otimes x_{t-1} \otimes y_{t-1}}\\
         ~&=~ \Ex{(\vec{x}, \vec{y} ) \sim \cD(\matr B)}{\matr f(x_{t-1}) \cdots \matr f (x_0)\,v  \otimes x_t \otimes y_t}\mper
  \end{align*}
  The second equality uses the inductive hypothesis, and the last two
  equalities use \cref{eqn:distrib} for $\ell = t-1$ and $\ell = t$ respectively.
\end{proof}

Using~\cref{def:op_dist}, we further define the operators for the distributions we wish to study.
\paragraph*{Uniform Distribution} Let us first capture, using this notation, the uniform distribution on walks on $X$ starting from $x_0 \in V_x$.
We define $\matr B_U$ where for each $i$,  $\matr B_i = \matr A_X\otimes I_Y$ for every $i$.
Then,  for any $\ell,\; (\matr A_X\otimes  I_Y)^\ell = \matr A_X^\ell\otimes  I_Y$. Therefore, we obtain that $\cD_X(\matr B_U)$
is the $t$-step random walk distribution on $X$ \ie $x_i \sim A_X^ix_0$.  
\paragraph*{The $s$-wide Distribution} This is the distribution obtained by the $s$-wide walks as described in the earlier section.
For $0\leq a\leq b < s$, we define
\begin{align*}
  \matr B [a,b] = \left(\matr X_{a } \AA_Y,\matr X_{a+1 } \AA_Y, \cdots,  X_{b} \AA_Y \right) \mper
\end{align*}
We can view this random walk as occurring in two steps. The first
being picking an initial vertex $y_0 \in Y$ and then, picking the
sequence of neighbors according to which we will perform the walk in
$Y$.
To formalize this, let $\matr A_Y = (1/d_2) \sum_{j = 1}^{d_2}
\matr P_j$ where $\matr P_j$ are permutation matrices and let $\permc
= (j_0, \cdots, j_{b-a}) \in [d_2]^{b-a+1}$. For a fixed sequence of permutations $J$, the conditional
distribution (conditioned on $J$) is defined by,
\begin{align*}
  \matr B [a,b,\permc] = \left(\matr X_{a} \PP_{j_0}, \,\matr X_{a +1} \PP_{j_1}, \cdots, \matr X_{b } \PP_{j_{b-a}} \right)\mper
\end{align*}
We would like these two distributions,{ \ie the uniform distribution and the $s$-wide distribution} , to be the same and a graph $Y$
is said to be \emph{compatible} with respect to $(X,\phi)$, if for any
fixed sequence, $\permc$, of a walk of length $\ell \leq s$, the
distribution obtained on $X$ via the uniform sampling of $y_0$, is the
same as the usual $\ell$-length walk on $X$ from any fixed initial
vertex, $x_0$. Thus, the randomness of sampling a vertex from $Y$ is
effectively \emph{transferred} to a random walk on $X$.

\begin{definition}[Compatible]
  A graph $Y$ is \emph{compatible} with respect to $(X,\phi)$
  if for every $0\leq a\leq b < s$, $\permc \in [d_2]^{b-a+1}$ and  $x_0 \in V_X$, we have\footnote{It is important to note that $\cD_Y(\matr B[a,b,\permc]) \neq \cD_Y(\matr B_U)$.}
  \begin{align*}
    \cD_X(\matr B[a,b,\permc], x_0)= \cD_X(\matr B_U, x_0) = \matr A_X^{b-a+1}x_0 \mper
  \end{align*}
\end{definition}

  This \textit{compatible} property is the same as the $0$-pseudorandom
  property in~\cite{Ta-Shma17}. We rename it as it is more of a
  structural compatibility property than a pseudorandomness one. We now prove, for the sake of completeness,  that Cayley graphs are compatible with every locally invertible graph.

\begin{lemma}[{\cite[Lemma 29]{Ta-Shma17}}]\label{lem:pseudorandom_cayley}
  Let $Y = \Cay(G^s,T)$ where $\abs{G} = d_1$.
  Then, $Y$ is compatible with respect to any $X, \phi$.
\end{lemma}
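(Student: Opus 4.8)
The plan is to unwind $\cD_X(\matr B[a,b,\permc], x_0)$ as the law of the $X$-component of the walk on the $s$-wide replacement product with the (fixed) generator sequence $\permc = (j_0,\ldots,j_{b-a})$, started from $x_0$ and a uniformly random $y_0 \in V_Y = G^s$ (as in \cref{eqn:distrib}), and to show that this law is precisely the $(b-a+1)$-step random walk on $X$ from $x_0$, independently of $\permc$. Comparing with $\cD_X(\matr B_U, x_0) = \matr A_X^{b-a+1} x_0$ then finishes the proof.

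Write $\ell = b-a+1$ and $i_k = (a+k)\bmod s$ for $0 \le k \le \ell-1$; since $\ell \le s$, the indices $i_0,\ldots,i_{\ell-1}$ are pairwise distinct. The key structural input is that, $Y$ being a Cayley graph on the \emph{direct product} $G^s$, each permutation $\matr P_j$ appearing in $\matr A_Y = (1/d_2)\sum_{j=1}^{d_2}\matr P_j$ acts on $V_Y = G^s$ as translation by a generator $t_j$, hence \emph{coordinatewise}. I would track along the walk the $Y$-vertex $y_k$ just before step $k$, its translate $\tilde y_k$ (the effect of applying $\matr P_{j_k}$), and the ``direction'' $D_k := \tilde y_k[i_k] \in G$ read off by $\textup{Rot}_{i_k}$. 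By \cref{def:s_wide_replacement}, step $k$ then sends $x_k \mapsto x_{k+1} = x_k[D_k]$ and replaces only the $i_k$-th coordinate of $\tilde y_k$ (by $\phi(D_k)$).

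The crux is the claim that $D_k = c_k(\permc) \cdot g_{i_k}$, where $g_{i_k}$ denotes the $i_k$-th coordinate of $y_0$ and $c_k(\permc) \in G$ depends only on $\permc$ (it is the ordered product of the $i_k$-th coordinates of $t_{j_k}, t_{j_{k-1}}, \ldots, t_{j_0}$). This follows by induction on $k$: since $i_0,\ldots,i_{k-1}$ are all distinct from $i_k$, the $i_k$-th coordinate of $y$ has never been the operative one before step $k$, so up to that point it has only absorbed the coordinatewise translations $\matr P_{j_0},\ldots,\matr P_{j_{k-1}}$ and never a $\phi$-update, and applying $\matr P_{j_k}$ contributes exactly one further factor. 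This is precisely where $\ell \le s$ enters; note also that the bijectivity of $\phi$ is never used — what $\phi$ does to coordinate $i_k$ is irrelevant, as that coordinate is never read again within these $\le s$ steps. Since $y_0$ is uniform on $G^s$, its coordinates $g_0,\ldots,g_{s-1}$ are i.i.d.\ uniform on $G$; as the $i_k$ are distinct, $D_0,\ldots,D_{\ell-1}$ are therefore independent uniform elements of $G \cong [d_1]$, with $D_k$ independent of $x_0,\ldots,x_k$ (each of which is a function of $g_{i_0},\ldots,g_{i_{k-1}}$, $\permc$, $x_0$ only). Because $j \mapsto x_k[j]$ is a bijection from $[d_1]$ onto the neighbourhood of $x_k$, each $x_{k+1}$ is a uniformly random neighbour of $x_k$ independent of the past, so $(x_0,x_1,\ldots,x_\ell)$ is exactly an $\ell$-step random walk on $X$ from $x_0$; equivalently, $\cD_X(\matr B[a,b,\permc],x_0) = \matr A_X^{\ell} x_0 = \cD_X(\matr B_U, x_0)$, as required.

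The main obstacle is bookkeeping rather than conceptual: keeping straight the order of operations inside a step ($\matr P_{j_k}$ then $\textup{Rot}_{i_k}$), which coordinate of $y$ is read versus updated, and the fact that a translation on $G^s$ touches \emph{every} coordinate — so one must argue carefully that the relevant coordinate nonetheless enters $D_k$ only through the single untouched-by-$\phi$ uniform factor $g_{i_k}$. I would write out the induction explicitly; recording the closed form of $c_k(\permc)$ is optional, since only its being constant in the randomness $y_0$ matters.
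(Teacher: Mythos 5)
Your proof is correct and follows essentially the same route as the paper's: exploit that translations on the direct product $G^s$ act coordinatewise, observe that within $\ell\le s$ steps each $\textup{Rot}_{i_k}$ reads a distinct coordinate whose value is a fixed (by $\permc$) left-translate of the corresponding i.i.d.\ uniform coordinate of $y_0$, and conclude the $X$-marginal is exactly the $\ell$-step random walk. Your presentation is a little more explicit than the paper's (you name the closed form of the translation $c_k(\permc)$ and note that $\phi$ never actually enters the computation of the read-off direction, since a coordinate is only $\phi$-updated after it has been read and is not read again), whereas the paper summarizes this with ``$r_i\mapsto\tau_{i,k}(r_i)$ for a fixed permutation $\tau_{i,k}$ depending on $\permc$ and $\phi$'' — but the underlying argument is identical.
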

\begin{proof} Let $x_0\in V_X$ be arbitrary and let $y_0 = (r_1, \cdots, r_s) \sim G^s$ be sampled uniformly. Since $Y$ is a Cayley graph, the sequence of permutations, $J$, is equivalent to a sequence of generators $(t,^1, \cdots, t^s) \in T^s$, and the permutation is group multiplication, $y \mapsto t^k\cdot y$.

For each $1\leq i \leq s$, let $t^i \cdot y_{i-1} = (r_1^i,\cdots, r_s^i)$. The walk evolves as follows,
\begin{align*}
x_i &= X_i(x_{i-1}, t^i \cdot y_{i-1}) = x_{i-1}[r_i^i] \\
y_i &=   \phi(t^i \cdot y_{i-1}) =  \left(r_1^i,\cdots, r_{i-1}^i,  \phi(r_i^i), r_{i+1}^i,  \cdots, r_{s}^i\right)\mper
\end{align*}

Compatibility requires that $ x_i \sim A_X^i x_0$, which is inductively implied if for every $i$, $r_i^i$ is uniform over $G$ and independent of $r_j^j$ for $j \neq i$.

This clam is true initially as $y_0$ is uniform over $G^s$. Assume it is true for $y_{i-1}$. Since, $t^i, \phi$ are fixed permutations, they do not affect the uniformity of the distribution of $r^i_k$ for any $k$. Since, $\phi$ acts only on the $i^{th}$ component, the independence of $r_k^i \text{ and } r_i^i$, guaranteed by inductive claim, implies the independence of  $r_k^i \text{ and } \phi(r_i^i)$. 
\end{proof}

%--fix a sequence $\vec{z} = (z_1,\cdots, z_k)$, where $z_i \in V_X$. The number of $y_0$ such that $(z_1,\cdots, z_k) = (x_1,\cdots, x_k)$ is $d_2^{s-k}$.  We note it below,
%\tnote{Added the observation below. Should I}
For a fixed $x_0$ and $\permc$, we say that $y_0$ \textit{gives rise} to a sequence $\vec{z} = (z_1,\cdots, z_t) $ if $z_i = x_i$ where $x_i$ is as defined in the above proof.

\begin{observation}\label{obs:give_rise}
	Fix $x_0\in V_X$ and a sequence $\permc$. For any $\vec{z} = ( z_1,\cdots, z_k)$, the number of $y_0$ that \textit{gives rise} to $\vec{z}\,$ is $d_2^{s-k}$.
\end{observation}
\begin{proof}
	From the proof of~\cref{lem:pseudorandom_cayley}, we see that enforcing $z_i = x_i$ for each $i$ starting from $i=1$, forces exactly $r_i$ to be fixed. Thus, the remaining $(r_{k+1},\cdots, r_s)$ are free.
\end{proof}

\subsection{The $s$-wide Operator Norm Decay}\label{subsec:norm_decay}

We are now ready to establish the key technical lemma in the analysis of the $s$-wide replacement which is an operator-valued generalization of the scalar version of present in~\cite{Ta-Shma17}.

\begin{lemma}[Simulation Lemma (generalization of Lemma 26 from~\cite{Ta-Shma17})]\label{lemma:simulation_matrix}
  Let $0 \le s_1 \le s_2 < s$. For every pair of vectors $ z,   z' \in \vsX$, we have,
  \begin{align*}
     \ip{\prod_{i=s_1}^{s_2} \left( \XX_i \AA_{Y}\pifw\right) \left (z\otimes \vecn{1}\right )}{z'\otimes \vec{1} } ~=~  \ip{\left(\AA_X \pif \right)^{s_2-s_1+1} z}{z'}.
  \end{align*}
\end{lemma}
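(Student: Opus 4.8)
The plan is to prove this by induction on $s_2 - s_1$, using the compatibility of $Y$ with respect to $(X,\phi)$ together with the operator-walk identity of \cref{lem:operator_walk}. The key structural fact is that, because the number of steps $s_2 - s_1 + 1$ is at most $s$, the $Y$-components $a_0, a_1, \dots$ used along the $s$-wide walk touch distinct coordinates of $V_Y = [d_1]^s$, so a uniformly random starting vertex in the cloud produces exactly a uniform random walk on $X$. Concretely, I would unfold both sides using \cref{lem:operator_walk} applied to the operator tuple $\matr B[s_1, s_2]$: its left-hand side becomes $\E_{(\vec x,\vec y)\sim \cD(\matr B[s_1,s_2])}\big[\matr f(x_{s_2-s_1})\cdots\matr f(x_0)\,z' \text{-pairing}\big]$, where I should first tensor $z$ (a vector in $\vsX$) appropriately against $\vecn 1 \in \C[V_Y]$; and \cref{lem:operator_walk} rewrites the walk operator acting on $z \otimes x_0 \otimes \vecn 1$ in terms of the induced distribution $\cD(\matr B[s_1,s_2])$.

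Next I would take the inner product with $z' \otimes \vec 1$. Pairing with the all-ones vector on the $V_Y$ side kills the $y_t$ dependence and simply averages over the $Y$-marginal, so the left-hand side collapses to $\ip{\E_{\vec x \sim \cD_X(\matr B[s_1,s_2])}[\matr f(x_{s_2-s_1})\cdots \matr f(x_0)\, z\text{-part}]}{z'}$ — here one must be careful that the operator-valued function composes in the order dictated by \cref{lem:operator_walk}, and that the $\AA_Y$ factors interleaved with the $\XX_i$'s are exactly what $\matr B[s_1,s_2]$ encodes. Now I invoke \cref{lem:pseudorandom_cayley}: since $Y = \Cay(G^s, T)$ is compatible with respect to $(X,\phi)$, we may decompose $\matr A_Y = (1/d_2)\sum_j \matr P_j$ and average over the conditioning sequence $\permc$; compatibility says that for each fixed $\permc$ of length $\le s$, $\cD_X(\matr B[s_1,s_2,\permc], x_0) = \matr A_X^{s_2-s_1+1} x_0$, i.e. it is the genuine $(s_2-s_1+1)$-step random walk distribution on $X$, independent of $\permc$. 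Averaging a $\permc$-independent quantity over $\permc$ changes nothing, so $\cD_X(\matr B[s_1,s_2]) = \cD_X(\matr B_U)$ on the relevant $x$-marginal. Applying \cref{lem:operator_walk} a second time, now to $\matr B_U$ (whose operators are just $\matr A_X \otimes \matr I_Y$), recognizes $\E_{\vec x \sim \cD_X(\matr B_U)}[\matr f(x_{s_2-s_1})\cdots\matr f(x_0)\, z]$ as precisely $\big(\AA_X \pif\big)^{s_2-s_1+1} z$ (this is the content of \cref{eqn:connect} in the one-vertex-$Y$ special case, equivalently the identity $\pif(\AA_X\pif)^k \lifth = $ walk-average), and taking the inner product with $z'$ gives the right-hand side.

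The main obstacle I anticipate is bookkeeping rather than conceptual: aligning the index conventions so that $\matr f$ is applied at the correct vertices in the correct order, handling the distinction between $\pifw$ (acting on $\vsXY = \cH \otimes \C[V_X]\otimes\C[V_Y]$) and $\pif$ (acting on $\vsX$), and correctly tracking where the all-ones vectors $\vecn 1$ (normalized) versus $\vec 1$ (unnormalized) appear — since these normalization factors of $1/\abs{V_Y}$ must exactly cancel for the clean identity to hold. A secondary subtlety is verifying that compatibility is being used in the regime $s_2 - s_1 + 1 \le s$ where it genuinely applies (the hypothesis $s_2 < s$ guarantees this), and that the inner products are over $\vsX$ with its induced tensor inner product, so that pairing against $z' \otimes \vec 1$ really does reduce to the $\vsX$-inner product $\ip{\cdot}{z'}$ after the $V_Y$-average. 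Once these conventions are pinned down, each inductive step is a one-line application of the definitions, mirroring Lemma 26 of~\cite{Ta-Shma17} with scalars replaced by operators.
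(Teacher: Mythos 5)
Your proposal is correct and takes essentially the same route as the paper: both unfold the left-hand side via the operator-walk identity (\cref{lem:operator_walk}) for $\matr B[s_1,s_2,\permc]$, collapse the $V_Y$ dependence by pairing with $\vec 1$, invoke compatibility (\cref{lem:pseudorandom_cayley}) to identify $\cD_X(\matr B[s_1,s_2,\permc])$ with $\cD_X(\matr B_U)$, and re-apply the operator-walk identity for $\matr B_U$ to recognize the right-hand side. Your opening mention of induction on $s_2-s_1$ is vestigial since the argument you actually run is direct (as in the paper), and the paper additionally reduces to basis vectors by bilinearity and makes the $\vec 1$-collapse explicit via a fiber count of starting vertices $y_0$, but these are presentational differences only.
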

\begin{proof}
Let $z = \sum_x v_x\otimes x$ and $z' = \sum_x w_x\otimes x $.  Since the expression is bilinear,  it suffices to prove the
equation for $v\otimes x_0, \,  w\otimes x'$ for an arbitrary pair $(x_0,x')$.
Let $t =  s_2-s_1 +1$.
\begin{align*}\label{eqn:conditional}
  \prod_{i=s_1}^{s_2} \left(\XX_i \AA_{Y}\pifw\right) ~=~ \Ex{(j_{s_1}, \cdots, j_{s_2}) \sim [d_2]^{t} }{\prod_{i=s_1}^{s_2} \left( \XX_i\PP_{j_i} \pifw\right)}
\end{align*}

Therefore, we can fix $\permc = (j_{s_1}, \cdots, j_{s_2}) \in [d_2]^t$ and prove it for that. Recall the notation $ \cW_t$ that denotes the set of $t$-length walks on the graph $X$. Applying \cref{lem:operator_walk} to $\matr B[s_1, s_2, \permc]$, we get,
%
%For each given $\vec{x} = (x_0, \cdots, x_t)$, there are exactly
%$d_1^{s-t}$ starting vertices $y_0 = (r_1, \cdots, r_s)$ that give
%rise \todo{Explain the "give rise" part} to $\vec{x}$.
%%
%This is because,  the only requirement is that each of the $t$ constraints $x_i = x_{i-1}[\tau_{s_1+i-1,i}(r_{a + i-1})]$
%is satisfied where $\tau_{s_1+i-1,i}$ is a fixed permutation (for a given $\permc$).
%%
%Each of these equations determine one of the $r_i$'s and therefore we have $d_1^{s-t}$ free choices.
\begin{align*}
\prod_{i=s_1}^{s_2} \left( \XX_i\PP_{j_i} \pifw\right) \left (v \otimes x_0 \otimes \vecn{1}\right ) &~=~ \Ex{ (\vec{x}, \vec{y} ) \sim D(\matr B[s_1,s_2, \permc]) }{ \matr f(x_{t-1}) \cdots \matr f (x_0)\,  v  \otimes x_t \otimes y_t} \\
&~=~ \sum_{  \vec{x} \in  \cW_t } \Ex{y_0 \sim V_Y }{ \matr f (\vec{x}) \, v  \otimes x_t \otimes y_t }\mathbb{I}[y_0 \text{ gives rise to } \vec{x}],\\
&~=~ \frac{d_1^{s-t}}{d_1^s} \sum_{  \vec{x} \in  \cW_t }  \matr f (\vec{x}) \, v  \otimes x_t \otimes y_t , 
\end{align*}
where $\matr f (\vec{x}) = \matr f(x_{t-1}) \cdots \matr f (x_0)$. The last equality uses~\cref{obs:give_rise}.
Therefore, the conditioning on $y$ does not change the distribution $\cD_X$ and when we take inner products, we obtain,
\begin{align*}
  \ip{\prod_{i=s_1}^{s_2} \left(\XX_i \PP_{j_i} \pifw\right) \left (v \otimes x_0 \otimes \vecn{1}\right )}{w \otimes x'\otimes  \vec{1} }
               &~=~ \frac{d_1^{s-t}}{d_1^s} \sum_{  \vec{x} \in \cW_t }  { \ip{x_t}{x'} \ip{\matr f(x_{t-1}) \cdots \matr f (x_0)\,  v }{w}}\\
&~=~  \Ex{ \vec{x} \sim D_X(\matr B[s_1,s_2, J]) }{ \ip{x_t}{x'} \ip{\matr f(x_{t-1}) \cdots \matr f (x_0)\,v }{w}}\mper
\end{align*}

We now use\footnote{ As we only want to work with the space $\vsX$ here,  we can assume in the application of the lemma that $\abs{V_Y} = 1$.
  Else, one could directly apply \cref{eqn:connect} and use the observation that $\cD_X(\matr B_U)$ is the same as the random walk distribution on $X$.}
\cref{lem:operator_walk} for $\matr B_U$ and take inner product to get,
\begin{align*}
 \ip{\left(\AA_X \pif \right)^{s_2-s_1+1} \left (v\otimes x_0 \right )}{w\otimes x' } &~=~ \Ex{ \vec{x} \sim D_X(\matr B_U) }{ \ip{x_t}{x'} \ip{\matr f(x_{t-1}) \cdots \matr f (x_0)\,v }{w}} 
\mper
\end{align*}
From \cref{lem:pseudorandom_cayley}, we know that $Y$ is compatible
and thus, $\cD_X(\matr B[s_1,s_2,\permc]) = \cD_X(\matr B_U) $. Thus,
the right hand side (and therefore left hand sides) of these two equations above are equal.
\end{proof}

\paragraph*{The s-step Decay} Just like the amplification in \cref{sec:simple_amp} was analyzed by studying the norm decay
obtained in every two steps (\cf \cref{lem:simple_amplification}),
this amplification via the $s$-wide walks will be analyzed by bounding
the norm decay for steps of length $s$
using~\cref{lemma:simulation_matrix} similarly
to~\cite{BT08,Ta-Shma17}.
We will use the shorthand $\matr L_i \coloneqq \XX_i \pifw  \AA_{Y}$.

The goal is to bound $ \opnorm{ \matr L_{s-1}\cdots \matr L_0}$ which
controls the bias of the set obtained by $s$-long, $s$-wide walks (\cf
proof of \cref{eqn:product_pi}).
Equivalently, we will bound $\ip{(\prod_i \matr L_i)  \bv_0}{w_s}$ for any unit vectors\footnote{Here we deviate from our
  notation and use $v,w$ for vectors in $\vsXY$. } ${v}_0, {w}_s \in
\vsXY$.
We will use the orthogonal decomposition,
\begin{align*}
  \vsXY :=  \vsX \otimes \C[V_Y] =  \vsXY^\parallel \oplus \vsXY^\perp \;\text{ where }\;
  \vsXY^{\parallel} \coloneqq \textup{span}\set{z \otimes \vec{1} \mid z \in\vsX}\mper
\end{align*}

For $i \geq 1$, we inductively define the vectors $\bv_i, \bw_i, \bz_i$ and bound their norms\footnote{By definition $\norm{v_i} \leq  \norm{\AA_{Y}v_{i-1}^\perp} \le \expan(Y) \norm{v_{i-1}}. $The computation is similar for $w \text{ and }z$.},
\begin{align}
&\bv_i = \matr L_{i-1} \bv_{i-1}^\perp, \; \;\;\;\; &&\bz_{s-i} =  \left(\XX_{s-i} \pifw \right)^*\bw_{s-i+1},  \;\;\;\;
&&&\bw_{s-i}  = \left (\AA_Y\right )^* \bz_{s-i}^\perp\\
\label{eqn:normbounds}&\norm{\bv_i} \leq \expan(Y)^i , \; \;\;\;\;  &&\norm{\bz_{s-i}} \leq  \expan(Y)^{i-1}, \; \;\;\; &&&\norm{\bw_{s-i}} \leq \expan(Y)^i \mper
\end{align}

The following lemma follows readily from a calculation and we omit its proof.
\begin{lemma}
For any $\bv_0,\bw_s$ and $0 \leq r \leq s -2$ we have,
\begin{align*}
\matr L_{s-1} \cdots \matr L_{0} \bv_0 &~=~ \bv_s ~+~ \sum_{i=0}^{s-1} \matr L_{s-1} \cdots \matr L_{i}  \bv_{i}^\parallel \\
\matr L^*_{s-1} \bw_s &= \bw_{s-1} + \bz_{s-1}^{\parallel} \\
\matr L^*_{r} \cdots \matr L^*_{s-1} \bw_s &~=~ \bw_{r} ~+~ \bz_{r}^{\parallel} ~+~  \sum_{i=r+1}^{s-1} \matr L^*_{r} \cdots \matr L^*_{i-1}  \bz_{i}^\parallel
\end{align*}
\end{lemma}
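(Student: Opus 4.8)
The plan is to reduce all three identities to two elementary one-step identities and then telescope. The first line is immediate from the definition $\bv_{i+1} = \matr L_i\bv_i^{\perp}$, which may be rewritten as $\matr L_i\bv_i = \matr L_i\bv_i^{\parallel} + \bv_{i+1}$. Applying $\matr L_{s-1}\cdots\matr L_0$ to $\bv_0$ and substituting successively $\matr L_0\bv_0 = \matr L_0\bv_0^{\parallel}+\bv_1$, then $\matr L_1\bv_1 = \matr L_1\bv_1^{\parallel}+\bv_2$, and so on down to $\matr L_{s-1}\bv_{s-1} = \matr L_{s-1}\bv_{s-1}^{\parallel}+\bv_s$, a one-line induction on the number of remaining factors gives $\matr L_{s-1}\cdots\matr L_0\bv_0 = \bv_s + \sum_{i=0}^{s-1}\matr L_{s-1}\cdots\matr L_i\bv_i^{\parallel}$.

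For the other two lines I would first establish, for every index $i$, the ``backward one-step'' identity $\matr L_i^{*}\bw_{i+1} = \bw_i + \bz_i^{\parallel}$. Since $\matr L_i = \XX_i\pifw\AA_Y$, reversing the factors under the adjoint gives $\matr L_i^{*} = \left(\AA_Y\right)^{*}(\XX_i\pifw)^{*}$. The structural point is that $\AA_Y$ acts as $\matr A_Y$ on the $\C[V_Y]$ coordinate and as the identity on $\vsX$, and $Y$ is regular, so $\matr A_Y$ and its adjoint both fix the all-ones vector $\vec{1}$ (indeed $\matr A_Y$ is symmetric, $Y$ being an undirected Cayley graph on an $\F_2$-vector space); hence both $\AA_Y$ and $\left(\AA_Y\right)^{*}$ restrict to the identity on $\vsXY^{\parallel} = \textup{span}\{z\otimes\vec{1} : z\in\vsX\}$. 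Plugging in the definitions $\bz_i = (\XX_i\pifw)^{*}\bw_{i+1}$ and $\bw_i = \left(\AA_Y\right)^{*}\bz_i^{\perp}$ then gives
\[
  \matr L_i^{*}\bw_{i+1} ~=~ \left(\AA_Y\right)^{*}\bz_i ~=~ \left(\AA_Y\right)^{*}\bz_i^{\parallel} + \left(\AA_Y\right)^{*}\bz_i^{\perp} ~=~ \bz_i^{\parallel} + \bw_i ,
\]
and specializing to $i = s-1$ is the second line.

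For the third line I would telescope the one-step identity just proved: starting from $\matr L_{s-1}^{*}\bw_s = \bw_{s-1} + \bz_{s-1}^{\parallel}$ and applying $\matr L_{s-2}^{*}, \matr L_{s-3}^{*}, \dots, \matr L_r^{*}$ in turn, at each stage rewrite the single ``$\bw$'' summand via $\matr L_i^{*}\bw_{i+1} = \bw_i + \bz_i^{\parallel}$, while every term $\matr L_r^{*}\cdots\matr L_{i-1}^{*}\bz_i^{\parallel}$ already formed is carried along unchanged (for $i = r+1$ the prefix $\matr L_r^{*}\cdots\matr L_{i-1}^{*}$ is just $\matr L_r^{*}$, which produces the new summand created at that stage). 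A downward induction on $r$ then yields $\matr L_r^{*}\cdots\matr L_{s-1}^{*}\bw_s = \bw_r + \bz_r^{\parallel} + \sum_{i=r+1}^{s-1}\matr L_r^{*}\cdots\matr L_{i-1}^{*}\bz_i^{\parallel}$, which is the third line.

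There is no genuinely difficult step here — the lemma is a formal rearrangement of the recursive definitions of $\bv_i,\bw_i,\bz_i$ — and the only point requiring care is the adjoint bookkeeping: getting $\matr L_i^{*} = \left(\AA_Y\right)^{*}(\XX_i\pifw)^{*}$ with the factors in the correct order, and invoking regularity of $Y$ to conclude that $\left(\AA_Y\right)^{*}$ restricts to the identity on $\vsXY^{\parallel}$, so that $\left(\AA_Y\right)^{*}\bz_i^{\parallel} = \bz_i^{\parallel}$. That single observation is exactly what lets the parallel components pass through verbatim and makes both telescopes close.
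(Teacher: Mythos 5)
Your proof is correct, and it fills in a proof that the paper explicitly omits (``The following lemma follows readily from a calculation and we omit its proof''). The telescoping structure you use for all three identities is exactly the intended routine calculation, and you correctly isolate the one nontrivial ingredient: that $(\AA_Y)^{*}$ acts as the identity on $\vsXY^{\parallel}$. Your justification via symmetry of $\matr A_Y$ is valid (it is indeed a Cayley graph on $\F_2^{s\log d_1}$, hence undirected); the slightly more robust reason, which would also cover the directed-Cayley case, is simply that $Y$ is regular so $\matr A_Y^{*}\vec{1}=\vec{1}$. Everything else — $\matr L_i\bv_i = \matr L_i\bv_i^{\parallel}+\bv_{i+1}$ from $\bv_{i+1}=\matr L_i\bv_i^{\perp}$, the adjoint factorization $\matr L_i^{*}=(\AA_Y)^{*}(\XX_i\pifw)^{*}$, and the downward induction on $r$ — is a correct formal unwinding of the recursive definitions of $\bv_i,\bw_i,\bz_i$.
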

%\begin{proof}
%	\todo{Add a proof?}
%\end{proof}

\begin{restatable}[Operator Generalization of Theorem 24~\cite{Ta-Shma17}]{theorem}{OpAmp2}
\label{theo:ta-shma_main}
  Let $X$ be any $d_1$-regular graph and $Y$ be a Cayley graph on $\mathbb{F}_2^{s\log d_1}$.
  Let $W_t$ be the collection of $t$-length $s$-wide walks,  on the s-wide replacement product on $X$ and $Y$.
  For any operator valued function $\matr f$ on $V_X$, such that $\max_{x \in V_X} \opnorm{\matr f (x)\,} \le 1$ and $ \opnorm{\Ex{x \in V_X}{ \matr f (x)\, } } := \lambda_0 \le~ \lambda(Y)^2 - 2 \lambda(X)$,  \begin{align*}
    \opnorm{ \Ex{(x_0,\cdots, x_{t}) \in W_t}{\matr \matr f(x_t)\cdots \matr f(x_0)} } ~\le~ \left( \lambda(Y)^s + s \cdot \lambda(Y)^{s-1} + s^2 \cdot \lambda(Y)^{s-3} \right)^{ \lfloor t/s \rfloor }\mper
  \end{align*}
\end{restatable}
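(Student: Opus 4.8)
The plan is to prove \cref{theo:ta-shma_main} by mimicking, in the operator-valued setting, the $s$-step norm-decay argument of Ta-Shma \cite{Ta-Shma17} (and of the $s$-wide zig-zag of \cite{BT08}), where the scalar bias operator is replaced by $\pif$ acting on $\vsX = \cH \otimes \C[V_X]$ (and $\pifw$ on $\vsXY$). The first reduction is exactly as in \cref{sec:simple_amp}: using \cref{lem:operator_walk} together with the projection/lifting maps $\projh, \lifth$, the quantity $\opnorm{\Ex{(s_0,\ldots,s_t)\in\cW_t}{\matr f(s_0)\cdots\matr f(s_t)}}$ is bounded by $\opnorm{(\matr L_{s-1}\cdots\matr L_0)^{\lfloor t/s\rfloor}}$ (up to the harmless boundary factor from $t$ not being a multiple of $s$, and the $\opnorm{\pifw}\le 1$ slack), where $\matr L_i = \XX_i\pifw\AA_Y$. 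So by submultiplicativity it suffices to show the single $s$-step bound $\opnorm{\matr L_{s-1}\cdots\matr L_0} \le \lambda(Y)^s + s\lambda(Y)^{s-1} + s^2\lambda(Y)^{s-3}$.

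\textbf{Second}, I would fix arbitrary unit vectors $\bv_0,\bw_s\in\vsXY$ and expand $\ip{\matr L_{s-1}\cdots\matr L_0\,\bv_0}{\bw_s}$ using the decomposition lemma stated just above the theorem. Writing $\bv_0 = \bv_0^\parallel + \bv_0^\perp$ and peeling one $\matr L_i$ at a time, each step either (a) lands in $\vsXY^\perp$, where $\AA_Y$ contributes a factor $\lambda(Y)$ (this produces the $\bv_i$'s with $\norm{\bv_i}\le\lambda(Y)^i$, as in \cref{eqn:normbounds}), or (b) it contributes a ``parallel leftover'' term $\matr L_{s-1}\cdots\matr L_i\,\bv_i^\parallel$. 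Symmetrically expanding $\bw_s$ backwards through the adjoints gives the $\bw_{s-i},\bz_{s-i}$ vectors with their norm bounds. Combining, $\ip{\matr L_{s-1}\cdots\matr L_0\,\bv_0}{\bw_s}$ becomes $\ip{\bv_s}{\bw_s}$ (bounded by $\lambda(Y)^s$) plus a sum of $O(s)$ cross terms, each of the form $\ip{\text{(a product of $\matr L_i$'s or their adjoints, sandwiched between parallel vectors)}}{\cdot}$, i.e.\ inner products of $\prod(\XX_i\AA_Y\pifw)$ applied to vectors of the form $z\otimes\vecn{1}$ against $z'\otimes\vec{1}$, over a walk of length $<s$.

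\textbf{Third}, this is where the key structural input enters: each such cross term has \emph{both} endpoints in $\vsXY^\parallel$, i.e.\ of the form $z\otimes\vecn 1$, so \cref{lemma:simulation_matrix} applies and converts $\ip{\prod_{i=s_1}^{s_2}(\XX_i\AA_Y\pifw)(z\otimes\vecn 1)}{z'\otimes\vec 1}$ exactly into $\ip{(\AA_X\pif)^{s_2-s_1+1}z}{z'}$ on the smaller space $\vsX$. Now on $\vsX$ I can invoke the two-step analysis of \cref{sec:simple_amp} (\cref{lemma:two_step_analysis} and \cref{claim:parallel_vector_bias}): pairs of $\AA_X\pif$ contract by $2\lambda(X)+\lambda_0$, and the hypothesis $\lambda_0 \le \lambda(Y)^2 - 2\lambda(X)$ is precisely what makes $2\lambda(X)+\lambda_0 \le \lambda(Y)^2$, so each such term of walk-length $\ell$ is bounded by roughly $\lambda(Y)^{\ell}$ times the norms of the parallel pieces $z,z'$. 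Tallying the norm bounds of $\bv_i^\parallel, \bz_i^\parallel$ (each $\le\lambda(Y)^{i-1}$ or so) against the lengths of the residual walks, the $O(s)$ cross terms contribute at most $s\cdot\lambda(Y)^{s-1} + s^2\cdot\lambda(Y)^{s-3}$ in total — the single-index sum giving the $s\lambda(Y)^{s-1}$ term and the double-index sum (products of two parallel leftovers, one from the $\bv$-side and one from the $\bw$-side) giving the $s^2\lambda(Y)^{s-3}$ term.

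\textbf{The main obstacle} I anticipate is purely bookkeeping rather than conceptual: correctly tracking, for each of the mixed cross terms, the exact power of $\lambda(Y)$ available from the $\AA_Y$ factors that got it into $\vsXY^\perp$ versus the power extracted on $\vsX$ via the simulation lemma, and making sure the worst case indeed only loses the claimed $s^2\lambda(Y)^{s-3}$ (the $-3$ rather than $-2$ is the price of a ``parallel-to-parallel'' step where two consecutive $\AA_Y$'s are wasted). This is exactly the combinatorial heart of the $s$-wide analysis in \cite{BT08,Ta-Shma17}, and since $\pifw$ acts diagonally on the $Y$-tensor factor and has norm $\le 1$, the operator-valued case follows the scalar one line for line — the only genuinely new ingredient being that ``absolute value of a scalar'' is everywhere replaced by ``operator norm of $\Ex{x}{\matr f(x)}$ on $\vsX$,'' which \cref{claim:parallel_vector_bias} already handles dimension-independently. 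Finally, the walk-count $|\cW_t| = |X|\cdot d_1^s\cdot d_2^t$ is immediate from \cref{def:s_wide_replacement}, and the asymptotic simplification to $O_s(\lambda(Y))^{(1-o_s(1))t}$ is a routine estimate deferred to \cref{sec:parameters}.
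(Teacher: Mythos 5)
Your proposal is correct and follows the paper's proof essentially line-for-line: the same reduction to bounding $\opnorm{\matr L_{s-1}\cdots\matr L_0}$ via \cref{lem:operator_walk} and submultiplicativity, the same $\bv_i,\bw_i,\bz_i$ decomposition, the same appeal to \cref{lemma:simulation_matrix} to transfer parallel-to-parallel cross terms down to $\vsX$, and the same invocation of the two-step bound under the hypothesis $2\lambda(X)+\lambda_0\le\lambda(Y)^2$. The only minor imprecision is the attribution of the $s-3$ exponent: the tight count is $s-2$ (one lost $\lambda(Y)$ factor per parallel leftover, plus one more if $i-r$ is odd so the two-step analysis leaves a dangling $\AA_X\pif$), and the paper's $s-3$ is simply a slightly looser but convenient upper bound.
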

\begin{proof}
Using \cref{lem:operator_walk}, we can repeat the proof
of \cref{eqn:product_pi} to see that,
\begin{align*}
  \opnorm{\Ex{(x_0, \cdots, x_t) \in W_t}{ \matr f (x_t)  \cdots \matr f (x_0)} \,} \le \opnorm{\matr L_{t} \cdots \matr L_{0}} \le \opnorm{\matr L_{s-1} \cdots \matr L_{0}}^{\lfloor t/s\rfloor}\mper
\end{align*}
\begin{align*}
 \ip{\matr L_{s-1} \cdots \matr L_{0} \bv_0}{\bw_s} ~&=~ \ip{\bv_s}{\bw_0} ~+~ \sum_{r=0}^{s-1} \ip{ \matr L_{s-1} \cdots \matr L_{r}  \bv_{r}^\parallel }{ \bw_s} \\
      ~&=~ \ip{\bv_s}{\bw_s} ~+~ \sum_{r=0}^{s-1} \ip{\bv_{r}^\parallel}{ \matr L^*_{r} \cdots \matr L^*_{s-1} \bw_s} \\
      ~&=~ \ip{\bv_s}{\bw_s} ~+~\sum_{i=0}^{s-1}  \ip{\bv_{r}^\parallel}{\bw_{r} + \bz_{r}^\parallel}  ~+~ \sum_{r=0}^{s-2} \sum_{ i = r+1}^{s-1}  \ip{\bv_{r}^\parallel}{ \matr L^*_{r} \cdots \matr L^*_{i-1} \bz_i^\parallel} \\
                              ~&=~ \ip{\bv_s}{\bw_s} ~+~ \sum_{i=0}^{s-1}  \ip{\bv_{r}^\parallel}{ \bz_{r}^\parallel}  ~+~ \sum_{r=0}^{s-2} \sum_{ i = r+1}^{s-1}  \ip{\bv_{r}^\parallel}{ \matr L^*_{r} \cdots \matr L^*_{i-1} \bz_i^\parallel} \mper
\end{align*}

\noindent The last step uses  $\ip{\bv_r^\parallel}{\bw_r} = \ip{\AA_Y\bv_r^\parallel}{\bz_r^\perp} = 0$.
Using \cref{eqn:normbounds}, we get $ \ip{\bv_{r}^\parallel}{
  \bz_{r}^\parallel} \leq \expan(Y)^{s-1}$. To bound the last term, we
finally use~\cref{lemma:simulation_matrix}.  Let $\bv_r^\parallel =
v'_r \otimes \vec{1}$, and $\bz_i^\parallel = z'_i \otimes \vecn{1}$.
Then,
\begin{align*}
  \ip{\bv_{r}^\parallel}{ \matr L^*_{r} \cdots \matr L^*_{i-1} \bz_i^\parallel} &~=~  \ip{\bv'_{r}}{ \left( \AA_X \pif \right)^{i-r} \bz'_i} && \text{(Using \cref{lemma:simulation_matrix})}\\
               &~\leq~  \opnorm{\left( \AA_X \pif \right)^{i-r} }\norm{ z'_i} \norm{v'_{r}}    \\                        
   &~\leq~ \expan(Y)^{ 2 \lfloor \frac{i-r}{2} \rfloor  }  \expan(Y)^{r+s-i-1} \leq  \expan(Y)^{s-3} ,
  \end{align*}
   where the penultimate inequality
   uses~\cref{cor:ta-shma_simp_matrix} and plugs in the assumption
   that $2 \lambda(X) + \opnorm{\Ex{x \in V_X}{ \matr f (x)\, } } \le
   \lambda(Y)^2$.  Substituting this back in our expression above
   gives us the result.
\end{proof}

\subsection{Instantiating the $s$-wide Replacement Product }\label{sec:parameters}

The goal of this section is to prove the following result that implies \cref{theo:main_2}.

\begin{theorem}[Almost Ramanujan Expanders I]\label{lemma:almost_ram_exp_1}
  Let $\Cay(G,S)$ be $\expan_0$-expander with constant $\expan_0 \in (0,1)$.
  For every function $\beta (\lambda) > 0$,
  and for any  $\expan > 0$, sufficiently small such that
  \begin{align*}
    \frac{32}{\beta (\lambda) } ~\le~ \left ( \frac{\log(1/\expan)}{4\log\log(1/\expan)}\right )^{1/3}\mcom
  \end{align*}
  there exists a deterministic polynomial time algorithm to construct $S'$ such that $\Cay(G,S')$ is a $\expan$-expander with
  degree $\abs{S'} = O_{\expan_0}(\abs{S}/\expan^{2+\beta})$.
  
  Furthermore, each element in $S'$ is the product of $O(\log(1/\lambda))$ elements of $S$.
\end{theorem}

 \subsubsection*{Overview} {We will explicitly construct the graphs $X$ and
$Y$ as needed for the $s$-wide product}.  Once we
obtain the graphs, we identify the vertices of $X$, \ie $V_X$ with the
initial generating set $S$, or perhaps a slightly modified set $S'$, obtained by duplicating and adding identities. The
final set is obtained by multiplying elements along each
$(t-1)$-length walk on the $s$-wide replacement product of $X$ and
$Y$.  The way we choose parameters and objects for it borrows heavily from
Ta-Shma's arguments in~\cite{Ta-Shma17}.  The analysis follows an
analogous structure of~\cite{JQST20} for binary codes, which in turn
builds on the original analysis of Ta-Shma~\cite{Ta-Shma17}. We will also use the following result from that work,

\begin{lemma}[Based on Lemma 6~\cite{Ta-Shma17}]\label{lemma:aghp}
  For every $m \in \mathbb{N}^+$ and $d = 2^{2k} \leq 2^m$, there exists a fully explicit set 
  $A \subseteq \mathbb{Z}_2^m$ such that the graph $\textup{Cay}(\mathbb{Z}_2^m,A)$ is a
  $(2^m,d,\lambda= \frac{m}{\sqrt{d}} )$-expander graph.
\end{lemma}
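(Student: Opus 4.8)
The plan is to reduce this statement to a standard fact about $\epsilon$-biased sets over $\Z_2^m$ and then quote (a mild repackaging of) the Alon--Goldreich--H{\aa}stad--Peralta construction~\cite{AGHP92}, exactly as in Lemma~6 of~\cite{Ta-Shma17}. The first step is the dictionary between Cayley graphs on $\Z_2^m$ and biased sets: since $\Z_2^m$ is abelian, its nontrivial irreducible representations are the characters $\chi_y(x)=(-1)^{\langle x,y\rangle}$ for $y\neq 0$, and each $\chi_y$ is an eigenvector of $\Ex{a\sim A}{\rho_{\reg}(a)}$ with eigenvalue $\Ex{a\sim A}{(-1)^{\langle a,y\rangle}}$. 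Hence $\Cay(\Z_2^m,A)$ is $|A|$-regular on $2^m$ vertices and $\lambda(\Cay(\Z_2^m,A))=\max_{y\neq 0}\bigl|\Ex{a\sim A}{(-1)^{\langle a,y\rangle}}\bigr|=\bias(A)$. So it suffices to construct, for $d=2^{2k}$, an explicit set $A\subseteq\Z_2^m$ of size $d$ with $\bias(A)\le m/\sqrt d$. (Note the claim is vacuous unless $\sqrt d>m$, so one may and should work in that regime.)

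For the construction itself I would identify $\bias(A)\le\epsilon$ with the statement that the binary linear code whose generator matrix has the elements of $A$ as its columns is $\epsilon$-balanced, i.e.\ every nonzero codeword has relative Hamming weight in $[\tfrac12-\tfrac\epsilon2,\tfrac12+\tfrac\epsilon2]$. Then I would take the concatenation of the Reed--Solomon code over $\F_q$ with $q=\sqrt d=2^k$ (this is precisely where the hypothesis $d=2^{2k}$ is used, so that $q$ is the size of a field) --- evaluating degree-$<h$ polynomials at all $q$ points --- with the Hadamard code on $\F_q\cong\F_2^k$. The concatenated code has length $q\cdot q=d$; a nonzero Reed--Solomon codeword has at least $q-h+1$ nonzero symbols, each encoded by a Hadamard word of weight exactly $q/2$, so every nonzero codeword has relative weight in $\bigl[\tfrac12(1-\tfrac{h-1}{q}),\tfrac12\bigr]$, i.e.\ bias at most $(h-1)/q$. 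Choosing $h-1=\lfloor\epsilon q\rfloor$ with $\epsilon=m/\sqrt d$ gives $\epsilon q=m$, hence $\bias\le m/\sqrt d$, while the $\F_2$-dimension of the code is $hk\ge m$, as required; in the nontrivial regime $q=\sqrt d>m\ge h$, so the Reed--Solomon code is well defined. Finally, explicitness is immediate: arithmetic in $\F_{2^k}$, Reed--Solomon evaluation, and Hadamard encoding are all computable in $\poly(k,h)=\poly(\log d,m)$ time, so the columns of the generator matrix --- the set $A$ --- can be listed fully explicitly.

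I do not expect any genuine obstacle here: the heart of the argument is the (well-known) analysis of the concatenated code, and the only thing that needs care is the parameter bookkeeping --- matching $d=2^{2k}$ to a field size, checking $h\le q$ so that the Reed--Solomon code makes sense, and verifying that the resulting dimension is still at least $m$. An equally acceptable route is to simply cite Lemma~6 of~\cite{Ta-Shma17} verbatim, since the parameters there ($2^m$ vertices, degree $d$, bias $m/\sqrt d$) are exactly ours; the proposal above merely unpacks what that citation contains.
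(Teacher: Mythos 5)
Your proposal is correct and coincides with what the paper intends: the paper does not prove this lemma itself, but imports it from Ta-Shma's Lemma~6, which in turn packages the Alon--Goldreich--H{\aa}stad--Peralta ``powering'' construction, i.e.\ Reed--Solomon over $\F_{2^k}$ concatenated with the Hadamard code. You unpack that citation faithfully: the bias/Cayley-spectrum dictionary for $\Z_2^m$, the $(h-1)/q$ bias bound from counting roots of a degree-$<h$ polynomial, the choice $q=\sqrt d=2^k$ (which is exactly where $d=2^{2k}$ is needed), and the observation that the statement is vacuous when $m\ge\sqrt d$ so that $h\le q$ holds in the nontrivial regime. Two small cosmetic remarks: (1) with $h=m+1$ the code's $\F_2$-dimension $hk$ generally exceeds $m$, so one should note explicitly that the generator matrix is truncated to $m$ (necessarily linearly independent) rows, which only shrinks the set of nonzero codewords and hence preserves the bias bound; and (2) taking the minimal $h=\lceil m/k\rceil$ would actually give the sharper bias $\le m/(k\sqrt d)$, of which $m/\sqrt d$ is a weakening — but this is exactly the slack Ta-Shma's Lemma~6 carries as well, so you are matching the cited statement, not losing anything.
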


%  Let $s$ be the smallest power of $2$ greater than $\frac{32}{\beta}$ and let $d_2 = s^{4s}$.
%We will only summarize the construction here and show that the choice of the parameters does in fact yield our main result. Detailed computation and verification is present in \cref{sec:param_app}.
\paragraph{The construction}Given as input $n := |S|, \expan$ and a slowly growing function $\beta(\lambda)$, we construct the graphs $X,Y$ as described below with the parameters $(s,d_1, d_2, \lambda_1, \lambda_2)$ which are all functions of $\lambda$ and $\beta(\lambda)$. These are summarized in a table below for reference\footnote{The choice of parameters is similar but not identical to Ta-Shma's choice.}. Recall that a $(n,d,\lambda)$-graph has $n$ vertices, is $d$-regular,
and has the second largest singular value of its normalized adjacency matrix at most $\lambda$. 
  \begin{itemize}
    \item \textbf{Outer Graph} --- The outer graph $X$ will be an $(n',d_1,\lambda_1)$-graph which is a Cayley graph on $\slin_2(p)$ constructed using \cref{cor:small_expander} with $(n, \lambda_1)$ as input. By \cref{exam:cayleylocal}, we obtain a locally invertible graph on $n' \approx n$. The condition on the size is satisfied as $n = 2|S|d_2^{5} \geq d_2^{5} \geq 2^{2^{17}}$ by the assumption that $s\geq 2^{10}$.  Moreover, the degree is $\frac{c}{\lambda_1^{2*4.1}} \le \frac{c d_2^{4.1}}{b_2^{8.2} } \le d_2^5$.  We increase its degree to $d_2^5$ by taking multiple copies of the generating set which does not change bias\footnote{This is wasteful, but we do it to ensure that $V(Y) =  d_1^s$ and that $d_1^s$ is a power of $2$. }. Thus, we obtain a $(n', d_1, \lambda_1)$-graph where $n' = n + O(n^{8/9})$.

    \item \textbf{Inner Graph} --- The inner graph $Y$ will be a $(d_1^s,d_2,\lambda_2)$-graph which is a Cayley graph on $\Z_2^m$ and therefore by \cref{lem:pseudorandom_cayley},
          it is compatible. For this, we use the construction of Alon et al.\@~\cite{AGHP92} and the analysis of Ta-Shma (\cref{lemma:aghp}).
  \end{itemize}
  
%\paragraph*{The outer graph $X$.} We use our construction of expander from \cref{cor:small_expander} to construct a graph on $n' \approx n$ vertices with expansion $\lambda_1 = \frac{\lambda_2^2}{10}$.  
%\paragraph*{The inner graph $Y$.} We obtain a Cayley graph $Y = \textup{Cay}(\mathbb{Z}_2^{\log(n_2)},A)$
%such that $Y$ is an $(n_2=d_2^{5s}, d_2, \lambda_2)$
%graph\footnote{Notice that since $s$ (and therefore $d_2, n_2$) is
%  chosen to be a power of $2$, the conditions of \cref{lemma:aghp} are
%  satisfied.}. The set $A$ of generators comes from a small bias code
%derived from a construction of Alon et al.~\cite{AGHP92}, but we will
%rely on Ta-Shma's analysis.

%  
We summarize the construction and the choice of parameters $n', d_1,d_2, \lambda_1,\lambda_2$ and $s$, which are chosen as follows for a fixed $\beta(\lambda)$ here - 
\begin{center}
\fbox{\begin{minipage}[c]{32em}	
$ s \text{ is the smallest power of 2 such that } \frac{32}{\beta } \le s \le \left ( \frac{\log(1/\expan)}{4\log\log(1/\expan)}\right )^{1/3} $
\vskip 0.3cm
Every other parameter is a function of $s$.
\vskip 0.3cm
$Y: (n_2,d_2,\lambda_2),\quad n_2=d_2^{5s},\quad d_2=s^{4s},\quad \lambda_2\le\frac{b_2}{\sqrt{d_2}},\quad b_2 = 5s \log d_2$
\vskip 0.3cm
$X: (n',d_1,\lambda_1),\quad n' \approx n = O(\abs{S}d_2^{5}),\quad d_1=d_2^{5},\quad \lambda_1 = \frac{\lambda_2^2}{10}$
\vskip 0.3cm
$t: \text{ smallest integer such that } (\lambda_2)^{(1-5\alpha)(1-\alpha)(t-1)} \leq \expan,\,;\; \text{ where }\alpha = 1/s$ 
\end{minipage}}
\end{center}
\textbf{Note: } We assume that $s \ge 2^{10}$ since otherwise
$\expan$ is a constant and we can use \cref{theo:exp_walk}.

\noindent Now, we mention the central claim that we need from our choice of parameters.

%\tBound*
 \begin{restatable}{claim}{tBound}\label{claim:t_bound}
The selection of the parameters above implies the following bounds on $t$, 
\begin{enumerate}
\item[\textup{(i)}.]  $t-1 \ge 2s^2$ 
\item[\textup{(ii)}.]   $(d_2)^{(t-1)}  \le  \expan^{-2(1+10\alpha)}$,
\end{enumerate}
\end{restatable}
\begin{proof}
 Proof of (i) : Using $d_2 = s^{4s}$ and the upper bound on $s$, we have
 \begin{align*}
  \left(\frac{1}{\lambda_2}\right)^{(1-5\alpha)(1-\alpha)2s^2} & ~\le~ \left(\frac{1}{\lambda_2}\right)^{2s^2} ~=~ \left(\frac{d_2}{b_2^2}\right)^{s^2} ~\le~ \left(d_2\right)^{s^2} = s^{4s^3}\\
                                 & ~=~ 2^{4s^3 \log_2(s)}  ~\le~  2^{\log_2(1/\expan)} ~=~ \frac{1}{\expan}\mper
  \end{align*}
  Hence, $(\lambda_2)^{(1-5\alpha)(1-\alpha)s/\alpha} \ge \expan$ and thus
  $t-1$ must be at least $2s^2$.  Also observe that,
  \begin{align}
    \lambda_2^{(1-5\alpha)(1-\alpha)^2 (t-1)} &~=~ \lambda_2^{(1-5\alpha)(1-\alpha) (t-2) \left(\frac{(1-\alpha)}{1-1/(t-1)} \right )  } \\
                                       &~\ge~  \lambda_2^{(1-5\alpha)(1-\alpha) (t-2) }  && \text{($t-1 \ge s =  1/\alpha$)}\\
               \label{eq:t_upperbound} &~\ge~ \expan && \text{(From  the choice of minimal $t$)} 
  \end{align} 
 
  Since $b_2 = 5 s \log_2(d_2) = 20 s^2 \log_2(s) \le s^4$ (recall that $s=1/\alpha \ge 2^{10}$), 
  \begin{align*}
    d_2^{1-2\alpha} ~=~ \frac{d_2}{d_2^{2\alpha}} ~=~ \frac{d_2}{s^8} ~\le~ \frac{d_2}{b_2^2} ~=~ \frac{1}{\lambda_2}\mper
    \end{align*}
  We obtain claim (ii) by the following computation,
  \begin{align*}
    (d_2)^{(t-1)} &~\le~ {\lambda_2}^{\frac{-(t-1)}{1-2\alpha}} \\
                &~\le~ \expan^{\frac{-2}{(1-2\alpha)(1-5\alpha)(1-\alpha)^2}} && \text{(Using~\cref{eq:t_upperbound})}\\
                &~\le~ \expan^{-2(1+10\alpha)}\mper  \hfill \qedhere 
  \end{align*}
\end{proof}

%  The parameters $n', d_1,d_2, \lambda_1,\lambda_2$ and $s$ are chosen as follows for a fixed $\beta(\lambda)$. \footnote{\textbf{Note: }While we let $\beta$ be a
%    function of $\lambda$, it might be instructive to make the
%    simplifying assumption that it is an arbitrarily small constant. We will denote it simply as $\beta$ from now on.}
%  
%  
%  \begin{center}
%\fbox{\begin{minipage}[c]{32em}	
%$ s \text{ is the smallest power of 2 such that } \frac{32}{\beta } \le 2^{10} \le s \le \left ( \frac{\log(1/\expan)}{4\log\log(1/\expan)}\right )^{1/3} $
%\vskip 0.3cm
%Every other parameter is a function of $s$.
%\vskip 0.3cm
%$Y: (n_2,d_2,\lambda_2),\quad n_2=d_2^{5s},\quad d_2=s^{4s},\quad \lambda_2\le\frac{b_2}{\sqrt{d_2}},\quad b_2 = 5s \log d_2$
%\vskip 0.3cm
%$X: (n',d_1,\lambda_1),\quad n' \approx n = \Theta(\abs{S}d_2^{5}),\quad d_1=d_2^{5},\quad \lambda_1 = \frac{\lambda_2^2}{10}$
%\vskip 0.3cm
%$t: \text{ smallest integer such that } (\lambda_2)^{(1-5\alpha)(1-\alpha)(t-1)} \leq \expan,\,;\; \text{ where }\alpha = 1/s$ 
%\end{minipage}}
%\end{center}

\begin{lemma}
  The number of walks of length $t-1$ on the $s$-wide replacement product of $X$ and $Y$ is $O(\abs{S}/\expan^{2+ \beta})$.
\end{lemma}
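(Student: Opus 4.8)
The plan is to read the walk count off directly from the combinatorial structure of the $s$-wide replacement product, substitute the chosen parameters, and then absorb all the ``overhead'' factors into the slack $\beta$ using the upper bound imposed on $s$.

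\emph{Step 1: counting the walks.} As recorded in the ``furthermore'' clause of \cref{theo:ta-shma_main} and explained in \cref{sec:s_wide_replacement_prod}, a walk on the $s$-wide replacement product of $X$ and $Y$ is completely determined by its starting vertex in $V_X\times V_Y$ together with one of $d_2$ intra-cloud choices at each step (the inter-cloud moves being deterministic). Hence the number of walks of length $t-1$ is exactly
\[
  \abs{V_X}\cdot\abs{V_Y}\cdot d_2^{\,t-1} \;=\; n'\cdot d_1^{\,s}\cdot d_2^{\,t-1},
\]
using $\abs{V_Y}=d_1^{\,s}$ since $Y$ has vertex set $[d_1]^s$.

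\emph{Step 2: substituting the parameters.} Using $n'=O(\abs{S}\,d_2^{5})$ and $d_1=d_2^{5}$ from the parameter box, this equals
\[
  O(\abs{S})\cdot d_2^{\,5}\cdot d_2^{\,5s}\cdot d_2^{\,t-1} \;=\; O(\abs{S})\cdot d_2^{\,t-1}\cdot d_2^{\,5s+5}.
\]
By \cref{claim:t_bound}(ii) the factor $d_2^{\,t-1}$ is at most $\expan^{-2(1+10\alpha)}$ with $\alpha=1/s$, so it remains to check that the residual factor $d_2^{\,5s+5}$ contributes only $\expan^{-O(1/s)}$. Since $d_2=s^{4s}$ we have $\log\!\big(d_2^{\,5s+5}\big)=4s(5s+5)\log s=O(s^{2}\log s)$. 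The choice of $s$ forces $s^{3}\le \log(1/\expan)/\big(4\log\log(1/\expan)\big)$, hence $\log(1/\expan)\ge 4s^{3}\log\log(1/\expan)$, and also $s\le\log(1/\expan)$ so that $\log s\le\log\log(1/\expan)$; combining these two facts gives $\log\!\big(d_2^{\,5s+5}\big)=O\!\big(s^{2}\log\log(1/\expan)\big)=O(1/s)\cdot\log(1/\expan)$, i.e.\ $d_2^{\,5s+5}\le\expan^{-O(1/s)}$. Putting the pieces together, the number of length-$(t-1)$ walks is $O(\abs{S})\cdot\expan^{-2-O(1/s)}$, and the defining inequality $32/\beta\le s$ makes the $O(1/s)$ in the exponent at most $\beta$ (with room to spare), so the count is $O(\abs{S}/\expan^{2+\beta})$, as claimed.

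The statement is essentially bookkeeping once \cref{claim:t_bound} is available; the only place that needs care is controlling the overhead in Step 2, where one must ensure that the $d_1^{\,s}=d_2^{\,5s}$ factor coming from the size of the $Y$-clouds, together with the $10\alpha=10/s$ slack already present in \cref{claim:t_bound}(ii), jointly stay below $\beta$ in the exponent of $1/\expan$. This is precisely why $s$ is pinned between $32/\beta$ on one side (so that all the $O(1/s)$ terms are dominated by $\beta$) and the cube-root bound in $\log(1/\expan)$ on the other (so that $d_2^{\,O(s)}$ remains a negligible power of $1/\expan$).
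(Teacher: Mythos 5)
Your proposal is correct. The walk count in Step~1 and the use of \cref{claim:t_bound}(ii) in Step~2 match the paper's computation exactly, so the only genuine difference lies in how the overhead factor $d_1^s\cdot (n'/\abs{S}) = d_2^{5s+5}$ is disposed of. The paper invokes \cref{claim:t_bound}(i), namely $t-1\ge 2s^2$, to absorb $d_2^{5s+5}\le d_2^{5\alpha(t-1)}$ into the main term, obtaining $d_2^{(1+5\alpha)(t-1)}$ and then applying (ii) once; you instead apply (ii) to $d_2^{t-1}$ alone and bound $d_2^{5s+5}$ directly as $\expan^{-O(1/s)}$ by unwinding the cube-root upper bound on $s$. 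These are the same estimate surfacing at different stages: the proof of Claim (i) in the appendix is precisely the inequality $s^{4s^3}\le 1/\expan$ that you re-derive inline, so your route makes the role of the constraint $s\lesssim\big(\log(1/\expan)/\log\log(1/\expan)\big)^{1/3}$ more explicit at the cost of redoing that arithmetic rather than quoting (i). The final numerics ($20\alpha + O(1/s) \le 32\alpha \le \beta$) check out, so the conclusion and constants land in the same place.
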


\begin{proof}
  Since each step of the walk has $d_2$ options, the number of walks is 
  \begin{align*}
    \abs{V(X)} \abs{V(Y)} \cdot d_2^{(t-1)} ~=~ n' \cdot d_1^s \cdot d_2^{(t-1)} &= n' \cdot  d_2^{(t-1)+5s} \\
                                          &~=~ \Theta\left(\abs{S} \cdot d_2^{(t-1)+5s+5}\right)\\
                                          &~=~ O\left(\abs{S} \cdot d_2^{(1+5\alpha)(t-1)}\right)\mper
  \end{align*}

  which from \cref{claim:t_bound} (ii),  implies a size of
  \begin{align*}
     O\left(\abs{S} \cdot d_2^{(1+5\alpha)(t-1)}\right) ~=~ O\left(\frac{\abs{S}}{\expan^{2(1+10\alpha)(1+5\alpha)}}\right) ~=~ O\left(\frac{\abs{S}}{\expan^{2+32\alpha}}\right) ~=~ O\left(\frac{\abs{S}}{\expan^{2+\beta}}\right) \mper \qedhere
  \end{align*}
\end{proof}

Before we prove the main result, we need the following simple observation that will be used to construct a modified $(\varepsilon + o(1))$-biased set starting from an $\epsilon$-biased set, $S$. This is needed because the graph obtained from~\cref{cor:small_expander} does not have size exactly $|S|$ but is only guaranteed to be at most $(1+o(1))|S|$.

\begin{lemma}\label{lem:remove_gen}
  Let $S$ be an $\epsilon$-biased set of a group $G$.
  And let $S'$ be obtained by adding $\theta\abs{S}$ many identity elements.
  Then, $S'$ is an $(\epsilon +\theta)$-biased set.
\end{lemma}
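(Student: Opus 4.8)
The plan is to estimate the bias of $S'$ directly by splitting the average over $S'$ into its two natural pieces. Write $n = \abs{S}$ and $k = \theta\abs{S}$ for the number of added copies of the identity, so that $\abs{S'} = n + k = (1+\theta)n$. The key observation is that for \emph{any} representation $\rho$ one has $\rho(e) = \matr I$ (since $\rho$ is a homomorphism and $\rho(e)$ is invertible, $\rho(e)=\rho(e)^2$ forces $\rho(e)=\matr I$). Hence, for a non-trivial irreducible representation $\rho$,
\begin{align*}
  \Ex{s \sim S'}{\rho(s)} ~=~ \frac{1}{(1+\theta)n}\left(\sum_{s \in S}\rho(s) + k\cdot \matr I\right) ~=~ \frac{1}{1+\theta}\,\Ex{s \sim S}{\rho(s)} ~+~ \frac{\theta}{1+\theta}\,\matr I\mper
\end{align*}

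Next I would apply the triangle inequality for the operator norm, using $\opnorm{\matr I} = 1$ together with the hypothesis $\opnorm{\Ex{s \sim S}{\rho(s)}} \le \epsilon$ (valid since $S$ is $\epsilon$-biased and $\rho$ is a non-trivial irreducible representation, \cref{def:bias_group}). This gives
\begin{align*}
  \opnorm{\Ex{s \sim S'}{\rho(s)}} ~\le~ \frac{\epsilon}{1+\theta} ~+~ \frac{\theta}{1+\theta} ~=~ \frac{\epsilon+\theta}{1+\theta} ~\le~ \epsilon + \theta\mcom
\end{align*}
where the last inequality is just $1 + \theta \ge 1$. Since this bound holds for every non-trivial irreducible representation of $G$, \cref{def:bias_group} yields $\bias(S') \le \epsilon + \theta$, which is the claim. (Equivalently, in graph terms, $\matr A_{\Cay(G,S')}$ is the convex combination $\tfrac{1}{1+\theta}\matr A_{\Cay(G,S)} + \tfrac{\theta}{1+\theta}\matr I$, so the added self-loops shift each relevant eigenvalue by at most $\theta/(1+\theta) \le \theta$.)

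There is no genuine obstacle here; the argument is a one-line norm estimate. The only minor points worth flagging are that $\theta\abs{S}$ should be an integer for ``adding $\theta\abs{S}$ identities'' to be literal (otherwise one rounds, incurring a negligible additional $O(1/\abs{S})$), and that one need only argue for non-trivial irreducibles, since the trivial representation contributes the eigenvalue $1$ to both $\Cay(G,S)$ and $\Cay(G,S')$ and plays no role in the bias.
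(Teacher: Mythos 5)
Your proof is correct and matches the paper's argument essentially verbatim: decompose $\Ex{s\sim S'}{\rho(s)}$ into the weighted average of $\Ex{s\sim S}{\rho(s)}$ and $\rho(e)=\matr I$, apply the triangle inequality with $\opnorm{\rho(e)}=1$, and drop the factor $1/(1+\theta)\le 1$. Your intermediate bound $(\epsilon+\theta)/(1+\theta)$ is in fact slightly sharper than what is stated, but the paper likewise discards this factor to keep the statement simple.
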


\begin{proof}
  Denote by $e$ the identity element of $G$.
  Let $\rho$ be any non-trivial irreducible representation of a group $G$.
From the computation we have,
  \begin{align*}
    \norm{\E_{s\in S'} \rho(s) }_\textup{op} &~=~  \frac{1}{1+\theta} \opnorm{ \E_{s\in S} \rho(s)  + \theta\cdot \E_{s\in S\setminus S'} \rho(1)  } \\
    &~\leq ~  \opnorm{ \E_{s\in S} \rho(s)} + \theta   && \text{($\opnorm{\rho(1)}  = 1$)} \\
&~\leq~  \epsilon+ \theta  && \text{($S \text{ is }\;  \epsilon\text{- biased}$)} \mper \qedhere
  \end{align*}
\end{proof}

\begin{theorem}[Almost Ramanujan Expanders I]\label{lemma:almost_ram_exp_1}
  Let $\Cay(G,S)$ be $\expan_0$-expander with constant $\expan_0 \in (0,1)$.
  For every function $\beta (\lambda) > 0$,
  and for any  $\expan > 0$, sufficiently small such that
  \begin{align*}
    \frac{32}{\beta (\lambda) } ~\le~ \left ( \frac{\log(1/\expan)}{4\log\log(1/\expan)}\right )^{1/3}\mcom
  \end{align*}
  there exists a deterministic polynomial time algorithm to construct $S'$ such that $\Cay(G,S')$ is a $\expan$-expander with
  degree $\abs{S'} = O_{\expan_0}(\abs{S}/\expan^{2+\beta})$.
  
  Furthermore, each element in $S'$ is the product of $O(\log(1/\lambda))$ elements of $S$.
\end{theorem}

\begin{proof}
We can assume that $s \ge 2^{10}$ since otherwise $\expan$ is a constant and we can just use \cref{theo:exp_walk}.

\noindent \textbf{Initial Boost} \enspace We first boost the expansion from $\expan_0$ to $1/d_2 \le \lambda_2^2/3$.
Using~\cref{theo:exp_walk} (with its parameter $\beta$ equal to $1$), we can find a new set of generators,
$S_1$, such that $\Cay(G,S_1)$ is $1/d_2$-spectral expander and $\abs{S_1} = O(\abs{S} d_2^{5})$.
Moreover, we also know that each element in $S_1$ is a multiple of at
most $\log\left(d_2^{5} \right)$ elements in $S$. We add multiple
copies of the entire set to make the size $\abs{S} d_2^{5}$.

\paragraph*{The $s$-wide walk} Obtain an $(n',d_1,\lambda_1)$ Cayley graph $X$ from~\cref{cor:small_expander}
 as explained before. We add $n'-n = O(n^{8/9})$ copies of the identity to $S_1$ to obtain
 $S_2$.  By \cref{lem:remove_gen} and the assumption that $s \geq
 2^{10}$, $S_2$ is a $\lambda_2^2/3 + O(n^{-1/9}) \leq
 2\lambda_2^2/3$-biased set. We denote by $S'$ the final set of
 generators obtained by $t$ steps of the $s$-wide replacement product
 of $X$ and $Y$. By definition, each element in $S'$ is a product of
 $t$ elements in $S_2$ which has the same elements as $S_1$. Thus,
 each element in $S'$ is a product of at most
\begin{align*}
   O(t\log(d_2)) &~\leq~ O((1+10\alpha) \log(1/\lambda))             && \text{(Using \cref{claim:t_bound} [ii])}\\
                 &~\leq~ O(\log(1/\lambda))                          && \text{(By the assumption that $\alpha \leq 1/128$)}
\end{align*}
 elements of $S$.   The only thing that remains is to prove expansion of $\Cay(G,S')$. We pick any irreducible representation $\rho$ and apply~\cref{theo:ta-shma_main} to the function $\rho$ on $S_2 \leftrightarrow V(X)$.
 The condition that $2\lambda(X) + \opnorm{\Ex{g \sim S_2}{\rho(g)} } \leq \lambda(Y)^2 $ translates to 
 $\lambda_1 \leq \lambda_2^2/6 $ which is satisfied by our choice of $\lambda_1$.
 Thus, the final expansion is given by, 
  \begin{align*}
    \opnorm{\Ex{g \in S'}{\rho(g)}} &\coloneqq \left(\lambda_2^{s} + s \cdot \lambda_2^{s-1} + s^2 \cdot \lambda_2^{s-3}\right)^{\lfloor (t-1)/s \rfloor}\\
      &~\le~ \left (3s^2\lambda_2^{s-3}\right )^{((t-1)/s)-1} && \left (\text{Using  $\lambda_2 \le\frac{20s^2 \log s}{s^{2s^2}} \leq \frac{1}{3s^2}$}\right)\\
      &~\le~ \left (\lambda_2^{s-4} \right )^{(t-1-s)/s} \\
      &~\le~ \lambda_2^{(1-5/s)(1-s/(t-1))(t-1)}\\
      &~\le~ \lambda_2^{(1-5\alpha)(1-\alpha)(t-1)}                && \left (\text{Using ~\cref{claim:t_bound} [i]}\right )  \\
      &~=~ \lambda_2^{(1-5\alpha)(1-\alpha)(t-1)} \le \expan,      && \text{(From the choice of $t$)} \mper \qedhere
  \end{align*}
\end{proof}

\section{Some Applications}\label{sec:appl}

Our operator amplification leads to almost optimal explicit
constructions of many pseudorandom objects (from existing suboptimal
ones): transforming arbitrary {regular} expander graphs into almost-Ramanujan
expanders (\cref{subsec:arb_exp_amp}), quantum expanders
(\cref{subsec:qua_exp}), monotone expanders (\cref{subsec:mon_exp}),
to generating sets with improved (average) Kazhdan constants
(\cref{subsec:kazhdan}) and to dimension expanders
(\cref{subsec:dim_exp}). These pseudorandom objects embody various
notions of expansion.

\paragraph*{Permutation Amplification}

The key to these applications is observing that the adjacency matrix
of an arbitrary graph and that of a monotone expander can be written
as a sum of permutation matrices which can be interpreted as $\matr
P_\sigma = \rho_{\text{def}}(\sigma)$ for the \emph{defining} (or
\emph{natural}) representation $\rho_{\text{def}}$.  We plug in the
collection of these permutations $\set{\sigma}$ in our amplification
machinery to obtain almost optimal spectral expanders and monotone
expanders.

\paragraph*{Almost Ramanujan Expanders for the Symmetric Group}
Constructing constant size expanding generating sets for the symmetric
group was quite challenging (even non-explicitly). In a breakthrough
work~\cite{Kas07}, Kassabov provided the first family of such
expanding generators which was also explicit. However, this family was
not close to the Ramanujan bound and no such generating set was
known. \cref{theo:main_2} lets us amplify Kassabov's generating set to
one close to optimum bound showing that the symmetric group has
explicit almost Ramanujan Cayley expanders. The same obviously holds
for every expanding group.

\paragraph*{Quantum Expanders} A quantum expander is a generalization of
an expander graph having many applications in quantum information
theory \cite{AS04, BAST08, Has07, Hastings07b, HH09,
  AHLNSV14}. Harrow~\cite{Harrow07} showed that Cayley graphs can be
used to construct quantum expanders inheriting the expansion of the
starting Cayley graph. However, the construction is only explicit if
the group admits an efficient quantum Fourier transform (QFT). Since
we can now obtain almost Ramanujan Cayley graphs for the symmetric
group which has a known efficient QFT~\cite{Beals97}, we obtain the
first explicit almost Ramanujan quantum expanders.

\paragraph*{Improving the Kazhdan Constant}
The \emph{Kazhdan constant} $\mathcal{K}(G,S)$ of a finitely generated
group $G$, with respect to a generating set $S$, is a quantitative
version of Property $(\mathrm{T})$ which has been used to construct
explicit expanders (\eg Margulis~\cite{Margulis88}).
We show that this can be amplified by considering a slightly different
version called the \emph{average Kazhdan constant} which directly
relates to the bias of the set $S$.  This is interesting as typically
the bound on the Kazhdan constant is used to construct expanders but
here we construct expanding generating sets to improve the
constant!  The improved constants and the generating sets have
algorithmic implications and we mention two of them.
\begin{enumerate}[label=$\cdot$]
  \item \emph{Dimension expanders} - Lubotzky and Zelmanov~\cite{LZ08} showed that the image of a generating set of a group under an
         irreducible representation gives a dimension expander and its expansion is controlled by its Kazhdan constant.
  \item \emph{Product replacement algorithm} - uses random walks on $k$-tuples of groups elements.
        Lubotzky and Pak~\cite{LP00} showed that the mixing time of the algorithm relates to the Kazhdan constant of certain lattice groups like $\mathrm{SL}_n(\Z)$, {assuming Property $(\textup{T})$}.
        This crucial assumption was proven in complete generality\footnote{\cite{KKN21} prove that $\mathrm{Aut}(\mathrm{F}_n)$, the automorphism
          group of the free group generated by $n$ elements, has Property $(\textup{T})$. This implies Property $(\textup{T})$ for quotients of $\mathrm{Aut}(\mathrm{F}_n)$, which includes $\mathrm{SL}_n(\Z)$.}
        recently by Kaluba, Kielak and Nowak~\cite{KKN21}. In particular, we have a mixing time bound of $\frac{4\log{\abs{G}}}{\mathcal{K}(G,S)^2}$.  
\end{enumerate}
Using our amplified generating set (\cref{theo:kazhdan}), we can improve both these results.

%\paragraph*{Sampling Group elements} 
%Another application of having almost optimal Ramanujan Cayley graphs
%is to sample random group elements efficiently. Given a Cayley graph,
%$\Cay(G,S)$, one can consider a random walk on $G$ which starts at an
%arbitrary vertex $g$ and at each step moves to a random neighbor $g
%\to sg$.  Spectral expansion guarantees that walks mix quickly, \ie in
%at most $ O_{\lambda}(\log |G|)$ steps (See \cite{HooryLW06}). The
%amount of randomness used in each step is $\log d$ and since the
%degree versus expansion trade-off is now almost optimal, we can
%achieve the same convergence guarantee using a smaller degree and thus
%the random walk is more efficient in terms of randomness.\todo{rev3--More details? Or delete this.}
%\snote{}

\subsection{Permutation Amplification}

The \emph{defining representation} - ($\rho_{\text{def}}(\sigma),
\C^n$) for $\mathrm{Sym}_n$ is defined as the representation that maps
a permutation to the matrix defining it. More formally,
$\rho_{\text{def}}(\sigma) e_i = e_{\sigma(i)}$ for every unit basis
vector $e_i$ of $\C^n$. It is a fact that $\mathcal{V}_{\text{def}} =
\mathcal{V}_{\text{triv}} \oplus \mathcal{V}_{\text{standard}}$ where
$\mathcal{V}_{\text{standard}}$ is an irreducible non-trivial
representation. Note that if we are given a set $\{\matr P_1, \cdots,
\matr P_r \}$ of permutation matrices acting on $\C^n$, we can
identify a set $S =\{ \sigma_1, \cdots, \sigma_r\} \subseteq
\mathrm{Sym}_n$ such that $\rho_{\text{def}}\,(\sigma_i) = \matr P_i$.

\begin{corollary}[Permutation Amplification]\label{cor:perm_amp}
  Let $\matr P = \set{\matr P_1, \cdots, \matr P_r }$ be a collection of permutation matrices such that $\lambda(\Ex{i \sim [r]}{ \matr P_i }) \leq \expan_0$.
  Then,  for any $\expan \in (0,1)$,  we can explicitly construct a collection $\matr P'$ such that  
  \begin{enumerate}[topsep=2pt, itemsep=2pt]
    \item $\lambda\left(\Ex{ \matr M \sim \matr P'}{ \matr M }\right) \leq \expan$,
    \item $\abs{\matr P'} \le O\left (\abs{\matr P}/ \expan^{2 + o(1)}\right )$  and
    \item each $\matr P'_i \in \matr P'$ is a product of at most $O_{\expan_0}\left ( \log(1/\expan) \right)$ many matrices from $\matr P$.
  \end{enumerate}
\end{corollary}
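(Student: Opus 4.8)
The plan is to reduce this statement directly to \cref{theo:main_2} (or equivalently to \cref{lemma:almost_ram_exp_1}) applied to the symmetric group $\mathrm{Sym}_n$ via the defining representation. First, given the collection $\matr P = \{\matr P_1,\dots,\matr P_r\}$ of permutation matrices on $\C^n$, I would invoke the identification described just above the corollary: there is a multiset $S = \{\sigma_1,\dots,\sigma_r\} \subseteq \mathrm{Sym}_n$ with $\rho_{\text{def}}(\sigma_i) = \matr P_i$, where $\rho_{\text{def}}$ is the defining representation. The next step is to translate the spectral hypothesis $\lambda(\Ex{i\sim[r]}{\matr P_i}) \le \expan_0$ into a statement about the bias of $S$. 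Using the decomposition $\mathcal{V}_{\text{def}} = \mathcal{V}_{\text{triv}} \oplus \mathcal{V}_{\text{standard}}$, the eigenvalue $1$ of $\Ex{i}{\matr P_i}$ on the trivial component is always present, and the condition $\lambda(\Ex{i}{\matr P_i}) \le \expan_0$ says precisely that $\opnorm{\Ex{i}{\rho_{\text{standard}}(\sigma_i)}} \le \expan_0$, i.e. the standard representation is $\expan_0$-fooled by $S$.

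The subtlety here is that \cref{theo:main_2} as stated demands that \emph{all} non-trivial irreducible representations of $G$ be $\expan_0$-fooled, whereas here we only control the standard representation. But this is exactly the situation flagged in the paper's footnote ("Actually, we only consider the \emph{standard} representation in this amplification") and in the bare-bones formulation of \cref{subsec:discussion}: the operator amplification of \cref{theo:ta-shma_main} is applied to a \emph{single} operator-valued function $\matr f$, here $\matr f = \rho_{\text{standard}}$ restricted to $S$, and the construction of the amplified set $S' \subseteq S^t$ is agnostic to which function we are amplifying. Concretely, I would run the construction of \cref{lemma:almost_ram_exp_1}: perform an initial boost of $\expan_0$ down to roughly $1/d_2$ using \cref{theo:exp_walk} (noting that this only needs the norm bound on $\matr f$, not the full biased-set property), then take $t$-step walks on the $s$-wide replacement product of the graphs $X, Y$ instantiated as in \cref{sec:parameters}, and let $\matr P'$ be the collection of products $\{\matr P_{s_0}\cdots \matr P_{s_{t-1}} \mid (s_0,\dots,s_{t-1}) \in \cW_t\}$. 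Since $\rho_{\text{def}}$ is a homomorphism, $\matr P_{s_0}\cdots \matr P_{s_{t-1}} = \rho_{\text{def}}(s_0 \cdots s_{t-1})$, so $\Ex{\matr M \sim \matr P'}{\matr M} = \rho_{\text{def}}\big(\Ex{s' \sim S'}{s'}\big)$, and likewise on the standard component; \cref{theo:ta-shma_main} gives $\opnorm{\Ex{s'}{\rho_{\text{standard}}(s')}} \le \expan$, which is exactly $\lambda(\Ex{\matr M}{\matr M}) \le \expan$.

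The size bound $\abs{\matr P'} \le O(\abs{\matr P}/\expan^{2+o(1)})$ then follows verbatim from the walk-counting lemma preceding \cref{lemma:almost_ram_exp_1} (number of walks $= n' \cdot d_1^s \cdot d_2^{t-1} = O(\abs{S}/\expan^{2+\beta})$ with $\beta = o(1)$), and the locality claim --- each $\matr P'_i$ is a product of $O_{\expan_0}(\log(1/\expan))$ matrices from $\matr P$ --- follows from the "Furthermore" clause of \cref{lemma:almost_ram_exp_1} bounding each element of $S'$ as a product of $O(\log(1/\lambda))$ elements of $S$, since multiplying the corresponding permutation matrices composes in lockstep. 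Explicitness and the polynomial running time transfer directly from \cref{theo:main_2}(3). I expect the only real point requiring care --- the "main obstacle," though it is more a matter of bookkeeping than of mathematical difficulty --- is making sure the chain of reductions genuinely only uses the single-function version of operator amplification: one must observe that \cref{theo:exp_walk}'s proof (via \cref{lem:simple_amplification} and \cref{cor:ta-shma_simp_matrix}) and the parameter instantiation of \cref{sec:parameters} never use the full "all irreducibles" hypothesis for the object being amplified, only that the construction of the walk collection $\cW_t$ is independent of $\matr f$, so that feeding in $\matr f = \rho_{\text{standard}}$ alone suffices.
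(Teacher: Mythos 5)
Your proposal is correct and follows the same route as the paper's own proof: identify $\matr P_i = \rho_{\text{def}}(\sigma_i)$, use the decomposition $\mathcal{V}_{\text{def}} = \mathcal{V}_{\text{triv}} \oplus \mathcal{V}_{\text{standard}}$ to translate the hypothesis on $\lambda(\Ex{i \sim [r]}{\matr P_i})$ into a bias bound on $\rho_{\text{standard}}$ alone, and then feed $\matr f = \rho_{\text{standard}}$ into the operator amplification pipeline of \cref{theo:ta-shma_main} and \cref{lemma:almost_ram_exp_1}. You spell out more explicitly than the paper's terse proof the one subtle point---that the walk collection is constructed independently of $\matr f$, so a bias bound on the single representation $\rho_{\text{standard}}$ suffices rather than the full ``all irreducibles'' hypothesis of \cref{theo:main_2}---but the argument is the same.
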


\begin{proof}
  Let $\matr P_i = \sigma_i$. Applying \cref{theo:ta-shma_main} to the set $S = \{\sigma_i\}$ we get a larger set of permutations, $S'$
  of the form $\sigma' = \sigma_{i_1} \circ \cdots \circ \sigma_{i_k}$ where $k = O_{\expan_0}\left ( \log(1/\expan) \right)$.  By the decomposition
  of the defining representation, we have that
 \begin{align*}
   \spec\left (\Ex{ \matr M \sim \matr P'}{ \matr M }\right ) &= \spec\left (\Ex{ \sigma' \sim S'}{ \rho_{\text{def}}(\sigma') }\right )\\
                                                              &=  \{1\} \cup \spec\left (\Ex{ \sigma' \sim \matr S'}{ \rho_{\text{standard}}(\sigma') } \right  )\mper
 \end{align*}
 where the $1$ corresponds to the eigenvalue from the trivial
 representation. Since the operator amplification reduces the bias of
 every non-trivial irreducible representation, it also does so for
 $\mathcal{V}_{\text{standard}}$.
\end{proof}

\subsection{Arbitrary Expanders via Permutation Amplification}\label{subsec:arb_exp_amp}

We can make any family of bounded degree expander graphs into an
almost Ramanujan family while preserving their adjacency structure. First, we recall
K\"{o}nig's theorem that says that the adjacency matrix of a
$d$-regular graph can be expressed in terms of permutation matrices.

\begin{theorem}[K\"{o}nig]\label{theo:konig}
  Let $\matr A_X$ be normalized adjacency matrix of a $d$-regular $n$-vertex simple graph $X$.
  Then, there exists $d$ permutation matrices $\matr P_1,\ldots,\matr P_d \in \mathbb{R}^{n\times n}$ such that
  \begin{align*}
    \matr A_X = \frac{1}{d} \sum_{j=1}^d \matr P_j.
  \end{align*}
\end{theorem}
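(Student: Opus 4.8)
The plan is to reduce König's theorem to the classical fact that a $d$-regular bipartite graph decomposes into $d$ perfect matchings. Since $X$ is simple, the un-normalized adjacency matrix $\matr M := d\,\matr A_X$ is a symmetric $\set{0,1}$-matrix all of whose row sums and column sums are equal to $d$. It therefore suffices to exhibit permutation matrices $\matr P_1,\dots,\matr P_d \in \mathbb{R}^{n\times n}$ with $\matr M = \sum_{j=1}^d \matr P_j$, since dividing by $d$ then gives $\matr A_X = \frac1d\sum_{j=1}^d \matr P_j$.

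First I would associate to $\matr M$ the bipartite graph $H$ on the vertex set $[n]\sqcup[n]$ (a ``left'' and a ``right'' copy of $V_X$), joining left-vertex $i$ to right-vertex $j$ exactly when $(\matr M)_{ij}=1$. Because every row and every column of $\matr M$ sums to $d$, the graph $H$ is $d$-regular. Next I would apply Hall's marriage theorem to extract a perfect matching: for any set $T$ of left-vertices there are exactly $d\abs{T}$ edges of $H$ incident to $T$, and each vertex of the neighbourhood $N(T)$ is incident to at most $d$ of them, so $\abs{N(T)}\ge\abs{T}$; Hall's condition holds and a perfect matching $M_1$ exists. Such a matching is the graph of a bijection $\sigma_1\colon [n]\to[n]$, and the associated permutation matrix $\matr P_1$ satisfies $\matr P_1 \le \matr M$ entrywise.

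Then I would iterate: $\matr M - \matr P_1$ is again a nonnegative integer matrix with all row and column sums equal to $d-1$ (equivalently, $H$ with the edges of $M_1$ deleted is $(d-1)$-regular bipartite), so the same argument produces $\matr P_2$, and after $d$ steps one has permutation matrices $\matr P_1,\dots,\matr P_d$ with pairwise disjoint supports summing to $\matr M$. (Equivalently, this is König's edge-colouring theorem applied to $H$: a $d$-regular bipartite graph is properly edge-colourable with $d$ colours, and each colour class is one of the $\matr P_j$; one could also cite Birkhoff--von Neumann, but Hall/König is what yields exactly $d$ matrices with uniform weight $1/d$.) There is essentially no obstacle in this argument; the only point worth flagging is that the $\matr P_j$ obtained this way need not be symmetric — they need not pair up as $\matr P$ and $\matr P^{\top}$ — which is harmless, since neither the statement nor the permutation-amplification application requires it.
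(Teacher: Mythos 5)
Your proof is correct and takes essentially the same route as the paper: the paper's own argument (given in the proof of Claim~\ref{claim:konig_eff_decomp}) also passes to the $d$-regular bipartite graph on two copies of $V(X)$, repeatedly extracts a perfect matching, and deletes it to drop the regularity by one. Your write-up just spells out the Hall-condition verification and the entrywise bookkeeping $\matr P_j \le \matr M$ more explicitly; the remark that the $\matr P_j$ need not be symmetric is accurate and, as you say, harmless for the permutation-amplification application.
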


It is also efficient to find permutation matrices as above.

\begin{claim}\label{claim:konig_eff_decomp}
  The permutations in~\cref{theo:konig} can be found in time $\poly(n)$.
\end{claim}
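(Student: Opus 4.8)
The plan is to reduce the decomposition to repeated bipartite perfect-matching computations. First I would pass to the (unnormalized) adjacency matrix $\matr M \coloneqq d\,\matr A_X$, which, since $X$ is a simple $d$-regular graph, is a $0/1$ matrix with every row sum and every column sum equal to $d$. Form the bipartite graph $H$ on vertex classes $L = [n]$ and $R = [n]$, placing an edge between $i \in L$ and $j \in R$ precisely when $\matr M_{ij} = 1$; then $H$ is $d$-regular bipartite, and a perfect matching of $H$ corresponds exactly to a permutation matrix $\matr P \le \matr M$ (entrywise) for which $\matr M - \matr P$ is the bipartite adjacency matrix of a $(d-1)$-regular bipartite graph.

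Next I would iterate: extract a perfect matching, record the corresponding permutation matrix $\matr P_j$, subtract it, and recurse on the remaining $(d-1)$-regular graph. A $d$-regular bipartite graph always has a perfect matching by Hall's theorem --- Hall's condition holds automatically since for any $S \subseteq L$ one has $d\,\abs{S} = e(S, N(S)) \le d\,\abs{N(S)}$ --- so the process runs for exactly $d$ rounds and terminates with $\matr M = \sum_{j=1}^d \matr P_j$, \ie $\matr A_X = \frac1d \sum_{j=1}^d \matr P_j$ as in \cref{theo:konig}.

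For the running time, each round is a single maximum-cardinality bipartite matching computation on a graph with $2n$ vertices and at most $n^2$ edges, solvable in $\poly(n)$ time by the Hopcroft--Karp algorithm (or any augmenting-path method). Since $d \le n$, there are at most $n$ rounds, and forming and updating $H$ costs $O(n^2)$ per round, so the total running time is $\poly(n)$.

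The only point requiring care is bookkeeping rather than a genuine obstacle: one must check that removing a perfect matching leaves a regular graph --- immediate, since each vertex loses exactly one incident edge --- so that Hall's condition, hence solvability of the next matching instance, is preserved through all $d$ rounds. Equivalently, one could invoke a polynomial-time algorithm that properly edge-colours a $\Delta$-regular bipartite graph with $\Delta$ colours and read off the colour classes as the permutation matrices; this bundles the $d$ matchings into a single routine but is logically the same.
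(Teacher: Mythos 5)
Your proposal matches the paper's proof essentially verbatim: both view $d\,\matr A_X$ as the biadjacency matrix of a $d$-regular bipartite graph, repeatedly extract a perfect matching (which exists and is computable in $\poly(n)$ time) as a permutation matrix, and recurse on the resulting $(d-1)$-regular graph. Your version just spells out the Hall's-condition justification and names Hopcroft--Karp, which the paper leaves implicit.
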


\begin{proof}
 We view $A_X$ as encoding the adjacency relation of a bipartite graph
 with vertex bipartition $(A=V(X),B=V(X))$. This bipartite graph is
 $d$-regular so it has at least one perfect matching $M$, which can be
 found in $\poly(n)$ time. We remove this matching $M$ obtaining
 a $(d-1)$-regular graph and we repeat till the resulting graph is empty.
\end{proof}

Our general transformation into an almost Ramanujan
bound follows by using~\cref{claim:konig_eff_decomp} to obtain an initial set of permutation matrices which are amplified using~\cref{cor:perm_amp}.

\begin{restatable}[Main I (Formal version of~\cref{theo:main_1_informal})]{theorem}{MainPermAmpInformal}\label{theo:main_1}
  Let $\set{X_i}_{i \in \mathbb{N}}$ be a family of $d_0$-regular $\lambda_0$-expanders with constant $\lambda_0 < 1$.
  For any $\lambda \in (0,1)$ and any expander $X_i$, we can deterministically compute a $d$-regular
  $\lambda$-expander $X_i'$ with $d=O_{\lambda_0}(d_0/\lambda^{2+o(1)})$ in time $\poly(\abs{V(X_i)})$.
  Moreover, the construction is local in the sense that edges in $X_i'$ correspond to short walks in $X_i$.
  More precisely, if the adjacency matrix of $X_i$ is
  $$
  \matr A_{X_i} = \frac{1}{d_0} \sum_{j=1}^{d_0} \matr P_j,
  $$
  where $\matr P_1,\ldots,\matr P_{d_0}$ are permutation matrices, then the adjacency matrix of $X_i'$ is
  $$
  \matr A_{X_i'} = \frac{1}{d} \sum_{j=1}^d \matr P_j',
  $$
  where each $\matr P_j'$ is the product of at most $k=O_{\lambda_0}(\log(1/\lambda))$
  permutation matrices among $\matr P_1,\ldots,\matr P_{d_0}$.
\end{restatable}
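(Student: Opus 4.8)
The plan is to reduce everything to the permutation amplification corollary, so the proof is essentially a composition of \cref{theo:konig}, \cref{claim:konig_eff_decomp}, and \cref{cor:perm_amp}. First I would invoke König's theorem together with \cref{claim:konig_eff_decomp} to write $\matr A_{X_i} = \frac{1}{d_0}\sum_{j=1}^{d_0}\matr P_j$ for permutation matrices $\matr P_1,\dots,\matr P_{d_0}$ acting on $\C^{n}$, $n=\abs{V(X_i)}$, computed in $\poly(n)$ time. Viewing $\matr P=\set{\matr P_1,\dots,\matr P_{d_0}}$ as a multiset of elements of $\mathrm{Sym}_n$ via the defining representation, we have $\Ex{j\sim[d_0]}{\matr P_j}=\matr A_{X_i}$, hence $\lambda\bigl(\Ex{j}{\matr P_j}\bigr)=\lambda(X_i)\le\lambda_0$.

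Second, I would apply \cref{cor:perm_amp} to $\matr P$ with target expansion $\lambda$. This yields, deterministically in time $\poly(d_0/\lambda)$ relative to the $\poly(n)$-time oracle for composing permutations on $n$ points (hence $\poly(n)$ overall), a multiset $\matr P'=\set{\matr P_1',\dots,\matr P_d'}$ of permutation matrices with $d=\abs{\matr P'}=O_{\lambda_0}(d_0/\lambda^{2+o(1)})$, with $\lambda\bigl(\Ex{\matr M\sim\matr P'}{\matr M}\bigr)\le\lambda$, and with each $\matr P_j'$ equal to a product $\matr P_{i_1}\cdots\matr P_{i_k}$ of at most $k=O_{\lambda_0}(\log(1/\lambda))$ of the matrices $\matr P_1,\dots,\matr P_{d_0}$. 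The $o(1)$ in the exponent, the hidden dependence on $\lambda_0$, and the bound on $k$ all come from instantiating \cref{theo:ta-shma_main} through \cref{lemma:almost_ram_exp_1} with a slowly vanishing $\beta(\lambda)=\Theta\bigl((1/\log(1/\lambda))^{c}\bigr)$.

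Third, I would define $X_i'$ to be the multigraph on the vertex set $V(X_i)$ with normalized adjacency matrix $\matr A_{X_i'}=\frac1d\sum_{j=1}^{d}\matr P_j'$. Since each $\matr P_j'$ is a permutation matrix, every row and column of $\sum_j\matr P_j'$ sums to $d$, so $X_i'$ is $d$-regular, and $\lambda(X_i')=\lambda\bigl(\Ex{\matr M\sim\matr P'}{\matr M}\bigr)\le\lambda$ by the previous step, giving the claimed parameters and the displayed decomposition of $\matr A_{X_i'}$. For locality, the directed edge of $X_i'$ of color $j$ out of a vertex $v$ goes to $\matr P_j'(v)=\matr P_{i_1}\cdots\matr P_{i_k}(v)$; applying the factors right to left lists the vertices $v,\matr P_{i_k}(v),\matr P_{i_{k-1}}\matr P_{i_k}(v),\dots$, and each single step $\matr P_{i_\ell}$ sends a vertex to one of its neighbors in $X_i$ because $\matr P_{i_\ell}$ is supported on edges of $X_i$ by the König decomposition. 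Thus every edge of $X_i'$ is traced by a walk of length $k=O_{\lambda_0}(\log(1/\lambda))$ in $X_i$, and a vertex computes its $X_i'$-neighbors from its radius-$k$ neighborhood, which is exactly the locality claim.

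The main obstacle here is bookkeeping rather than a conceptual one, since the substantive work lives in \cref{cor:perm_amp} (and behind it \cref{theo:ta-shma_main}). The one genuine subtlety is that König's permutation matrices need not be symmetric, so $X_i'$ is a priori a directed multigraph; if an undirected graph is desired one should run the same argument on $\matr P\cup\set{\matr P_j^{-1}}$ (or simply note that $\lambda$ here is the second singular value), at the cost of at most doubling $d$. A second thing to verify is that the $o(1)$ slack, the $O_{\lambda_0}(\cdot)$ factor, and the walk length $k$ are all controlled by the \emph{same} choice of $\beta(\lambda)$, so that $d=O_{\lambda_0}(d_0/\lambda^{2+o(1)})$ and $k=O_{\lambda_0}(\log(1/\lambda))$ hold simultaneously with compatible constants.
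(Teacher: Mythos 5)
Your proof is correct and follows exactly the paper's intended route: K\"{o}nig's decomposition via \cref{theo:konig} and \cref{claim:konig_eff_decomp}, followed by \cref{cor:perm_amp}, with the resulting permutation products defining $X_i'$. The paper gives only a one-sentence pointer to this argument; your write-out, including the observation that the K\"{o}nig permutations need not be symmetric and the two ways to handle that, is a faithful and slightly more careful elaboration of the same proof.
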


\subsection{Explicit Almost Ramanujan Quantum Expanders}\label{subsec:qua_exp}

Quantum expanders were defined in \cite{AS04, BAST08, Hastings07b} and
have found many applications in quantum information theory.
For instance, they can be used in the construction of designs and
gates sets~\cite{HH09}, in quantum statistical zero knowledge (QSZK)
\cite{BAST08}, in detecting EPR pairs~\cite{AHLNSV14} and in the study
of \emph{entanglement} \cite{Has07}.

Roughly speaking, a quantum expander is a generalization of an
expander graph (see \cref{def:quantum_exp} for precise details). While
a usual degree-$d$ expander graph $X=(V,E)$ is given by $d$
permutation matrices acting on a vector space $\C[V]$, a quantum
expander is given by $d$ (suitable) linear operators acting on quantum
states (\ie PSD matrices of trace $1$). The normalized adjacency
matrix of a $\lambda$-expander shrinks the $\ell_2$-norm of vectors
orthogonal the all ones function by a factor of $\lambda$. Similarly,
a quantum expander shrinks the Frobenius norm of PSD matrices
orthogonal~\footnote{With respect to the Hilbert--Schmidt inner
  product.} to the identity matrix (the quantum analogue of the all
ones function) by a factor of $\lambda$ (the quantum expansion
parameter).
  
  \begin{definition}[Quantum Expander~\cite{AHLNSV14}]\label{def:quantum_exp}
The (super) operator $\Phi: \C^{N\times N} \to \C^{N\times N}  $ is an $(N,d,\lambda)$ quantum expander if  
  \begin{enumerate}[leftmargin=*,label =$\cdot$]
    \item ``Degree'' -- The operator $\Phi$ can be expressed as a sum of $d$ linear operators as follows, $\Phi(\rho) = \sum_{i = 1}^d B_i\rho B_i^\dagger$
          where\footnote{A useful special case is when each $B_i$ is a (normalized) unitary.} $\sum_{i=1}^d B_i^\dagger B_i = \matr I_N$. 
    \item ``Expansion'' -- The second largest eigenvalue\footnote{If $\rho$ satisfies $\Tr(\rho) = 0$,
          then $\norm{\Phi(\rho)}_2 \le \lambda \norm{\rho}_2$, where $\norm{\rho}_2 \coloneqq \sqrt{\Tr(\rho^\dag \rho)}$.} of $\Phi$ as a linear map is $\leq \lambda$.
  \end{enumerate}
\end{definition}

In~\cite{Hastings07}, Hastings showed that the Ramanujan bound also
applies to quantum expanders and that $d$ random unitaries get
arbitrarily close to the bound.  However, such a construction cannot
be efficiently implemented and thus used in applications like
\cite{AHLNSV14} which rely on existing explicit constructions
(\eg~\cite{BAST08,Harrow07}) that are far from the Ramanujan bound and
thus give sub-optimal results. 

{Harrow~\cite{Harrow07} proved that one can construct a quantum expander using an expander Cayley graph over a group for which efficient Quantum Fourier Transform (QFT) is known \cite{Beals97}. 
}
%We deduce the existence of explicit families of almost Ramanujan
%quantum expanders 
%together with a result of Harrow~\cite{Harrow07}. For this, it is
%important that we can efficiently construct almost Ramanujan Cayley
%expanders on the symmetric group $\mathrm{Sym}_n$, . \todo{Explain why Harrow's result is not explicit}

\begin{theorem}[Harrow~\cite{Harrow07}]
  Let $G$ be a group and $S \subseteq G$ be a multiset such that $\Cay(G,S)$ is a $\lambda$-spectral expander.
  Let $V^\mu$ be an irreducible representation of $G$ of dimension $N$. Then, there exists an $(N, \abs{S},\lambda)$-quantum
  expander. Furthermore, if the group $G$ admits an efficient QFT and $\log N = \Omega(\log \abs{G})$,
  then the quantum expander is explicit.
\end{theorem}

{Until now, we did not have almost-Ramanujan expanders over such a group. Since the symmetric group admits such a QFT algorithm, we deduce the existence of explicit families of almost Ramanujan
quantum expanders by applying our amplification to the Cayley graphs over the symmetric group due to Kassabov~\cite{Kas07}. }
%As a corollary of Harrow's result and our explicit family of almost
%Ramanujan Cayley expanders over the symmetric group obtained from the
%expanding family of Kassabov~\cite{Kas07}, we deduce the following
%corollary.

\TheoQuanExp

\subsection{Explicit Almost Ramanujan Monotone Expander}\label{subsec:mon_exp}

We now show how to obtain almost Ramanujan monotone expanders starting
from the explicit construction in Bourgain and
Yehudayoff~\cite{BY13}. First, we recall the definition of a
monotone graph. All graphs we consider are undirected.

\begin{definition}[Monotone partial map]
A partial map $f :[n] \to [n]$ is monotone if for every pair $\{x, y\}$ for which $f$ is defined, if $x <y$, we have $f(x) < f(y)$.  	
\end{definition}

\begin{definition}[Monotone Graph]
  A bipartite graph $X=([n]_A\sqcup [n]_B,E)$ is a $d$-monotone graph if there are $d$  monotone partial
  maps $f_1,\ldots,f_d : [n] \to [n]$, such that the edges set $E$ is the following disjoint union,
  \[  E = \bigsqcup_{i=1}^d \set{ (v_A,f_i(v)_B) \mid v \in \textup{Domain}(f_i) }.
  \]
\end{definition}

We observe that there are two notions of degree of a monotone graph:
the usual vertex degree and the number of monotone functions.
Clearly, if a graph is $d$-monotone, all vertex degrees are at most
$d$. The converse is not necessarily true; for example, the complete bipartite graph on $2$ vertices on each side, $K_{2,2}$, has vertex degree $2$, but the graph is not $2$-monotone. We stress that our almost Ramanujan bound is with
respect to the usual notion of vertex degree (and keeps the number of
monotone maps polynomial in the vertex degree).

\begin{definition}[Monotone Vertex Expander]
  We say that $X=(A=[n]_A\sqcup B=[n]_B,E)$ is a $d$-monotone expander if it is a $d$-monotone graph
  and there exists $\delta > 0$ such that for all $A' \subseteq A$ with $\abs{A}\le n/2$, we have
  $\abs{\partial(A')} \ge (1+\delta) \abs{A'}$, where $\partial(A')$ is the set of vertices in $B$ adjacent to $A'$.
\end{definition}

\begin{theorem}[Bourgain and Yehudayoff~\cite{BY13}]\label{theo:by_mon_family}
  There is an explicit family $\set{X_n}_{n\in \mathbb{N}}$ of $d$-monotone vertex expanders with $d=\Theta(1)$.
\end{theorem}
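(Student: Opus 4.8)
The statement is the theorem of Bourgain and Yehudayoff, so the plan is to re-derive their construction: realize $X_n$ as a discretization of an \emph{expanding action} of a fixed finite set of M\"obius transformations on a circle. Concretely, fix a finite symmetric set $S\subseteq\slin_2(\R)$ acting on the circle $\mathbb P^1(\R)$ by orientation-preserving M\"obius maps; for each $n$, cut the circle and partition it into $n$ equal arcs $I_1,\dots,I_n$, and form the bipartite graph $X_n$ on $[n]_A\sqcup[n]_B$ joining $I_j$ to every arc meeting $g(I_j)$ for $g\in S$. Each $g$ is an increasing circle homeomorphism, so after restricting to sub-arcs these maps are partial monotone maps, and since $g$ is uniformly bi-Lipschitz at the relevant scales, $I_j$ meets only $O(1)$ image arcs; hence $X_n$ is $d$-monotone with $d=O(|S|)$. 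Its vertex expansion, for every large $n$, will follow from a \emph{measure-expansion} property of the continuous action --- for some $\delta>0$, $\mathrm{Leb}(\bigcup_{g\in S}gA)\ge(1+\delta)\,\mathrm{Leb}(A)$ whenever $\mathrm{Leb}(A)\le 1/2$ --- together with a short separate argument for very small vertex sets (which cannot contract if $S$ is in general position). One must track the bounded, scale-controlled distortion of Lebesgue measure under M\"obius maps, but that part is elementary.

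\emph{The spectral gap, via $L^2$-flattening.} By a Cheeger-type comparison, measure-expansion is equivalent, up to constants, to a spectral gap for the averaging operator $Tf(x)=|S|^{-1}\sum_{g\in S}f(g^{-1}x)$ on $L^2_0(\mathbb P^1(\R))$, which I would prove by an $L^2$-flattening argument in the style of Bourgain--Gamburd. Writing $\mu=\mathrm{Unif}(S)$ and smoothing at a dyadic scale $\rho$, the heart of the matter is a dichotomy: either $\|\mu^{*k}\|_2^2$ (at scale $\rho$, under the action) drops by a fixed factor on refining scale $2\rho$ to $\rho$, or $\mu^{*k}$ concentrates in the $\rho$-neighbourhood of a proper closed subgroup of $\slin_2(\R)$, equivalently near a small orbit on $\mathbb P^1(\R)$. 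This dichotomy is a discretized product theorem in $\slin_2(\R)$ --- Bourgain's sum--product/ring theorem together with a Balog--Szemer\'edi--Gowers step --- and it is exactly where one uses that the only proper closed connected subgroups of $\slin_2(\R)$ are one-dimensional (conjugates of the diagonal torus or of the unipotent group), hence metrically negligible. Since the explicit $S$ cannot concentrate near any such subgroup (a ping-pong / escape-from-subvariety argument with the concrete generators), iterating the dichotomy shows $\mu^{*k}$ is flat at scale $\rho$ once $k=O(\log(1/\rho))$; a standard local-to-global step then converts flatness at a single small scale into the desired $L^2$-contraction of $T$, hence the spectral gap.

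\emph{Explicitness and the main obstacle.} Taking $S$ to be an explicit set and checking the escape-from-subgroups input for it by hand makes the whole pipeline effective, and the discretization above is computable in $\poly(n)$ time, so $\{X_n\}_n$ is an explicit family of $\Theta(1)$-monotone vertex expanders, which is \cref{theo:by_mon_family}. I expect the flattening dichotomy to be the genuine obstacle --- proving the discretized product theorem in $\slin_2(\R)$ and carrying out the multi-scale bookkeeping around it --- while the remaining pieces (the Cheeger comparison, the reduction from the combinatorial graph to the continuous action, and the ping-pong step) are comparatively routine.
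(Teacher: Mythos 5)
The paper does not prove this theorem: it is stated as a citation to Bourgain and Yehudayoff~\cite{BY13} and used entirely as a black box (the paper's contribution is the operator amplification applied \emph{on top of} this external construction, via \cref{cor:spec_base_mon_exp} and the permutation amplification). So there is no internal proof to compare against; re-deriving \cite{BY13} is neither attempted nor needed in the paper.

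As a reconstruction of the \emph{external} argument, your outline has the right shape --- M\"obius action on a one-dimensional space, discretization into arcs giving bounded monotone degree via bounded distortion, spectral gap for the averaging operator by Bourgain--Gamburd $L^2$-flattening driven by a discretized product theorem in $\slin_2(\R)$, and an escape-from-subgroups input for explicit generators. But it is an outline, not a proof: the pieces you flag as ``the genuine obstacle'' (the discretized sum--product/product theorem in $\slin_2(\R)$ and the multi-scale flattening bookkeeping) constitute the bulk of \cite{BY13} and of the underlying Bourgain work it leans on, and the steps you call ``comparatively routine'' --- the Cheeger-type transfer from continuous measure expansion to vertex expansion of the arc graph, and the separate argument for very small vertex sets --- also carry nontrivial technical weight there. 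If this theorem is to be invoked, cite it (as the paper does); if you want to prove it, each of those omitted steps must be carried out in full.
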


We will work with a spectral definition of monotone expander. For a bipartite graph $X$, we define its \textit{biadjacency matrix}, $B_X$ such that the adjacency matrix  $A_X = \left(\begin{matrix}
	0 & B_X\\
	B_X^T & 0
\end{matrix}\right)$. Precisely, for a monotone graph $ X = ( [n]_A \sqcup [n]_B, E)$, we have $(B_X)_{ij} = \mathbb{I}[(i_A,j_B) \in E]$. Note that if $X$ is $d$-regular, then $B_X$ is $d$-regular. We will define the graph $X$ via $B_X$ throughout.

\begin{definition}[Spectral Monotone Expander]
 We say that a $d$-monotone graph, $X$, is a $\lambda$-spectral monotone expander if  
  $\lambda(X) = \max \set{ \abs{ \lambda_2(\matr B_X)},   \abs{ \lambda_n(\matr B_X)} } < \lambda$.	
\end{definition}

It is well-known that starting from a monotone expander (not
necessarily a vertex regular graph), we can add monotone partial
functions to obtain a monotone graph of regular (vertex) degree that
is still expanding.  We use this to establish the following,     

\begin{corollary}\label{cor:spec_base_mon_exp}
  There is explicit family $\set{X_n}_{n\in \mathbb{N}}$ of $d_0$-regular $2d_0$-monotone
  expanders with $\lambda(X_n) \le \expan_0 < 1$ and $d_0=\Theta(1)$. Furthermore, the unnormalized adjacency matrix
  of $X_n$ can be written as a sum of $d_0$ permutation matrices each corresponding to two monotone maps.
\end{corollary}

\begin{proofsketch}
  Let $\set{X_n'}_{n\in \mathbb{N}}$ be the family
  in~\cref{theo:by_mon_family}.  Let $X=X_n'$ be a fixed $d_0$-regfular graph that is also $d_0$-monotone expander with the maps $\{f_i\}$.
  
For each monotone function $f_i$, we define its ``complement'', $\overline{f}_i$, as the (unique) 
monotone \edits{partial} function $\overline{f}_i$ such that $f_i \cup \overline{f}_i$ is a total function. Let $Y$ be the $2d_0$-monotone
graph corresponding to the maps $\{f_i, \overline{f_i}\}$. Then, we have
\begin{align*}
 \matr B_Y &=  \sum_{i=1}^{d_0} \matr P_i\mcom
\end{align*}
where $\matr P_i =  \matr M_{f_i} + \matr M_{\overline{f_i}}$ and $(\matr M_{f_i})_{x,y} = \mathbb{1}\left[f_i(x)=y\right]$.

Each matrix $\matr P_i$ is a permutation matrix as $f_i \cup
\overline{f}_i$ is a total function. Adding more maps preserves the
constant vertex expansion parameter which (together with having
constant vertex degree) implies constant spectral expansion bounded
away from $1$ (see~\cite[Theorem 4.9]{Vadhan12}). Thus, $\set{Y_n}_{n
  \in \N}$ is the required family.
\end{proofsketch}

In the amplification process, we will be multiplying permutation
matrices rather than just composing monotone maps since the latter
operation can result in a map with empty domain. We now establish the
derandomized spectral amplification of monotone expanders.

\TheoMonExp

\begin{proof}
  Let $\set{X_n'}_{n\in \mathbb{N}}$ be the family
  in~\cref{cor:spec_base_mon_exp}. Fix $X=X_n'$ and let $\matr
  P_1,\ldots,\matr P_{d_0} \in \mathbb{R}^{n\times n}$ be the
  permutation matrices guaranteed by~\cref{cor:spec_base_mon_exp},
  where each $\matr P_i = \matr M_{f_i} + \matr M_{\overline{f}_i}$.
  Use~\cref{cor:perm_amp} to obtain a collection $P'$ of size $|P'| = d \coloneqq
  O(1/\expan^{2+\beta}) $ such that,
  \[P' = \big\{ \sigma \mid \sigma = P_{i_1}\cdots P_{i_k} \text{ for some } i_i,\cdots, i_k \in [d_0]\big\}.\]
  
Our final bipartite monotone graph will be $Y$ given by $B_Y = \sum_{\sigma \in P'} \sigma$. 
    To compute its monotone degree, we observe that,
%   where  , and each  permutation $\sigma$ of $P'$ is a product and so we obtain
  \begin{align*}
   \matr P_{i_1}\cdots \matr P_{i_k}~=&~ \sum_{g_i \in \set{f_i,\overline{f_i}}} \matr M_{g_{i_1}} \cdots \matr M_{g_{i_k}}\\
       ~=&~ \sum_{g_i \in \set{f_i,\overline{f_i}}}  \matr M_{g_{i_1}\circ g_{i_2}\circ \,\cdots \;\circ g_{i_k}}\mcom
  \end{align*}
  where $g_{i_1}\circ g_{i_2}\circ \cdots g_{i_k}$ is the composed map
  which is monotone (possibly with an empty domain). This means that we
  can have at most $2^k$ monotone maps (and at least one since $\matr
  P_{i_1}\cdots \matr P_{i_k} \ne 0$). Therefore, the total number of
  maps is at most $d\cdot 2^k = d^{O(1)} $ as
  $k=O_{\expan_0}(\log(1/\expan))$. 
%  This can be made undirected by  adding $f^{-1}$ for each $f$ and thereby doubling the degree.
\end{proof}

\subsection{Amplifying the Average Kazhdan Constant}\label{subsec:kazhdan}

The \emph{Kazhdan constant} is a notion of ``spectral gap'' (and so it
is related to bias) for discrete groups which predates, and was central
to the study of expansion in finite groups and graphs. {In particular, we can work with finitely generated} groups that
can have infinitely many irreducible representations on more general
Hilbert spaces, possibly of infinite dimension. Nonetheless, we can
still apply our operator version of Ta-Shma's amplification procedure
as it is independent of dimension and works for any unitary
representation $\rho$. Therefore, we amplify the average Kazhdan
constant which also amplifies the Kazhdan constant. We now define
these two parameters formally.

\noindent Let $G$ be a group generated by a finite set $S$ of
generators. The Kazhdan constant of $G$ with respect to generators $S$
is defined as
\begin{align*}
  \mathcal{K}(G,S) &\coloneqq \inf \set{ \mathcal{K}(G,S,\rho) \mid (\rho,\cH) \textup{ irreducible and non-trivial {unitary representation}}}\mcom
\end{align*}
where $\mathcal{K}(G,S,\rho) \coloneqq \inf_{v \in \cH \colon \norm{v}_2=1} \max_{g \in S} \norm{\rho(g) \,v - v}_2^2$.

Analogously, an average version of the Kazhdan constant, as in the
work of Pak and Zuk~\cite{PZ01}, can be defined as
\begin{align*}
\overline{\mathcal{K}}(G,S) &\coloneqq \inf \set{ \overline{\mathcal{K}}(G,S,\rho) \mid (\rho,\cH) \textup{ irreducible and non-trivial {unitary representation}}}\\
  \overline{\mathcal{K}}(G,S,\rho) &\coloneqq \inf_{v \in \cH \colon \norm{v}_2=1} \frac{1}{\abs{S}} \sum_{g \in S} \norm{\rho(g)\, v - v}_2^2 \\
  &= \inf_{v \in \cH \colon \norm{v}_2=1} \frac{1}{\abs{S}} \sum_{g \in S} 2 - 2\ip{\rho(g) \,v}{ v}\\
   &= \inf_{v \in \cH \colon \norm{v}_2=1} 2 - 2 \;\Big\langle{ \Ex{g \sim S}{\rho(g)} v}, \,{ v}\Big\rangle \\
   &= 2 \left( 1 - \Big\Vert{\Ex{g \sim S}{\rho(g)}}\Big\Vert_{\mathrm{op}} \right)\mper
\end{align*}

\cref{theo:main_2} gives an improved generating set in this more general setting.

\TheoKahzdan

\begin{remark}
  Note that the above amplification for $\overline{\mathcal{K}}$ immediately implies
  the same amplification for $\mathcal{K}$ (since the maximum is at least the average, {$\overline{K}(G,S)\leq K(G,S)$)} .
  Moreover, we remark that the above amplification can also similarly improve
  the constant of Lubotzky's property $(\tau)$ (the latter being a weaker version
  of property $(\textup{T})$), so it is more general and applies to expansion
  in many more discrete groups~\cite{RL10}.
\end{remark}

In~\cref{subsec:dim_exp}, we will apply this corollary to a specific
family of representations which will give a simple improvement to the
bounds on the dimension expander constructed in~\cite{LZ08}.

\subsection{Explicit Almost Ramanujan Dimension Expanders}\label{subsec:dim_exp}

Dimension expanders were defined in~\cite{BISW01} motivated by
applications in theoretical computer science. A conjectured
construction based on irreducible representations was suggested by
Wigderson to hold over every field. The conjecture was subsequently
established by Lubotzky and Zelmanov~\cite{LZ08} for fields of
characteristic zero. We now define dimension expanders, explain
the~\cite{LZ08} proof, and our amplification in this setting.

\begin{definition}[$(\epsilon, \gamma)$ Dimension Expander]
  Let $\F$ be a field, $d \in \mathbb{N}$, $\epsilon > 0$, ${V}$ be a vector space
  of dimension $n$ and $T_1,\ldots,T_d \colon V \to V$ be linear transformations.
  We say that $({V}, \set{T_i}_{i \in [d]})$ is an $(\epsilon, \gamma)$-dimension expander if
  for every subspace ${W} \subseteq {V}$ of dimension at most $\gamma n$,
  we have $\dim({W} + \sum_{i=1}^d T_i({W})) \ge (1+\epsilon) \cdot \dim({W})$.
\end{definition}

\begin{remark}
  Observe that if the maps $T_i$ are restricted to being permutation matrices, and the expansion condition is restricted only
  to subspaces $W$ generated by elementary basis vectors, then one obtains the usual definition of vertex expansion of graphs.
  Thus dimension expanders may be viewed as a linear-algebraic extension of expander graphs.
\end{remark}

\noindent For an irreducible unitary representation $\rho$, there
exists an associated representation\footnote{Let
  $\fksl_n(\C)= \{ \tr(\matr A) = 0\;\vert \; \matr A \in \matr
    M_n(\C)\}$.  Equip the space with the Frobenius inner product
  defined as $\ip{\matr A}{\matr B} = \tr(\matr A^\dagger \matr B)$
  where $\matr A^\dagger$ is the conjugate transpose.
 For any finite-dimensional unitary representation $\rho:G
\to \mathbb{U}_n$, we have an adjoint representation
$(\mathrm{adj}_\rho, \fksl_n)$ where the action is by conjugation
$\mathrm{adj}_\rho (g) \cdot \matr A = \rho(g)\cdot \matr A \cdot
\rho(g)^{-1}$. Since conjugation by unitary matrices preserves the
trace, $\fksl_n$ is closed under the representation. Moreover, it is
unitary as \[\ip{ \mathrm{adj}_\rho (g) \matr A}{\mathrm{adj}_\rho (g)
  \matr B} = \mathrm{tr} \left( \rho(g) \matr A^\dagger \rho(g)^\dagger
\rho(g) \matr B \rho(g)^{-1} \right) = \ip{\matr A}{\matr B}. \]}
$\mathrm{adj}_\rho$.  The construction in \cite{LZ08} relates
dimension expansion with the Kazhdan constant. Their result gives a dimension expander, which gives expansion for all subspaces $W$, such that $\dim(W) \leq n/2$, but their expansion guarantee is significantly stronger when $\dim(W)$ is smaller. To obtain this, we first state a simple improvement to a computation in~\cite{LZ08}.

\begin{claim}\label{clm:project}
  Let ${W},{W}' \subseteq \mathbb{C}^d$ be two vector spaces.
  Let $\matr P,  \matr P'$ be orthogonal projectors onto ${W},{W}'$, respectively.
  Then,  \[
    \mathrm{Re}\left( \Tr(\matr P \matr P')\right) = \Tr(\matr P \matr P') \ge \dim({W} \cap {W}').  \]
\end{claim}
\begin{proof}
  Let $\mathcal{U}_0 = \mathcal{W} \cap \mathcal{W}'$, $\mathcal{U}_1 = \mathcal{W} \cap \mathcal{U}_0^{\perp}$
  and $\mathcal{U}_2 = \mathcal{W}' \cap \mathcal{U}_0^{\perp}$, with orthonormal bases $\set{u_1,\ldots, u_k}$,
  $\set{a_1,\ldots,{a_\ell}}$ and $\set{{b_1},\ldots,{b_m}}$, respectively. We can write $P$ and $P'$ as
  \begin{align*}
    P = \sum_{i=1}^k u_i u_i^T + \sum_{i=1}^{\ell} a_i a_i^T \textup { and } P' = \sum_{i=1}^k u_i u_i^T + \sum_{i=1}^{m} b_i b_i^T.
  \end{align*}
   Using linearity and orthogonality, we obtain
   \begin{align*}
     \Tr(PP') &= \sum_{i=1}^k \norm{u_i}^2\Tr(u_i u_i^T) + \sum_{i=1}^{\ell} \sum_{j=1}^m \ip{a_i}{b_j} \Tr(a_i b_j^T)\\
              &= k + \sum_{i=1}^{\ell} \sum_{j=1}^m\ip{a_i}{b_j}^2   \geq \dim(\mathcal{U}_0),
   \end{align*}
here in the last step we used that $\norm{u_i}^2 = \Tr(u_iu_i^T) = 1$.
 \end{proof}
 
 The above claim is a variant of the one used in ~\cite{LZ08} to prove their main result. By plugging in~\cref{clm:project} in their proof we obtain,

\begin{proposition}[Adapted from~\cite{LZ08} using~\cref{clm:project}]\label{prop:dimension}
  Let $\rho \colon G \to \mathbb{U}_{\mathbb{C}^n}$ be a unitary irreducible representation.
  Then $(\mathbb{C}^n,\set{\rho(g)}_{g \in S})$ is
  $\left( 1-\lambda - o_n(1), 1/2 - O(\lambda) \right)$-dimension expander, where $2(1-\lambda) :=
  \mathcal{K}(G,S,\text{adj}_\rho)$.
%   $\dim(W) \leq n \left(1 - \frac{1}{\mathcal{K}(G,S,\text{adj}_\rho)}\right)$.
 \end{proposition}
 
% Let $G$ be a discrete group and $S$ a finite set of generators such that the average Kazhdan constant $\overline{\mathcal{K}}(G,S)$ is
%  equal to $2\cdot (1-\expan_0)$ for some constant $\expan_0 \in (0,1)$
  
 \begin{corollary}\label{cor:dimension}
 	Let $\lambda > 0$ be any fixed constant. Then, there exists an explicit infinite family of
  $\left( 1-\lambda - o_n(1), \tfrac{1}{2} - O(\lambda) \right)$-dimension expanders.
 \end{corollary}
\begin{proof}
Pick a family of groups $\{G_n\}_n$ such that each $G_n$ satisfies the
condition of~\cref{theo:kazhdan}; for example, one can take any non-abelian finite simple group. By definition, for any such $G$, we have  $\mathcal{K}(G,S, \mathrm{adj}_\rho) \ge
\mathcal{K}(G,S) $ and therefore we obtain a set $S'$ such that
$\mathcal{K}(G,S', \mathrm{adj}_\rho) \geq 2(1-\lambda)$ for the given
$\lambda$. We can now apply~\cref{prop:dimension}.
\end{proof}
%\textbf{Need:} $\mathcal{K}(G,S,\text{adj}_\rho)*(1-\dim({W})/n)\geq 1$

% When $\epsilon$ is arbitrarily close to $0$, we get a dimension expander that is arbitrarily close to being a $(1,1/2)$-dimension expander. 

%
%Also, need to state the restriction on $\dim(\cW)$?}. In fact, we need another
%simple improvement to a computation in~\cite{LZ08}\edits{ yielding
%a stronger lower bound of the trace of the product of two projectors.}
%With the above claim and the analysis in~\cite{LZ08}, we obtain
%stronger dimension expansion for small dimensional spaces. 

\begin{remark}
  Forbes and Guruswami~\cite{FG15} point out that the quantum expander
  construction of Harrow~\cite{Harrow07} also yields a dimension
  expander (with a similar construction of the dimension expanders
  from~\cite{LZ08}). As mentioned earlier, monotone expanders are
  dimension expanders over any field~\cite{DS09,DW10}. Moreover, the
  Bourgain and Yehudayoff~\cite{BY13} construction of monotone
  expanders with constant generating set yields such dimension
  expanders with constant generating set!
\end{remark}

\subsection{Diameter of Finite Groups}

The study of the diameter of Cayley graphs can take many forms, \eg it
can be with respect to every generating set (as in the celebrated
Babai--Seress conjecture~\cite{BS88}) or with respect to some constant
size generating set as in~\cite{BabaiKL89}. Here, we explore the
latter case.

First, recall that any $n$-vertex degree-$d$ graph has diameter at
least $\log_{d-1}(n)$.
On the other hand, it is well-known that expansion directly implies
diameter at most $C \cdot \log_{d-1}(n)$ for some constant $C \ge 1$
(depending on the expansion). {The best upper bound a spectral proof can provide is $2$ due to the Alon--Bopanna bound for spectral expansion. Using our amplification to almost-optimal spectral expansion, deduce that any expanding group $G$ has a constant degree-$d$ Cayley expander of diameter $2+ o_d(1)$.}

%Using the operator amplification, we  $\approx 2\cdot \log_{d-1}(\abs{G})$.
More precisely, we have the following.

\begin{lemma}
  Suppose $\set{\Cay(G_i,S_i)}_{i \in \mathbb{N}}$ is a family of bounded degree Cayley expanders.
  Then, there exists a family $\set{\Cay(G_i,S_i')}_{i \in \mathbb{N}}$ of constant degree-$d$ Cayley expanders
  with diameter at most $(2 + o_d(1)) \cdot \log_{d-1}(G_i)$.
\end{lemma}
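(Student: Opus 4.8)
The plan is to apply our near-optimal amplification (\cref{theo:main_2}, or its explicit form \cref{lemma:almost_ram_exp_1}) to replace the given generators by almost Ramanujan ones, and then to invoke the textbook spectral bound on the diameter of an expander, keeping track of how the error terms compose.

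First I would record that, being a family of bounded degree Cayley expanders, $\set{\Cay(G_i,S_i)}_{i \in \mathbb{N}}$ consists of $(d_0,\expan_0)$-expanders for some absolute constants $d_0$ and $\expan_0 < 1$. Fix a sufficiently small constant $\expan > 0$ and a slack $\beta=\beta(\expan)$, tending to $0$ as $\expan\to 0$ and satisfying the hypothesis of \cref{lemma:almost_ram_exp_1}, and apply \cref{theo:main_2} to each $\Cay(G_i,S_i)$ with target expansion $\expan$; this yields $S_i'\subseteq G_i$ with $\lambda(\Cay(G_i,S_i'))\le\expan$ and degree at most $d:=O_{\expan_0}(d_0/\expan^{2+\beta})=O(1/\expan^{2+o(1)})$, a constant independent of $i$. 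Since the spectral diameter bound below depends only on the expansion, and since $\log_{d'-1}(\abs{G_i})\ge\log_{d-1}(\abs{G_i})$ whenever $d'\le d$, it suffices to prove the claimed bound with $d$ in place of the exact degree; also, replacing $S_i'$ by $S_i'\cup(S_i')^{-1}$ if necessary — which at most doubles the degree and cannot increase the bias, since for any unitary representation $\rho$, writing $\matr M:=\Ex{s\sim S_i'}{\rho(s)}$, one has $\opnorm{\Ex{s\sim S_i'\cup(S_i')^{-1}}{\rho(s)}}=\opnorm{\tfrac12(\matr M+\matr M^{\dagger})}\le\opnorm{\matr M}\le\expan$ — we may assume each $\Cay(G_i,S_i')$ is undirected. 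Inverting the trade-off gives $\expan\le d^{-1/2+o_d(1)}$, i.e. $\ln(1/\expan)\ge(\tfrac12-o_d(1))\ln d$.

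Next I would recall that an undirected $d$-regular (multi)graph $X$ on $n$ vertices with $\lambda(X)\le\expan$ has $\mathrm{diam}(X)\le\lceil\ln n/\ln(1/\expan)\rceil$: decomposing a standard basis vector as $e_v=\tfrac1n\vec{1}+e_v^{\perp}$ with $e_v^{\perp}\perp\vec{1}$ and writing $\matr A:=\matr A_X$, symmetry gives $\abs{(\matr A^t)_{uv}-1/n}=\abs{\ip{e_u}{\matr A^t e_v^{\perp}}}\le\norm{\matr A^t e_v^{\perp}}_2\le\expan^t$, which is $<1/n$ once $t\ge\lceil\ln n/\ln(1/\expan)\rceil$ (using $\norm{e_v^{\perp}}_2<1$ for strictness at the boundary), so $(\matr A^t)_{uv}>0$ and any two vertices are joined by a length-$t$ walk. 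Combining this with $\ln(1/\expan)\ge(\tfrac12-o_d(1))\ln d$ and $\ln\abs{G_i}/\ln d=(1-o_d(1))\log_{d-1}(\abs{G_i})$ yields
\begin{align*}
  \mathrm{diam}(\Cay(G_i,S_i'))~\le~\left\lceil\frac{\ln\abs{G_i}}{\ln(1/\expan)}\right\rceil~\le~\left\lceil(2+o_d(1))\,\frac{\ln\abs{G_i}}{\ln d}\right\rceil~=~(2+o_d(1))\log_{d-1}(\abs{G_i}),
\end{align*}
where the additive $+1$ from the ceiling is absorbed into the $o_d(1)$ factor since $\abs{G_i}\to\infty$ (the finitely many members with bounded $\abs{G_i}$ being trivial).

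Since all the substance is packaged in \cref{theo:main_2}/\cref{lemma:almost_ram_exp_1}, the argument is essentially bookkeeping; the only point needing any care is to propagate the $o_d(1)$ coherently through the composition of the near-quadratic degree–expansion trade-off (whose slack $\beta(\expan)$ is the ultimate source of the $o_d(1)$), the spectral diameter bound, and the change of logarithm base from $d$ to $d-1$, together with the routine symmetrization step. I do not anticipate a genuine obstacle.
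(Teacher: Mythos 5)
Your proof is correct and follows essentially the same route as the paper's: amplify to near-Ramanujan via \cref{theo:main_2}, then use the standard spectral argument that $\lambda^t < 1/\abs{G_i}$ forces every entry of $\matr A^t$ to be positive, and finally invert the near-quadratic degree–expansion trade-off to convert $\ln(1/\lambda)$ into $(\tfrac12-o_d(1))\ln d$. The only material difference is that you add an explicit symmetrization step ($S_i'\mapsto S_i'\cup(S_i')^{-1}$); this is harmless but also unnecessary, since for a doubly stochastic $\matr A$ the orthogonal complement of $\vec{1}$ is $\matr A$-invariant, so $\smallnorm{\matr A^t e_v^{\perp}}_2\le\lambda^t$ already holds with $\lambda$ the second singular value, exactly as the paper uses by writing $\smallnorm{(\matr A-\matr J/n)^t e_g}_2\le\lambda^t$.
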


\begin{proof}
  We apply~\cref{theo:main_2} to the family $\set{\Cay(G_i,S_i)}_{i \in \mathbb{N}}$ obtaining
  a new family of $\set{\Cay(G_i,S_i')}_{i \in \mathbb{N}}$ of $(d,\lambda)$-expanders with
  $d = 1/\lambda^{2+\beta}$ for some sufficiently small constants $\lambda, \beta > 0$.
  Let $\matr A_i$ be the normalized adjacency matrix of $\Cay(G_i,S_i')$ and $n_i = \abs{G_i}$.
  Let $e_g$ be the indicator vector of some fixed $g \in G_i$. 
  Note that \begin{align*}
  \norm{(\matr A_i - \matr J/n_i)^t e_g}_2 \le \lambda^t = d^{- t/(2+\beta)} < 1/\abs{G_i}, \\
\text{for }  t = (2+ 2\beta) \cdot \log_{d}(\abs{G_i}) = (2+ o_{d,\beta}(1)) \cdot \log_{d-1}(\abs{G_i}).
  \end{align*}
  This implies that $\matr A_i^t e_g$ is supported on all elements of $G_i$, and thus the diameter of $G_i$
  is at most $t$.
\end{proof}

\section{Operator Expander Mixing Lemma}\label{sec:eml}

In \cref{sec:simple_amp}, we showed an operator amplification based on
walks on an auxiliary expander. The same construction can be analyzed differently by using the expander mixing lemma (EML) iteratively.

For the scalar case, this is a classic result due to Rozenman--Wigderson (see also~\cite{Bog12}). An operator version of this approach was proved by Chen, Moore, and Russell~\cite{CMR13} but they obtain a much larger degree, $\lambda^{-11}$, rather than the $\lambda^{-4}$ similar to the expander walk approach (\cref{theo:exp_walk}). Our proof obtains the bound of $\lambda^{-4}$.

% proves an operator version of  the
% and applies it in an iterated way (using
%different auxiliary graphs) for bias amplification. They obtain a
%dependence factor We show that this
%approach~\cite{CMR13} can achieve a dependence factor of
%$1/\lambda^{4+o(1)}$ which is similar to the expander walk
%approach (\cref{theo:exp_walk}) 
%We formally
%prove the following result,
%
\begin{restatable}[Iterated Operator EML]{theorem}{ItOpEML}\label{theo:mat_iter_eml}
  Let $S \subseteq G$. Suppose $\lambda(\Cay(G,S))=\expan_0 < 1$, where $\expan_0 \in (0,1)$.
  For every $\expan \in (0,1)$, we can find $S' \subseteq G$ such that,
  \begin{enumerate}[topsep=4pt,itemsep=1pt]
    \item $\lambda(\Cay(G,S')) \le \expan$ and $\abs{S'} = O_{\expan_0}(\abs{S}/\expan^{4+o(1)})$, and
    \item the running time is $\poly(\abs{S},1/\expan_0,1/\expan)$.
  \end{enumerate}  
\end{restatable}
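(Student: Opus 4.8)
The plan is to mimic the Chen--Moore--Russell strategy, but to amplify the bias in \emph{stages} rather than in one shot, choosing the per-stage parameters so that the accumulated degree blowup is only $\lambda^{-(4+o(1))}$ rather than $\lambda^{-11}$. The central tool is an operator expander mixing lemma: if $H$ is a $(\abs{S},d_H,\mu)$-expander on vertex set $S$ and $\matr f\colon S\to\cL(\cH)$ satisfies $\max_s\opnorm{\matr f(s)}\le 1$ and $\opnorm{\Ex{s\sim S}{\matr f(s)}}\le \eta$, then one step of the walk on $H$ produces a new function on the (multi)set of edges of $H$ whose average has operator norm at most roughly $\eta + \mu$ (possibly up to a small constant factor). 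Iterating $k$ times with expanders $H_1,\dots,H_k$ of second eigenvalue $\mu_1,\dots,\mu_k$, the bias after $k$ stages is at most $\eta_0 + \sum_i \mu_i$ while the support size has multiplied by $\prod_i d_{H_i}$. The key accounting point is that to halve the current bias at stage $i$ it suffices to take $\mu_i$ a constant factor below the current bias $\eta_i$, and by Alon's construction (\cref{thm:alon_exp}) one can realize such an expander with degree $d_{H_i}=O(1/\mu_i^2)=O(1/\eta_i^2)$. Since the $\eta_i$ form a geometric sequence decreasing to $\lambda$, the total degree blowup telescopes to $O(1/\lambda^4)$ up to lower-order factors.

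Concretely, I would first establish the operator EML. Writing $\matr A_H = (1-\mu)\matr A_J + \mu\matr E$ with $\opnorm{\matr E}\le 1$ (exactly as in the proof of the strengthened \cref{cor:ta-shma_simp_matrix_any_bias}), and lifting $\matr f$ to the bias operator $\pif$ on $\cH\otimes\C[S]$, one expands $\opnorm{\projh\,\pif\,\AA_H\,\pif\,\lifth}$; the $\matr A_J$ part contributes exactly $\opnorm{\Ex{s}{\matr f(s)}}\le\eta$ (by \cref{claim:parallel_vector_bias}) and the $\matr E$ part contributes at most $\mu$ (using $\opnorm{\pif}\le 1$ and $\opnorm{\AA_H^\perp}\le\mu$), so one gets a clean bound of the form $\eta + O(\mu)$ for the edge-averaged operator. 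In the Cayley setting, identifying $S$ with $V_H$ and taking $\matr f=\rho$ for an arbitrary non-trivial irreducible $\rho$ and using the homomorphism property, the edge set of $H$ becomes a new multiset $S^{(1)}\subseteq G$ with $\bias(S^{(1)})\le \bias(S)+O(\mu)$ and $\abs{S^{(1)}}=d_H\cdot\abs{S}$, simultaneously for all $\rho$.

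Next I would set up the staging. Fix a constant $\gamma\in(0,1)$ (say $\gamma=1/2$), let $\eta_0=\expan_0$, and define $\eta_{i}=\max(\gamma\eta_{i-1},\expan)$; this stabilizes at $\expan$ after $k=O(\log(1/\expan))$ stages. At stage $i$, invoke \cref{thm:alon_exp} on $\abs{S^{(i-1)}}$ vertices with target second eigenvalue $\mu_i = c\,\eta_{i-1}$ for a small absolute constant $c$ (chosen against the hidden constant in the $O(\mu)$ term so that $\bias$ drops to at most $\gamma\eta_{i-1}$), giving degree $d_{H_i}=O(1/\mu_i^2)=O(1/\eta_{i-1}^2)$. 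Then $\abs{S'}=\abs{S^{(k)}}=\abs{S}\prod_{i=1}^k d_{H_i}=O\!\big(\abs{S}\prod_i \eta_{i-1}^{-2}\big)$; since $\eta_{i-1}$ runs through $\expan_0,\gamma\expan_0,\gamma^2\expan_0,\dots$ down to $\expan$, the product $\prod_i \eta_{i-1}^{-2}$ is dominated by its last few terms and equals $O_{\expan_0}(\expan^{-4})$ up to a $\expan^{-o(1)}$ factor coming from the geometric tail (exactly the same $o(1)$ phenomenon as in \cref{theo:exp_walk}; one can in fact black-box the parameter bookkeeping of \cref{lem:simple_amplification} here, or simply note the cruder bound $\expan^{-4-\beta}$ for any constant $\beta$ and bootstrap as in the proof of \cref{theo:exp_walk}). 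Each element of $S'$ is a product of $O(k)=O(\log(1/\expan))$ elements of $S$, and all steps --- the Alon expander construction, enumerating its edges, forming the new generating set --- are deterministic and run in $\poly(\abs{S},1/\expan_0,1/\expan)$ time, giving the running-time claim.

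The main obstacle is controlling the \emph{constant factor} in the operator EML step and making sure it does not compound across the $O(\log(1/\expan))$ stages. A single application loses an absolute constant in front of $\mu$, but since we choose $\mu_i$ a constant factor below the current bias and demand only that the bias shrink by a fixed ratio $\gamma$ per stage, the losses are absorbed into the choice of $c$ and do not accumulate multiplicatively --- this is exactly why staging (rather than one giant expander, which is what pushes \cite{CMR13} to $\lambda^{-11}$) is essential. A secondary technical point is that the operator EML as stated needs $\matr f$ bounded by $1$ in operator norm at \emph{every} stage; this is automatic in the Cayley application because $\rho$ is unitary, so $\opnorm{\rho(s')}=1$ for every $s'\in S^{(i)}$, and it is this feature (unitarity of irreducible representations, Weyl's trick) that keeps the iteration well-defined. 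If one wanted the bound to be genuinely $\expan^{-4-o(1)}$ rather than $\expan^{-4-\beta}$ for fixed $\beta$, one would track the geometric-series exponent precisely as in \cref{sec:simple_amp}; I would just cite that computation.
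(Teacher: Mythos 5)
Your proposed operator mixing lemma is stated incorrectly, and this propagates into an accounting error that sinks the degree bound. The $\matr A_J$ contribution to $\projh\,\pif\,\AA_H\,\pif\,\lifth$ is \emph{not} $\Ex{s}{\matr f(s)}$ but rather $(\Ex{s}{\matr f(s)})^2$: writing it out, $\projh\,\pif\,\AA_J\,\pif\,\lifth\,v = \Ex{y}{\matr f(y)}\Ex{x}{\matr f(x)}v$, since one application of $\pif$ occurs before the averaging over the cloud and one after. So the correct per-step bound is $\bias(S^{(1)}) \le \eta^2 + \mu$, not $\eta + O(\mu)$. This is exactly what \cref{lemma:matrix_eml} gives. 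With your version $\eta + O(\mu)$ the iteration cannot even make the bias decrease (the new bias is strictly larger than $\eta$), so the staging as described would not converge.

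Even if one corrects this and uses $\eta^2 + \mu$, your \emph{staging schedule} is still off and would blow up the degree. You propose a geometric schedule $\eta_i = \gamma^i \eta_0$ reached in $k = \Theta(\log(1/\lambda))$ stages, with stage-$i$ degree $d_{H_i} = O(1/\eta_{i-1}^2)$; the product $\prod_{i} \eta_{i-1}^{-2}$ is then $\gamma^{-\Theta(k^2)} = \lambda^{-\Theta(\log(1/\lambda))}$, which is super-polynomial in $1/\lambda$, not $\lambda^{-4+o(1)}$. A geometric product of inverses is not ``dominated by its last few terms.'' What actually makes the iterated EML achieve $\lambda^{-4+o(1)}$ is that the squaring makes the bias fall \emph{doubly exponentially}: with $\mu_i \approx \eta_i^2$ one gets $\eta_{i+1} \approx 2\eta_i^2$, so only $r = \Theta(\log\log(1/\lambda))$ stages are needed and the degree factors $1/\mu_i^2 \approx 1/\eta_i^4 \approx 2^{4\cdot 2^i}$ form a geometric series \emph{in the exponent}, so the total $\prod_i 1/\mu_i^2$ really is dominated by the final stage and comes out to $\lambda^{-4}$ up to the $9^r = \polylog(1/\lambda) = \lambda^{-o(1)}$ correction. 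This derandomized-squaring schedule (as in Rozenman--Vadhan) is the essential mechanism, and is what the paper uses after an initial constant-factor phase that brings the bias down to $1/4$. Your constant-factor-per-step schedule does not exploit the quadratic gain and so cannot reproduce the bound.

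The surrounding infrastructure in your proposal is fine: using Alon's construction to supply the auxiliary expanders, relying on unitarity of $\rho$ so the norm bound $\opnorm{\matr f}\le 1$ persists across iterations, and the polynomial-time claim. But the two points above --- the quadratic (not linear) dependence on the current bias in the operator EML, and the doubly-exponential (not geometric) staging schedule that it enables --- are both missing, and both are load-bearing.
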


We now show an operator version of the expander mixing lemma for
completeness. As we mentioned above, a similar result was first
derived in~\cite{CMR13}. While a simple generalization of EML, it is
of the same nature of the generalizations of this paper and is of
independent interest.

\begin{restatable}[Matrix EML~\cite{CMR13}]{lemma}{MatrixEML}\label{lemma:matrix_eml}
  Let $X=(V,E)$ be a $\lambda(X)$-spectral expander and let $\matr f \colon V \to \cL(\cH)$. Then,
  \begin{align*}
    \norm{\Ex{(x',x) \in E}{\matr f (x')\, \cdot \matr f (x)\,} - \left(\Ex{x \in V_X}{\matr f (x)\,}\right)^2}_{\textup{op}} ~\le~ \lambda(X) \cdot \max_{x \in V_X} \norm{\matr f (x)\,}_{\textup{op}}^2.
  \end{align*}
\end{restatable}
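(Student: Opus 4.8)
The plan is to mimic the proof of the classical expander mixing lemma, lifting everything to the tensor space $\cH \otimes \C[V_X]$ and applying the variational characterization of the operator norm.

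First I would rewrite the quantity inside the norm in terms of the normalized adjacency matrix $\matr A_X$. Put $n = \abs{V_X}$ and interpret a uniformly random edge $(x',x) \in E$ as a uniformly random vertex $x$ followed by a uniformly random neighbour $x'$; then the ordered pair $(x',x)$ receives probability mass $(\matr A_X)_{x',x}/n$, whereas the product measure underlying $\left(\Ex{x \in V_X}{\matr f(x)\,}\right)^2$ gives it mass $1/n^2 = (\matr J/n)_{x',x}/n$. Hence, writing $\matr N := \matr A_X - \matr J/n$,
\[
  \Ex{(x',x)\in E}{\matr f(x')\,\matr f(x)\,} - \left(\Ex{x \in V_X}{\matr f(x)\,}\right)^2 ~=~ \frac1n \sum_{x,x'\in V_X} \matr N_{x',x}\, \matr f(x')\,\matr f(x)\, \eqqcolon \matr M\mper
\]
Both $\matr A_X$ and $\matr J/n$ fix the all-ones vector $\vec{1}$, so $\matr N\,\vec{1}=0$; on the orthogonal complement of $\vec{1}$ the projection $\matr J/n$ vanishes while $\matr A_X$ has spectral radius at most $\lambda(X)$ by the definition of a $\lambda$-spectral expander. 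Therefore $\opnorm{\matr N}\le\lambda(X)$.

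Next I would bound $\opnorm{\matr M}$ using $\opnorm{\matr M} = \sup_{\norm{u}=\norm{w}=1}\abs{\ip{u}{\matr M w}}$; fix unit vectors $u,w \in \cH$. Put $a_{x'} := \matr f(x')^* u$ and $b_x := \matr f(x)\, w$ in $\cH$, and lift them to $\cH \otimes \C[V_X]$ by $\alpha := \sum_{x'\in V_X} a_{x'}\otimes x'$ and $\beta := \sum_{x\in V_X} b_x\otimes x$. Using $\ip{u}{\matr f(x')\,\matr f(x)\, w} = \ip{a_{x'}}{b_x}$ and $\ip{x'}{\matr N x} = \matr N_{x',x}$, a direct expansion gives the factorisation
\[
  \ip{u}{\matr M w} ~=~ \frac1n \sum_{x,x'\in V_X} \matr N_{x',x}\, \ip{a_{x'}}{b_x} ~=~ \frac1n\, \ip{\alpha}{\bigl(\matr I_\cH \otimes \matr N\bigr)\beta}\mper
\]
Since $\opnorm{\matr I_\cH \otimes \matr N} = \opnorm{\matr N} \le \lambda(X)$, Cauchy--Schwarz bounds the right-hand side in absolute value by $\tfrac{\lambda(X)}{n}\,\norm{\alpha}\,\norm{\beta}$.

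Finally I would estimate the two norms crudely. As $\{x\}_{x\in V_X}$ is an orthonormal basis of $\C[V_X]$, $\norm{\alpha}^2 = \sum_{x'}\norm{\matr f(x')^* u}^2 \le \sum_{x'}\opnorm{\matr f(x')\,}^2 \le n\cdot\max_{x\in V_X}\opnorm{\matr f(x)\,}^2$, using $\norm{u}=1$ and $\opnorm{\matr f(x')^*} = \opnorm{\matr f(x')\,}$; symmetrically $\norm{\beta}^2 \le n\cdot\max_{x\in V_X}\opnorm{\matr f(x)\,}^2$. Combining the three displayed bounds gives $\abs{\ip{u}{\matr M w}} \le \lambda(X)\cdot\max_{x\in V_X}\opnorm{\matr f(x)\,}^2$, and taking the supremum over unit $u,w$ finishes the proof. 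I do not expect a genuine obstacle here: the only points needing (routine) care are the normalisation of the edge-average, the identity $\opnorm{\matr I_\cH\otimes\matr N} = \opnorm{\matr N}$, and the estimate $\opnorm{\matr N}\le\lambda(X)$. The crude bound $\norm{\alpha}\le\sqrt{n}\,\max_x\opnorm{\matr f(x)\,}$ is exactly where one forgoes the ``orthogonal-component'' sharpening available in the scalar mixing lemma, but this matches the bound asserted in the lemma.
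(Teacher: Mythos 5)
Your proof is correct and takes essentially the same approach as the paper: both lift to the tensor space $\cH\otimes\C[V_X]$, factor the difference through $\matr I_\cH\otimes(\matr A_X-\matr J/n)$, and use $\opnorm{\matr A_X-\matr J/n}\le\lambda(X)$. The only cosmetic difference is that the paper packages the calculation as the identity $\Ex{(x',x)\in E}{\matr f(x')\matr f(x)}-(\Ex{x}{\matr f(x)})^2=\projh\,\pif\,(\AA_X-\JJ_X)\,\pif\,\lifth$ and then invokes submultiplicativity of operator norms (using $\opnorm{\projh}\opnorm{\lifth}=1$ and $\opnorm{\pif}=\max_x\opnorm{\matr f(x)}$), whereas you unroll the same bound via the variational characterization and an explicit Cauchy--Schwarz with test vectors $\alpha,\beta$ (which are exactly $\pif^*\projh^*u$ and $n\,\pif\lifth w$).
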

We start with a simple claim describing an operator form the process
of sampling according to the edges of an expander and sampling
according to pairs of vertices. Recall the following maps from
\cref{sec:simple_amp}, $\projh : \vsX \to
\cH $ and $\lifth
: \cH \to \vsX$, 
\[ \projh(w \otimes x) = w,\; \; \lifth (v) = \Ex{x\in V_X}{ v \otimes x }  \mper \]
We will need again that $\opnorm{\projh}
\opnorm{\lifth} = 1$.

\begin{claim}\label{claim:exp_over_edges}
  Let $\matr A_{X}$ be the normalized adjacency matrix of a $d$-regular graph $X$ and let $ \matr J_X$
  be the normalized  $\abs{V_X} \times \abs{V_X} $ all-ones matrix. 
  \begin{align*}
    \Ex{(x, x') \in E}{\matr f (x')\, \cdot \matr f (x)\,}  &~=~  \projh \matr \Pi_z \AA_{X} \matr \Pi_z \lifth.\\
     \Ex{x,x' \in V}{\matr f (x')\, \cdot \matr f (x)\,} &~=~ \projh \matr \Pi_z \JJ_{X} \matr \Pi_z \lifth.
  \end{align*}  
\end{claim}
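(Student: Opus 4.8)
The plan is to prove both identities by applying each side to an arbitrary vector $v \in \cH$ and unwinding, one operator at a time, the definitions of $\lifth$, the generalized bias operator $\pif$, the extended operators $\AA_X$ and $\JJ_X$, and $\projh$. Write $n = \abs{V_X}$ and let $d$ be the degree of $X$. First I would compute $\lifth v = \frac1n \sum_{x \in V_X} v \otimes x$, and then, since $\pif$ fixes the $\C[V_X]$-label of each basis vector and multiplies its $\cH$-component by the corresponding operator, obtain $\pif\, \lifth v = \frac1n \sum_{x \in V_X} \matr f(x)\, v \otimes x$ in $\vsX$.

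For the first identity, $\AA_X = \matr I_\cH \otimes \matr A_X$ acts as the identity on the $\cH$-factor and sends each basis vector $x$ to the average $\frac1d \sum_{x'\,:\,(x,x')\in E} x'$ over the neighbours of $x$, so $\AA_X \pif\, \lifth v = \frac{1}{nd} \sum_{x \in V_X}\sum_{x'\,:\,(x,x')\in E} \matr f(x)\, v \otimes x'$. Applying $\pif$ once more multiplies the $\cH$-component of the term indexed by the edge $(x,x')$ by $\matr f(x')$ on the left, and $\projh$ then discards the $\C[V_X]$-coordinate, leaving $\projh \pif \AA_X \pif\, \lifth v = \frac{1}{nd} \sum_{(x,x')\in E} \matr f(x')\, \matr f(x)\, v$. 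Since $X$ is $d$-regular, sampling a uniform ordered edge of $X$ is the same as picking a uniform vertex $x$ followed by a uniform neighbour $x'$, so this equals $\bigl(\Ex{(x,x')\in E}{\matr f(x')\,\matr f(x)}\bigr)\, v$, which is the first identity.

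For the second identity I would repeat the same three steps with $\AA_X$ replaced by $\JJ_X = \matr I_\cH \otimes \matr J_X$; the only change is that $\matr J_X$ sends each basis vector $x$ to $\frac1n \sum_{x'\in V_X} x'$, the uniform vector over all of $V_X$ rather than over the neighbourhood of $x$. Carrying the computation through in the same way gives $\projh \pif \JJ_X \pif\, \lifth v = \frac{1}{n^2}\sum_{x\in V_X}\sum_{x'\in V_X} \matr f(x')\,\matr f(x)\, v = \bigl(\Ex{x,x'\in V}{\matr f(x')\,\matr f(x)}\bigr)\, v$. I do not anticipate any genuine obstacle: the statement is pure bookkeeping, and the only things that need care are getting the edge-counting normalization right (so that the $1/(nd)$ prefactor really is the uniform distribution on $E$) and tracking the left-to-right order of the operator product, which is what makes $\matr f(x')$ --- the operator attached to the \emph{second} endpoint of the edge --- land on the left of $\matr f(x)$.
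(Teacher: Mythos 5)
Your proof is correct and follows essentially the same route as the paper's: apply $\lifth$, then $\pif$, then $\AA_X$ (or $\JJ_X$), then $\pif$ again, then $\projh$, tracking the $\frac{1}{nd}=\frac{1}{|E|}$ normalization on ordered pairs. You actually write out the tensor factors in the order consistent with the paper's definition of $\lifth$ (the paper's displayed computation has a cosmetic swap of factors), and you treat the $\JJ_X$ case explicitly where the paper just says the argument is identical; neither difference affects the substance.
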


\begin{proof}
The proof is identical for both so we prove just the first one.  For any $w \in \cH$, we have
  \begin{align*}
      \projh \matr \Pi_z \AA_{X} \matr \Pi_z \lifth  w  
     &~=~ \frac{1}{|V_X|} \projh \matr \Pi_z \AA_{X} \matr \Pi_z \left(\sum_{x \in V_X} x \otimes \matr  w \right)  \\
     &~=~
 \frac{1}{|V_X|}    \projh \matr \Pi_z \AA_{X} \left(\sum_{x \in V_X} x \otimes \matr f (x)\, w \right).\\
     &~=~
        \frac{1}{d|V_X|} \projh \matr \Pi_z \left(\sum_{x \in V_X}   \sum_{x' \sim x}x' \otimes \matr f (x)\, w \right).\\
     &~=~
    \frac{1}{|E|} \projh \left(\sum_{x \in V_X}   \sum_{x' \sim x}x' \otimes \matr f (x')\, \matr f (x)\, w \right).\\
    &~=~  \frac{1}{|E|} \sum_{x\sim x'}\matr f (x')\, \matr f (x)\, w  =  \Ex{(x', x) \in E}{\matr f (x')\, \cdot \matr f (x)\,} w.
  \end{align*}
  as claimed.
\end{proof}

We now prove the operator mixing lemma above.

\begin{proof}[Proof of~\cref{lemma:matrix_eml}]
 By~\cref{claim:exp_over_edges}, it is enough to bound the operator norm \begin{align*}
  \opnorm{ \projh \matr \Pi_z  \left( \AA_X - \JJ_{X}  \right)\matr \Pi_z \lifth }  &~\le~ \opnorm{ \projh}  \opnorm {\matr \Pi_z}^2 \opnorm{ \left( \AA_X - \JJ_{X}  \right)} \opnorm{ \lifth } \\
   & ~\le~ \lambda(X) \cdot \opnorm{\matr \Pi_z}^2 = \lambda(X) \cdot \max_{x \in V_X} \opnorm{\matr f (x)\,}^2,
 \end{align*}
  concluding the proof.
\end{proof}

\begin{corollary}[Non-abelian EML]\label{cor:non_abelian_eml}
  Let $X=(V,E)$ be a $\lambda(X)$-spectral expander, $\rho \colon G \to \textup{U}_\cH$ be
  an unitary representation and $(g_v)_{v\in V} \in G^V$. Then
  \begin{align*}
    \norm{\Ex{(u,v) \in E}{\rho(g_u) \cdot \rho(g_v)} - \left(\Ex{u \in V}{\rho(g_u)}\right)^2}_{\textup{op}} ~\le~ \lambda(X).
  \end{align*}
\end{corollary}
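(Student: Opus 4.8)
The plan is to derive this as an immediate specialization of the Matrix Expander Mixing Lemma (\cref{lemma:matrix_eml}). Given the unitary representation $\rho \colon G \to \textup{U}_\cH$ and the tuple $(g_v)_{v \in V}$, I would set $\matr f \colon V \to \cL(\cH)$ to be the operator valued function $\matr f(v) \coloneqq \rho(g_v)$. This is precisely the type of input that \cref{lemma:matrix_eml} accepts, so it remains only to evaluate the relevant quantity appearing in that lemma's bound and then read off the conclusion.

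The only quantity to compute is $\max_{v \in V} \opnorm{\matr f(v)}$, which appears (squared) on the right-hand side of \cref{lemma:matrix_eml}. This equals $1$: the map $\rho$ takes values in $\textup{U}_\cH$, and every unitary operator on $\cH$ is an isometry, hence has operator norm $1$; therefore $\opnorm{\matr f(v)} = \opnorm{\rho(g_v)} = 1$ for every $v$, and in particular $\max_{v \in V}\opnorm{\matr f(v)}^2 = 1$.

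Feeding this into \cref{lemma:matrix_eml} yields
\[
\norm{\Ex{(x',x) \in E}{\matr f(x') \cdot \matr f(x)} - \left(\Ex{x \in V_X}{\matr f(x)}\right)^2}_{\textup{op}} \le \lambda(X) \cdot \max_{x \in V_X} \opnorm{\matr f(x)}^2 = \lambda(X).
\]
Since $\matr f(x') \cdot \matr f(x) = \rho(g_{x'})\rho(g_x)$ and $\Ex{x \in V_X}{\matr f(x)} = \Ex{x \in V}{\rho(g_x)}$, this is exactly the asserted inequality, up to the cosmetic relabeling of the edge endpoints $(x',x)$ as $(u,v)$. There is essentially no obstacle here: the only content beyond \cref{lemma:matrix_eml} is the one-line observation that unitaries have operator norm one. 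One may additionally note that $\Ex{u \in V}{\rho(g_u)}$ need not itself be unitary, so the squared term on the right cannot be simplified further, but this has no bearing on the argument.
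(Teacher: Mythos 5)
Your proof is correct and takes the same route as the paper: both simply specialize \cref{lemma:matrix_eml} to $\matr f(v)=\rho(g_v)$ and use that unitary operators have operator norm $1$ so that the factor $\max_v\opnorm{\matr f(v)}^2$ equals $1$. The paper states this in a single sentence; you spell it out, but the content is identical.
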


\begin{proof}
   Follows immediately from~\cref{lemma:matrix_eml} and the fact that unitary operators have operator norm bounded by $1$.
\end{proof}

We now prove the main result of this section. This iterated
amplification also appears in the derandomized squaring of Rozenman
and Vadhan~\cite{RV05} used to give an alternative proof of the
$\textup{SL}=\textup{L}$ result of Reingold~\cite{R04}.

\begin{proof}[Proof of~\cref{theo:mat_iter_eml}]
  We amplify the expansion in two phases. The first phase amplifies the initial
  expansion of $S$ from $\expan_0$ to a \emph{constant} expansion $\expan_0''= 1/4$.
  This phase increases the size of the generator set by a constant factor.
  
  \noindent \textbf{(First Phase)}
  Let $\epsilon_0, \gamma_0$ be constants such that 
  \[ \epsilon_0 = \expan_0(1-\expan_0)/2 , \;\;\;\; 0 < \gamma_0  \le  (1-\expan_0)/2 < 1 \]  
  Let $X_0 = (V_0,E_0)$ be an explicit expander via~\cref{thm:alon_exp}, 
 with $\lambda(X_0) \le \epsilon_0$,
  degree $O(1/\epsilon_0^2)$ and with the number of vertices $\abs{V_0}=m\abs{S}$ with $m=O(1)$.
  Replicate each element of $S$ $m$ times  and still call the resulting multiset $S$ (observe that expansion remains $\expan_0$).
  For every edge $(u,v) \in E_0$, add $g_ug_v$ to $S_0$.  By~\cref{cor:non_abelian_eml}, 
  \[\lambda(G,S_0) \leq \expan_0^2 + \epsilon_0 \leq \expan_0(1-\gamma_0),\;\;\; \abs{S_0} = 9\abs{S}/\epsilon_0^2 =O(\abs{S}) \]
   Repeat this procedure $\log_{1-\gamma_0}{1/4\lambda_0}$ times which
   ensures that the expansion is $\expan_0'' = 1/4$.  Let $S_0$ be
   this final set.

   \noindent \textbf{(Second Phase)} We will amplify the bias inductively using a stronger (\ie more expanding)
   auxiliary expander graph $X_i$ at each step. As mentioned, this inductive amplification is similar to the derandomized
   squaring of Rozenman and Vadhan~\cite{RV05}.
   We start with $S_0$ and  expansion $\lambda_0'' = 2^{-2}$ as in the first phase.
At each step assume that you have a set $S_{i-1}$ with expansion $\expan_{i-1}$.
Use~\cref{thm:alon_exp}, to construct $X_{i-1}$ to have expansion $\lambda_{i-1}^2$ and degree at most
$9/\lambda_{i-1}^4$. Then,  $S_i$ is obtained via edges of $X_i$ as before and we have $\lambda_i \leq 2\lambda_{i-1}^2$.
It is easy to check that the recurrence yields $\lambda_i \leq 2^{-(2^i)}$ for $i \geq 1$.
Assume for convenience that $\log \lambda = -2^r$. Clearly, then we need to iterate this $r$ times.
In  each iteration, the size grows by a factor of the degree which is $9/\lambda_{i-1}^4$ and
thus the final size of $S'$ can be bounded as,
  \begin{align*}
    \abs{S'} ~=~ \abs{S_0} \prod_{i=0}^{r-1} \frac{9}{\lambda_i^4} ~\le~ \abs{S_0} \cdot 9^r 2^{4 + 4\left ( 2^0 + \cdots + 2^{r-1}\right ) } =  \frac{\abs{S_0}}{\lambda^4} \cdot \left ( \frac{1} {\log \lambda }\right )^{\log 9} \le O_{\lambda_0}\left ( \frac{\abs{S}}{\expan^{4 +o(1)}} \right ) \mper
  \end{align*}
  concluding the proof.
\end{proof}

\section*{Acknowledgement}

We thank Alexander Lubotzky for stimulating and enlightening
discussions in the initial phase of this project.
We are grateful to the anonymous reviewers for their valuable suggestions that improved the quality of this paper. 

\bibliographystyle{alphaurl}
\bibliography{macros,references}

%\appendix
%\input{param_app}

\end{document}